\newtheorem{lemma}{Lemma}
\newtheorem{corollary}{Corollary}
\newtheorem{condition}{Condition}
\newtheorem{theorem}{Theorem}
\newtheorem{proposition}{Proposition}
\newtheorem{assumption}{Assumption}
\newcommand{\minitab}[2][l]{\begin{tabular}{#1}#2\end{tabular}}
\begin{document}
\title{Policy Relevant Treatment Effects with Multidimensional Unobserved Heterogeneity\thanks{We thank Bulat Gafarov, and seminar participants at the University of Amsterdam, University of Bologna, University of Groningen, Tilburg University, COMPIE conference, and EC$^2$ for helpful comments and conversations.}}
\author{Takuya Ura\thanks{T. Ura: Department of Economics, University of California, Davis, One Shields Avenue, Davis, CA 95616. Email: takura@ucdavis.edu}\\ Department of Economics\\ University of California, Davis \and Lina Zhang\thanks{L. Zhang: Amsterdam School of Economics, University of Amsterdam. Roetersstraat 11, 1018 WB Amsterdam, The Netherlands. Email: l.zhang5@uva.nl}\\  University of Amsterdam, and\\Tinbergen Institute}
\date{}
\maketitle

\begin{abstract}
This paper provides a unified framework for bounding policy relevant treatment effects using instrumental variables. In this framework, the treatment selection may depend on multidimensional unobserved heterogeneity. We derive bilinear constraints on the target parameter by extracting information from identifiable estimands. We apply a convex relaxation method to these bilinear constraints and provide conservative yet computationally simple bounds. Our convex-relaxation bounds extend and robustify the bounds by Mogstad, Santos, and Torgovitsky (2018) which require the threshold-crossing structure for the treatment: if this condition holds, our bounds are simplified to theirs for a large class of target parameters; even if it does not, our bounds include the true parameter value whereas theirs may not and are sometimes empty. Linear shape restrictions can be easily incorporated to narrow the proposed bounds. Numerical and simulation results illustrate the informativeness of our convex-relaxation bounds.
%142 words

% Our convex-relaxation bounds robustify the bounds by \cite{mogstad/santos/torgovitsky:2017}:  we do not impose the classical monotonicity condition but, under the classical monotonicity condition, our bounds are simplified to their bounds for a large class of target parameters.

% (Lina's try. 

% To investigate the identifying power of our convex-relaxation bounds, we consider the scenario where shape restriction takes a particular form and the PRTE weight does not depend on the unobserved heterogeneity.  \textcolor{red}{TU: We need to be careful here. binary Y, no shape restriction, weight does not depend on V} 
% In this case, the convex-relaxation bounds are simplified to the bounds by \cite{mogstad/santos/torgovitsky:2017} if the true data generating process satisfies the classical monotonicity condition. In short, our convex-relaxation bounds are more robust. 

% The last half sentence can also be "if it does not hold, our bounds serve as a valid alternative to theirs as their bounds can be empty")

% are more robust than . Moreover, for a large class of target parameters, the convex-relaxation bounds are simplified to the bounds by \cite{mogstad/santos/torgovitsky:2017} with a particular form of shape restriction if the true data generating process satisfies the classical monotonicity condition.

\medskip

\begin{description}
\item[Keywords:]  Multidimensional unobserved heterogeneity, %\textcolor{red}{heterogeneous choice behavior}, 
instrumental variables, shape restriction, convex relaxation, partial identification 
%%%% WE NEED TO DISCUSS MORE ON heterogeneous choice behavior
\end{description}
\end{abstract}

\newpage

\section{Introduction}
Policy relevant treatment effects (PRTEs) have emerged as pivotal target parameters in both empirical and theoretical studies. Instrumental variables (IVs) are commonly used to identify PRTEs with the endogenous treatment choice. In the threshold-crossing model for the treatment with one-dimensional unobserved heterogeneity, \citet{heckman/vytlacil:1999,heckman/vytlacil:2001,heckman/vytlacil:2005} study the identification of various PRTEs as the weighted average of the marginal treatment effects. In the same threshold-crossing model, \citet*{mogstad/santos/torgovitsky:2017} propose a linear programming method to construct a bound on  them.%\footnote{\citet{vytlacil:2002} establishes the equivalence between the threshold-crossing model for the treatment with one-dimensional unobserved heterogeneity and the classical monotonicity condition, so we use them interchangeably throughout this paper.} 

The threshold-crossing assumption for the treatment can be restrictive in many empirical scenarios. As highlighted by \citet[Section 6]{heckman/vytlacil:2005}, it requires ``uniformity'' of treatment selection across all individuals and restricts the heterogeneity of the treatment selection. 
\citet{mogstad2020policy} also demonstrates that this condition becomes particularly strong when multiple IVs provide distinct incentives that lead to heterogeneous treatment selection. In fact, a number of recent papers have proposed specification tests for the threshold-crossing structure together with the instrument validity.\footnote{See \cite{huber/mellace:2015,kitagawa_2015,laffers2017note,mourifie2017testing,kedagni2020generalized,SUN2023105523,carr2023testing} among others.} Although unobserved heterogeneity plays a crucial role in individual choice models and causal inference studies \citep{heckman2001micro,imbens2014instrumental,mogstad2024instrumental}, to our knowledge, it is still an open question whether and how the framework of \citet{mogstad/santos/torgovitsky:2017} can be extended to bound PRTEs with multidimensional unobserved heterogeneity affecting both treatment and outcome.

This paper aims to address this question when we have a binary endogenous treatment with IVs. We offer a systematic approach to bound various PRTEs without imposing any threshold-crossing structure on the treatment selection process. Our method explicitly models the unobserved heterogeneity by a multidimensional random vector $V$, and we assume that treatment and potential outcomes are mean-independent once conditioned on $V$. This conditional mean-independence assumption then implies that both the target PRTEs and identifiable estimands are weighted averages of bilinear functions of the unknown propensity score and marginal treatment response functions.\footnote{\cite{yap2022sensitivity} also derives a similar expression for PRTEs when investigating violations of the threshold-crossing structure.} 
%To achieve partial identification, we extract information from identifiable estimands to impose restrictions on these unknown functions. %These restrictions further limit the possible values of the target parameter, and the collection of these values forms an identified set.  

The unknown propensity score and marginal treatment response functions appear \textit{bilinearly} in both the target parameter and the identifiable estimands. Consequently, the PRTE bounds can only be formulated as solutions to nonconvex optimization problems with nonlinear (in particular, bilinear) constraints. %Consequently, we cannot compute these bounds using the linear programming approach of \citet{mogstad/santos/torgovitsky:2017}. 
%Instead, %our proposed bounds can be formulated as nonconvex optimization problems, which are often computationally demanding. %where the constraints are nonconvex in these unknown functions. Optimization of non-convex functions may be computationally demanding. 
%To overcome this challenge, 
We propose to use the convex relaxation method of \cite{mccormick1976computability}, where we replace each bilinear function in the optimization problems with a single function and characterize its feasible region using McCormick Envelopes. The resulting bounds are obtained through linear problems.

The convex relaxation method has several attractive features. First, it provides conservative yet computationally attractive PRTE bounds, as the bounds can be easily computed using well-established linear programming techniques.
Second, while the convex relaxation produces an outer set of the PRTE sharp identifed set, we demonstrate that this outer set remains informative. Specifically, we provide sufficient conditions under which the outer set leads to no identification power loss. %In particular, in an empirically important case with a binary outcome, we show that the outer set is sharp under relatively weak conditions, and the convex relaxation leads to no information loss.
%\textcolor{red}{In particular, we show that the outer set leads to no identification power loss under relatively weak conditions, and it simplifies to the identified set of \citet{mogstad/santos/torgovitsky:2017} under the threshold-crossing assumption with a single-dimensional unobserved heterogeneity in the treatment equation.}
Third, it offers a unified framework for incorporating linear shape restrictions or combinations of these shape restrictions—some of which may not have been explored in the literature—to further tighten the bounds. This allows researchers to avoid deriving PRTE bounds under different shape restrictions on a case-by-case basis. Lastly, we can conduct inference by directly applying the regularized support function method introduced by \citet{gafarov2019inference}. This method yields closed-form asymptotic Gaussian estimators for PRTE bounds, so that uniformly valid confidence sets can be constructed using the standard normal critical values, which substantially reduces the computational burden. 

It is worthwhile dicussing further the second feature in the above paragraph. For a large class of target parameters, our bounds are simplified to the identified set of \citet{mogstad/santos/torgovitsky:2017} when the treatment selection happens to satisfy the threshold-crossing assumption with one-dimensional unobserved heterogeneity. This result is formally stated in Theorem \ref{coro_equav_MST_CVR} in Section \ref{section:no_id_power_loss} and has a practical takeaway. In practice, the researcher may not know whether the treatment equation satisfies the threshold-crossing assumption or not, and our bounds do not require this knowledge. %If the treatment equation happens to satisfy the threshold-crossing assumption, our bounds coincide with those of \citet{mogstad/santos/torgovitsky:2017}. 
Even if the treatment equation does not satisfy this assumption, our bounds are still valid, whereas the ones of \citet{mogstad/santos/torgovitsky:2017} are not (e.g., their bounds can be empty in some cases). In this sense, we extend \citet{mogstad/santos/torgovitsky:2017} to a more general setup with multidimensional unobserved heterogeneity and provide a more robust alternative to practitioners.

%\textcolor{red}{Our proposed method has several applications. First, it can be used to bound PRTEs whose sharp bounds exist but are not analytical. This is the case if the sharp bounds do not have a closed-form expression and thus are computationally infeasible. Second, it can be applied to bound PRTEs whose sharp bounds exist, are analytical, but difficult to compute and conduct inference. This occurs, for example, when the sharp bounds involve intersections across instrument and covariate values. Moreover, our method provides a unified framework to incorporate shape restrictions or combinations of shape restrictions that have not been addressed in the literature, so that researchers can avoid deriving PRTE bounds under different shape restrictions on a case-by-case basis.}

Numerical analysis and simulation exercises are used to verify the identification results for various PRTEs and asymptotic properties of the inference method. Importantly, numerical results demonstrate that the PRTE bounds of \citet{mogstad/santos/torgovitsky:2017} can be empty if their key assumption (i.e., one-dimensional unobserved heterogeneity in the treatment selection process) is violated. Moreover, the conservativeness introduced by the convex relaxation method does not lead to information loss in some cases. For example, without shape restriction, our PRTE outer set for the average treatment effect (ATE) is the same as its sharp bounds of \citet{manski1990nonparametric}.

The rest of this paper is organized as follows. Section \ref{section:literature_reviews} discusses related literature. Section \ref{section:model and target} introduces our model setup, target parameters, and some commonly used shape restrictions. Section \ref{section:identification_analysis} presents partial identification results using convex relaxation method and investigates conditions under which our convex-relaxation bounds have no identification power loss. Section \ref{section:inference} outlines the inference procedure in \cite{gafarov2019inference} for constructing uniformly valid PRTE confidence sets. Section \ref{section:numerical_simulation} provides numerical and simulation results. Section \ref{section:conclusion} concludes.

\subsection{Related Literature}\label{section:literature_reviews}

% threshold crossing model for binary outcome 
% invalid IV 

A growing body of work has focused on relaxing the threshold-crossing structure with one-dimensional unobserved heterogeneity to accommodate more flexible forms of unobserved heterogeneity. 
This paper differs from all previous studies because our framework simultaneously incorporates the following features. First, we begin with a general model setup without imposing any threshold-crossing restrictions on the treatment selection, and our target parameters include a large class of PRTEs. Second, we explicitly model the unobserved heterogeneity using a multi-dimensional random vector. Our identification analysis relies on the conditional mean-independence assumption between the potential outcomes and the treatment variable, given the unobserved heterogeneity, which results in a bilinear structure for our target PRTE parameters. %Thus, existing linear programming techniques cannot be directly applied to compute the bounds.  
Third and most importantly, to overcome the computational challenge posed by the bilinear structure, we propose computationally attractive and informative bounds obtained by using the convex relaxation method. %(ii) our target parameters include a broad range of PRTEs, and (iii) we investigate a possibly partially identified parameter that is not linear maps of functions of observed variables.  
As a foundational result in the causal effect literature, \citet{vytlacil:2002} establishes the equivalence between the classical monotonicity condition of \citet{imbens/angrist:1994} and the threshold-crossing model with single-dimensional unobserved heterogeneity. In this section of related literature, we cover the relevant papers which relax one of these two conditions.    

This paper is closely related to \citet{yap2022sensitivity}, which also derives a similar bilinear expression for PRTEs when investigating violations of the threshold-crossing structure for the treatment with one-dimensional unobserved heterogeneity. The author uses unobserved compliance types to capture unobserved heterogeneity and a sensitivity parameter to measure the violations of monotonicity. The PRTE bounds are then derived across different values of the sensitivity parameter, and computed by breaking a nonconvex optimization problem into an inner and an outer linear program. %The PRTE bounds are shown to be an interval under the assumption that the optimal value of the inner program is continuous in the hyperparameters. 
This paper differs from \citet{yap2022sensitivity} for several reasons. First, our general model setup does not rely on a sensitivity parameter about the violations of monotonicity. Second, by explicitly modeling the unobserved heterogeneity, we can easily impose shape restrictions on the unknown but interpretable functions of the propensity score and MTRs, in the same spirit as %modeling the treatment selection in 
\citet{heckman/vytlacil:1999,heckman/vytlacil:2005} and \citet{brinch/mogstad/wiswall:2017}. Third, we address the nonconvex problem using convex relaxation, which leads to computationally attractive bounds and easy-to-implement inference.

%\textcolor{red}{NOTEs: Our similarity to Yap: his unobserved compliance types can be thought as our unobserved heterogeneity. Our differences to Yap: (1) we do not rely on a sensitivity parameter to measure the violation of monotonicity and to construct the bounds; SHAPE RESTRICTIONS ON INTERPRETABLE OBJECTS LIKE $m_0,m_1,m_D$ BY EXPLICITLY MODELLING BASED ON $V$.  THAT IS THE SAME SPIRIT OF HECKMAN AND VYTLACIL MODELING THE SELECTION EQUATION. WE CAN USE VARIOUS RESTRICTIONS SUCH AS IN JPE PAPER BY MATT WISWALL (2) Instead, we use convex relaxation to transfer a bilinear program to one linear program problem.  WE WANT TO SAY "VERY POLITELY" OURS IS COMPUTATIONALLY FEASIBLE BUT THE SHARP IDENTIFIED SET MAY NOT BE. }

This paper also relates to studies that explore various relaxations of the threshold-crossing structure for the treatment with one-dimensional unobserved heterogeneity. \citet{klein2010heterogeneous} delves into a selection model with local violations of the treshold-crossing structure. \cite{dinardo/lee:2011} and \citet{small2017instrumental} impose the stochastic monotonicity on the propensity score across all strata of unobserved heterogeneity.  In cases with multiple IVs, \cite*{mogstad2020policy} propose a partial monotonicity condition for each IV separately. In a related vein, \citet{goff2020vector} introduces a vector monotonicity condition that is stronger than partial monotonicity. Other studies that explore various %the local average treatment effect under 
weaker monotonicity conditions include \citet{de2017tolerating}, \citet{sloczynski2021should}, \citet{noack2021sensitivity}, \citet{dahl2023never}, \citet{van2023limited}, and \citet{hoff2023identifying}. 

Another body of work develops less restrictive models capable of accommodating multidimensional unobserved heterogeneity. For instance, \citet{lee2018identifying} examine the treatment assignment determined by a vector of threshold-crossing rules. %, each of which involves a comparison between a scalar unobserved heterogeneity and a threshold. 
\citet{gautier2011triangular} and \citet{gautier2021relaxing} model the unobserved heterogeneity using random coefficient models. \cite{han2024set} propose a generalization of control function approach %when the control function is not estimable but can be bounded, 
and use a non-monotone treatment assignment as a motivating example. \citet{navjeevan2023identification} establish the point identification of target parameters, where they represent the unobserved heterogeneity using potential outcomes and an unobserved type. Notably, such a point identification result requires the target parameter to be an expectation of a function of unobservables that can be further expressed as a linear function of observables, which may not hold for some commonly studied PRTEs.\footnote{One counterexample that does not satisfy this condition is the ATE in a heterogeneous treatment effect model with an endogenous binary treatment. We know that ATE is an expectation of the difference between two potential outcomes. It is generally impossible to express such a difference as a linear map of functions of observables, and the ATE point identification is also not achievable without additional assumptions.}

There are also studies that consider bounding PRTEs while leaving the treatment selection completely unspecified. %However, these method often  applies only to certain target parameters. 
Some studies focus on treatment effects that have analytical bounds \citep[e.g.,][]{manski1990nonparametric,manski1997monotone,manski2000monotone,huber2017sharp}. Others consider treatment effects whose analytical bounds do not exist or are difficult to derive \citep[e.g.,][]{laffers2019bounding,russell2021sharp}.
%This includes, for example, \citet{manski1990nonparametric,manski1997monotone} and \cite{manski2000monotone},  which develop nonparametric sharp bounds for the average treatment effect (ATE) under different identifying assumptions. \cite{huber2017sharp} derive sharp bounds for ATE on the treated and untreated subpopulations. Others consider treatment effects whose analytical bounds do not exist or are difficult to derive. \citet{laffers2019bounding} provides a sensitivity analysis of different identifying assumptions and missing data using the linear programming method. \cite{russell2021sharp} proposes sharp bounds on treatment effects that are continuous functionals of the joint distribution of potential outcomes. %He demonstrates that the bounds can be computed using linear programs under the assumption of finite support of the outcome variable.}

\section{Model Setup and Targeting Parameters}\label{section:model and target}
Consider a canonical program evaluation problem with a binary treatment variable $D\in\{0,1\}$ and a scalar outcome variable $Y\in[y^L,y^U]$.\footnote{When there is no restriction on the support of $Y$, we can set $y^L=-\infty$ and $y^U=\infty$.} Define
$$
Y=DY_1+(1-D)Y_0,
$$
where $Y_1$ and $Y_0$ are the two potential outcomes. We assume there exists a vector of observable variables, $Z=(Z_0,X)\in\mathcal{Z}$, that includes a vector of instruments $Z_0\in\mathcal{Z}_0$ and a vector of covariates $X\in\mathcal{X}$. The instruments can be discrete, continuous, or mixed. We introduce a random vector ${V}\in{\mathcal{V}}$ to represent the multidimensional unobserved heterogeneity that affects both the treatment $D$ and the outcome $Y$, so that the individual treatment effect $Y_1-Y_0$ can be heterogeneous depending on the values of the unobserved heterogeneity $V$. For a generic probability distribution $P$, we let $E_{P}$ be its expectation operator. We use $\mathbb{P}$ to denote the true distribution for $(Y_0,Y_1,D,Z,V)$ and use $\mathbb{E}=E_{\mathbb{P}}$ as its expectation operator. We maintain the following key assumptions throughout the paper. 

\begin{assumption}\label{assumption_DGP} ~ There is a random vector $V$ with the following two properties: 
\begin{itemize}
\item[(i)] $\mathbb{E}[Y_d\mid D,Z,{V}]={\mathbb{E}}[Y_d\mid X,{V}]$ and $\mathbb{E}[{\mathbb{E}}[Y_d\mid X,{V}]^2]<\infty$  for each $d=0,1$.
\item[(ii)] Conditional on $Z$, $V$ is uniformly distributed on $\mathcal{V}=[0,1]^{K}$ for a known constant $K$.
\end{itemize}
\end{assumption}
Assumption \ref{assumption_DGP} (i) implies that the treatment endogeneity arises from the unobserved heterogeneity $V$. Consequently, given $(X,{V})$, the treatment and instruments become mean-independent of the potential outcomes. Condition (ii) assumes that the unobserved heterogeneity $V$ is independent of $Z$, and it also introduces a standard normalization to the distribution of $V$.\footnote{%We can assume a known distribution and then normalize it to a uniform distribution.
If $V$ is independent of $Z_0$ given $X$ and is continuously distributed given $X$, we can always normalize its distribution to a uniform distribution.} %\textcolor{red}{The known dimension of $V$ can be relaxed so that its assumed dimension is either the same as or larger than its true dimension.} %Note that Assumption \ref{assumption_DGP} does not impose any functional form restrictions on the treatment selection. 
%In Section \ref{section:examples_assn1}, we offer some examples of the unobservable heterogeneity and discuss its dimension in each case. 
This assumption is weaker than the one imposed in \cite{mogstad/santos/torgovitsky:2017}.  We allow $V$ to be multidimensional (including $K=1$ as a special case), whereas \cite{mogstad/santos/torgovitsky:2017} assume $V$  to be one-dimensional.\footnote{Note that the researcher may not know the smallest value of $K$ for which Assumption \ref{assumption_DGP} holds, but may know its upper bound. We can use this upper bound as $K$ in Assumption \ref{assumption_DGP}. For example, if Assumption \ref{assumption_DGP} holds with $K=1$, then it holds with $K=2$.} 

Let $F$ denote the cumulative distribution function. The following lemma shows conditions under which Assumption \ref{assumption_DGP} holds with $K=2$. 

\begin{lemma}\label{lemma_Y0Y1K2}
Suppose $F_{Y_0\mid X}$ and $F_{Y_1\mid Y_0,X}$ are invertible and $Y_d$ is $L^2$-bounded for $d=0,1$.
Define $V=(F_{Y_0\mid X}(Y_0),F_{Y_1\mid Y_0,X}(Y_1))$. 
If $Z$ is independent of $(Y_0,Y_1)$ given $X$, then this $V$ satisfies Assumption \ref{assumption_DGP} with $K=2$.
\end{lemma}

Lemma \ref{lemma_Y0Y1K2} shows that if the potential outcomes are independent of $Z$ given $X$, we can always use the two normalized potential outcomes, $(F_{Y_0\mid X}(Y_0),F_{Y_1\mid Y_0,X}(Y_1))$, to represent the multidimensional $V$. Nonetheless, this particular choice of $V$ only provides one example for the unobserved heterogeneity with $K=2$, and there can be other choices of $V$ that also satisfy Assumption \ref{assumption_DGP}. Note that our model setup is general enough to accommodate various treatment selection models in the literature that assume multidimensional unobserved heterogeneity, including those with random coefficients \citep{gautier2011triangular}, local violations to monotonicity \citep{klein2010heterogeneous}, %partial monotonicity with multiple IVs \citep{mogstad2020policy} [note: they assume mean independence], 
stochastic compliance \citep{small2017instrumental},
and the double hurdle model where treatment is assigned if and only if two criteria are satisfied simultaneously \citep{lee2018identifying}. This is because all these models mentioned above assume the same independence condition between potential outcomes and instruments (conditional on covariates, if any) as in our Lemma \ref{lemma_Y0Y1K2}.

Define the marginal treatment response function $m_{\mathbb{P},d}(v,x)$ with $d=0,1$ and the propensity score $m_{\mathbb{P},D}(v,z)$, conditioning on the unobserved heterogeneity $V=v$, as below:
$$
m_{\mathbb{P},d}(v,x)={\mathbb{E}}[Y_d\mid V=v,X=x],\text{ and }~
m_{\mathbb{P},D}(v,z)={\mathbb{E}}[D\mid V=v,Z=z].
$$
These three functions, $(m_{\mathbb{P},0},m_{\mathbb{P},1},m_{\mathbb{P},D})$, are the key ingredients for various policy relevant treatment effect (PRTE) parameters.  We are going to consider the target parameters of the form
\begin{align}\label{def_target_parameters}
&\mathbb{E}\left[\int_{{\mathcal{V}}} m_{\mathbb{P},0}(v,X)\Big[(1-m_{\mathbb{P},D}(v,Z))\omega_{00}^\star(v,Z)+m_{\mathbb{P},D}(v,Z)\omega_{01}^\star(v,Z)\Big]dv\right]
\nonumber
\\&+
\mathbb{E}\left[\int_{{\mathcal{V}}} m_{\mathbb{P},1}(v,X)\Big[(1-m_{\mathbb{P},D}(v,Z))\omega_{10}^\star(v,Z)+m_{\mathbb{P},D}(v,Z)\omega_{11}^\star(v,Z)\Big]dv\right],
\end{align}
where $\omega_{dd'}^\star$ with $d,d'\in\{0,1\}$ denotes a known or identifiable weight.\footnote{%We assume that the dimension of $V$ is known, and 
We consider a general setup that accommodates a large class of PRTE parameters whose known or identifiable weights, $\omega_{dd'}^\star(v,z)$, are allowed to be functions of $v\in\mathcal{V}$, except for Proposition \ref{prop_reduce_dim} and Theorem \ref{coro_equav_MST_CVR}. We assume that the dimension of $V$ is known. For researchers interested only in PRTEs with weights that do not depend on $v$,  however, we can remove the assumption of a known dimension of $V$, as we will discuss in more detail in Proposition \ref{prop_reduce_dim} in later sections.}  Table \ref{table:target_parameters} below collects some important examples of such PRTE parameters and provides expressions for their weights. For example, if our target parameter is the average treatment effect, $\mathbb{E}[Y_1-Y_0]$, then $\omega_{00}^\star(v,z)=\omega_{01}^\star(v,z)=-1$ and $\omega_{10}^\star(v,z)=\omega_{11}^\star(v,z)=1$.

\begin{table}[h!]
\caption{Weights for Commonly Studied Target Parameters}\label{table_weights}
\label{table:target_parameters}
\begin{adjustbox}{width=\linewidth}
\begin{threeparttable}
\begin{tabular}{p{5cm}cccccc}
\hline
\multirow{2}{*}{Target Parameter}&\multirow{2}{*}{Expression}&\multicolumn{4}{c}{Weights}\\
\cline{3-6}\\
&&$\omega_{00}^\star(v,z)$&$\omega_{01}^\star(v,z)$&$\omega_{10}^\star(v,z)$&$\omega_{11}^\star(v,z)$\\
\hline
Average Untreated Outcome&$\mathbb{E}[Y_0]$&1&1&0&0\\[2pt]
Average Treated Outcome&$\mathbb{E}[Y_1]$&0&0&1&1\\[2pt]
Average Treatment Effect (ATE) &$\mathbb{E}[Y_1-Y_0]$&-1&-1&1&1\\[2pt]
ATE given $X\in\mathcal{X}^\star$&$\mathbb{E}[Y_1-Y_0\mid X\in\mathcal{X}^\star]$&$-\frac{1[x\in\mathcal{X}^\star]}{\mathbb{P}(X\in\mathcal{X}^\star)}$&$-\frac{1[x\in\mathcal{X}^\star]}{\mathbb{P}(X\in\mathcal{X}^\star)}$&$\frac{1[x\in\mathcal{X}^\star]}{\mathbb{P}(X\in\mathcal{X}^\star)}$&$\frac{1[x\in\mathcal{X}^\star]}{\mathbb{P}(X\in\mathcal{X}^\star)}$\\[2pt]
Average Treatment Effect on the Treated (ATT) &$\mathbb{E}[Y_1-Y_0\mid D=1]$&0&$-\frac{1}{\mathbb{P}(D=1)}$&0&$\frac{1}{\mathbb{P}(D=1)}$\\[2pt]
Average Treatment Effect on the Untreated (ATU) &$\mathbb{E}[Y_1-Y_0\mid D=0]$&$-\frac{1}{\mathbb{P}(D=0)}$&0&$\frac{1}{\mathbb{P}(D=0)}$&0\\[2pt]
Generalized Local Average Treatment Effect given $V\in\mathcal{V}^\star$& $\mathbb{E}[Y_1-Y_0\mid V\in \mathcal{V}^\star]$&$-\frac{1[v\in \mathcal{V}^\star]}{\mathbb{P}(V\in \mathcal{V}^\star)}$&$-\frac{1[v\in \mathcal{V}^\star]}{\mathbb{P}(V\in \mathcal{V}^\star)}$&$\frac{1[v\in \mathcal{V}^\star]}{\mathbb{P}(V\in \mathcal{V}^\star)}$&$\frac{1[v\in \mathcal{V}^\star]}{\mathbb{P}(V\in \mathcal{V}^\star)}$\\[2pt]
PRTE for a distributional change of $Z$&$\mathbb{E}[Y^*]$ with $Z^\star\sim F_{Z^\star}$ and $Z^\star\in\mathcal{Z}$&$\frac{F_{Z^\star}(dz)}{F_{Z}(dz)}$&0&0&$\frac{F_{Z^\star}(dz)}{F_{Z}(dz)}$\\[5pt]
Average Selection Bias&$\mathbb{E}[Y_0\mid D=1]-\mathbb{E}[Y_0\mid D=0]$&$-\frac{1}{\mathbb{P}(D=0)}$&$\frac{1}{\mathbb{P}(D=1)}$&0&0\\[3pt]
Average Selection on the Gain&$\mathbb{E}[Y_1-Y_0\mid D=1]$&$\frac{1}{\mathbb{P}(D=0)}$&$-\frac{1}{\mathbb{P}(D=1)}$&$-\frac{1}{\mathbb{P}(D=0)}$&$\frac{1}{\mathbb{P}(D=1)}$\\
&$~~-\mathbb{E}[Y_1-Y_0\mid D=0]$\\
\hline
\end{tabular}
Note: See Appendix \ref{appendix_weights} for detailed derivations for the expressions of the weights. The target parameter `PRTE for a distributional change of $Z$' is defined as $\mathbb{E}[Y^*]-\mathbb{E}[Y]$ with $Z^\star\sim F_{Z^\star}$ and $Z^\star\in\mathcal{Z}$, which is the unconditional version of the PRTE in (3.6) of \citet{heckman2007econometric}. Since $\mathbb{E}[Y]$ can be point identified from the observables, we only present the weights for $\mathbb{E}[Y^*]$. 
\end{threeparttable}
\end{adjustbox}
\end{table}

We can see that $(m_{\mathbb{P},0},m_{\mathbb{P},1},m_{\mathbb{P},D})$ appears in bilinear forms in the expression of the target PRTEs given in \eqref{def_target_parameters}. Therefore, for the rest of the paper, define 
\begin{equation}
m_{\mathbb{P},0D}(v,z)=m_{\mathbb{P},0}(v,x)(1-m_{\mathbb{P},D}(v,z))
\mbox{ and }
m_{\mathbb{P},1D}(v,z)=m_{\mathbb{P},1}(v,x)m_{\mathbb{P},D}(v,z).
\end{equation}
Then, we can replace the bilinear functions in \eqref{def_target_parameters} using $m_{\mathbb{P},0D}(v,z)$ and $m_{\mathbb{P},1D}(v,z)$. Denote 
$$m_\mathbb{P}=(m_{\mathbb{P},0},m_{\mathbb{P},1},m_{\mathbb{P},D},m_{\mathbb{P},0D},m_{\mathbb{P},1D})'\in\mathcal{M},$$
where $\mathcal{M}\subseteq (\mathbf{L}^2(\mathcal{V}\times\mathcal{X},[y^L,y^U]))^2\times\mathbf{L}^2(\mathcal{V}\times\mathcal{Z},[0,1])\times(\mathbf{L}^2(\mathcal{V}\times\mathcal{Z},[y^L,y^U]))^2$ is the parameter space for $m_\mathbb{P}$.\footnote{We define $\mathbf{L}^2(\mathcal{A},\mathcal{B})$ to be the set of all the $L^2$ functions from $\mathcal{A}$ to $\mathcal{B}$.}  We  define $\Gamma^\star$ to be the linear map from $\mathcal{M}$ to $\mathbb{R}$ with  
\begin{align}\label{def_target_parameters_linear}
\Gamma^\star(m)&=\mathbb{E}\left[\int_{{\mathcal{V}}} \Big[m_{0D}(v,Z)\omega_{00}^\star(v,Z)+(m_0(v,X)-m_{0D}(v,Z))\omega_{01}^\star(v,Z)\Big]dv\right]
\nonumber\\&\quad+
\mathbb{E}\left[\int_{{\mathcal{V}}} \Big[(m_1(v,X)-m_{1D}(v,Z))\omega_{10}^\star(v,Z)+m_{1D}(v,Z)\omega_{11}^\star(v,Z)\Big]dv\right]
\end{align}
for a generic function $m=(m_{0},m_{1},m_{D},m_{0D},m_{1D})'\in\mathcal{M}$.
Given $\Gamma^\star$, we can express the target parameter in \eqref{def_target_parameters} as $\Gamma^\star(m_{\mathbb{P}})$, which is a linear functional of $m_{\mathbb{P}}$. The functional $\Gamma^\star$ is identifiable from the distribution of $Z$ and thus is known to the researchers. In this sense, our partial identification analysis focuses on how we can bound the unknown function  $m_{\mathbb{P}}$. Note that under the threshold-crossing structure for the treatment with one-dimensional unobserved heterogeneity, \citet{heckman/vytlacil:2005,heckman2007econometric}  express the target parameters in Table \ref{table:target_parameters} as linear functionals of the unknown functions $m_{\mathbb{P},0}$ and $m_{\mathbb{P},1}$ only. This is because $m_{\mathbb{P},D}$ is identified by the propensity score $\mathbb{P}(D=1|Z)$, and it is also the main difference between our model setup and those in the existing literature studying PRTEs under the Heckman-Vytlacil  framework.

\subsection{Shape Restrictions}\label{section:shape_restriction}
Shape restrictions on the unknown function $m_{\mathbb{P}}$ have often been employed in the literature to tighten the PRTE bounds.
In this section, we discuss a few commonly used shape restrictions.

\subsubsection{Restrictions on $m_{\mathbb{P},D}$}

We begin by examining restrictions on the treatment variable and their implied restrictions on $m_{\mathbb{P},D}$.  

% \begin{condition}[{Stochastic Monotonicity}]\label{Stochastic_Monotonicity}
% Recall $Z=(Z_0,X)$. 
% For any $z_0,z_0'\in\mathcal{Z}_0$, any $x\in\mathcal{X}$ and any $v\in{\mathcal{V}}$, if $\mathbb{P}(D=1\mid Z_0=z_0,X=x)\geq \mathbb{P}(D=1\mid Z_0=z_0',X=x)$, then $\mathbb{P}(D=1\mid V=v,Z_0=z_0,X=x)\geq \mathbb{P}(D=1\mid V=v,Z_0=z_0',X=x)$.\footnote{When $Z_0$ is multidimensional, we can let $z_0\geq z_0'$ mean that each element of $z_0$ is greater than or equal to that of $z_0'$.}
% \end{condition}

\begin{condition}[{Stochastic Monotonicity}]\label{Stochastic_Monotonicity}
Recall $Z=(Z_0,X)$ and suppose $Z_0$ is one-dimensional. 
For any $z_0,z_0'\in\mathcal{Z}_0$ and $(x,v)\in\mathcal{X}\times\mathcal{V}$, if $z_0\geq z_0'$, then $\mathbb{P}(D=1\mid V=v,Z_0=z_0,X=x)\geq \mathbb{P}(D=1\mid V=v,Z_0=z_0',X=x)$.\footnote{When $Z_0$ is multidimensional, we can extend this condition by interpreting $z_0\geq z_0'$, e.g., as each element of $z_0$ being greater than or equal to that of $z_0'$.}
\end{condition}

Condition \ref{Stochastic_Monotonicity}, introduced by \cite{dinardo/lee:2011}, is employed by \cite{small2017instrumental} to study the identification of the Wald estimand. It imposes a monotone restriction on the propensity score at all strata of the unobserved heterogeneity $V$, requiring
$m_{\mathbb{P},D}(v,z)=\mathbb{P}(D=1\mid V,Z_0=z_0,X)$ to be weakly increasing in $z_0$.

\begin{condition}[{Classical Monotonicity}]\label{condition:determinisMonoo}
Recall $Z=(Z_0,X)$ and suppose $Z_0$ is one-dimensional. 
Consider
$
D=D_Z,
$
where $Z_0$ and the potential treatments $\{D_z\}_{z\in\mathcal{Z}}$ are independent given $X$.
For any $z_0,z_0'\in\mathcal{Z}_0$ and $x\in\mathcal{X}$, if $z_0\geq z_0'$, then
$
D_{(z_0,x)}\geq D_{(z'_0,x)}\mbox{ almost surely}.
$
\end{condition}

Condition \ref{condition:determinisMonoo}, introduced by \citet{imbens/angrist:1994}, is a widely adopted monotonicity condition and is stronger than stochastic monotonicity.\footnote{In \citet{dinardo/lee:2011}, they refer to Condition \ref{condition:determinisMonoo} as ``deterministic monotonicity'' because it asserts that the treatment status of an individual with a given value of the unobserved heterogeneity $V$ is a deterministic function of $Z$.} It requires all individuals to respond to the same shift in the instrument value in the same direction. Specifically, in the case of a binary IV, the classical monotonicity rules out ``defiers'' and is the key to point identify the local average treatment effect. \citet{vytlacil:2002} shows that imposing the classical monotonicity is equivalent to assuming that the treatment variable is determined by a threshold-crossing model with one-dimensional unobserved heterogeneity. In other words, under Condition \ref{condition:determinisMonoo}, there exists a one-dimensional random variable $V$ such that
$D=1\{\mathbb{P}(D=1\mid Z)\geq V\}\mbox{ almost surely}$, where $V$ given $Z$ is uniformly distributed over $[0,1]$, 
%\begin{lemma}[\citealp{vytlacil:2002}]\label{lemma_determinisMonoo} Under Condition \ref{condition:determinisMonoo}, there exists a one-dimensional random variable $V$ such that
%$$
%D=1\{\mathbb{P}(D=1\mid Z)\geq V\}\mbox{ almost surely},
%$$ and the conditional distribution $V$ given $Z$ is the uniform distribution over $[0,1]$. 
%\end{lemma}
and we can set $m_{\mathbb{P},D}(V,Z)=1\{\mathbb{P}(D=1\mid Z)\geq V\}$.

\begin{condition}[Random Coefficient Model with Identifiable Distribution]\label{condition:random_coeff}
    Suppose there are no covariates $X$ for simplicity, and $Z=(Z_1,...,Z_K)$ contains $K$ instruments. Consider a random coefficient model for the treatment variable 
$$D=1[\pi_1Z_1+...+\pi_KZ_K\geq 0],$$
where $\{\pi_k\}_{k=1}^K$ represent continuous random coefficients that are independent of $Z$. % we assume continuity for simplicity
Define $V=(V_1,V_2,...,V_K)=[F_{\pi_1}(\pi_1),F_{\pi_2|\pi_1}(\pi_2),...,F_{\pi_K|\pi_1,...,\pi_{K-1}}(\pi_K)]$. \citet{gautier2013nonparametric} provides sufficient conditions under which the joint distribution function $F_{\pi}(\cdot)$ is point identified. Under those conditions, we can set $$m_{\mathbb{P},D}(V,Z)=1[\pi_1Z_1+...+\pi_KZ_K\geq 0],$$
where $\pi_1=F^{-1}_{\pi_1}(V_1)$ and $\pi_k=F^{-1}_{\pi_k|\pi_1,...,\pi_{k-1}}(V_k)$ for all $k=2,...,K$. The discussions above also hold when conditioning on $X$.
\end{condition}

\begin{condition}[Double hurdle model]\label{condition:double_hurdle} Suppose there are no
covariates $X$ for simplicity. Consider a double hurdle model where a treatment is assigned if and only if two criteria are satisfied simultaneously:
$$
D=1[U_1<Q_1(Z)\text{ and }U_2<Q_2(Z)],
$$
where $U_1,U_2\in\mathbb{R}$ stand for some continuous unobserved characteristics that are independent of $Z$, and functions $Q_1$ and $Q_2$ are a priori unknown. % we assume continuity for simplicity
Define $V=(V_1,V_2)=(F_{U_1}(U_1),F_{U_2|U_1}(U_2))$. Theorem 4.2 of \citet{lee2018identifying} provides sufficient conditions under which the joint distribution function $F_{U_1,U_2}(\cdot)$, $Q_1$, and $Q_2$ are all point identified. Under those conditions, we can set
$$m_{\mathbb{P},D}(V,Z)=1[U_1<Q_1(Z)\text{ and }U_2<Q_2(Z)],$$
where $U_1=F^{-1}_{U_1}(V_1)$ and $U_2=F^{-1}_{U_2\mid U_1}(V_2)$. The discussions above also hold when conditioning on $X$.
\end{condition}

\subsubsection{Restrictions on $(m_{\mathbb{P},0},m_{\mathbb{P},1})$}

%\textcolor{red}{TAKUYA: THE WRITING FLOW IS NOT SMOOTH. Lina: how about the revised sentences below.}

Next, we consider restrictions on the marginal treatment response functions $(m_{\mathbb{P},0},m_{\mathbb{P},1})$ implied by variants of the monotone treatment response condition. 
\begin{condition}[{Monotone Treatment Response}]\label{condition:mono_resp}
With probability one, treatment responses are weakly increasing $Y_1\geq Y_0$.
\end{condition}
Condition \ref{condition:mono_resp} is the monotone treatment response condition introduced by \citet{manski1997monotone} and \citet{manski2000monotone}. It implies 
$\mathbb{E}[Y_0|Z=z] \leq \mathbb{E}[Y|Z=z] \leq \mathbb{E}[Y_1|Z=z] \text{ for all }z=(z_0,x)\in\mathcal{Z}, $
which further implies restrictions on $(m_{\mathbb{P},0},m_{\mathbb{P},1})$ 
\begin{equation}\label{MTR_implied_restrictions}\begin{aligned}
   \mathbb{E}[m_{\mathbb{P},0}(V,x)] \leq \inf_{z_0\in\mathcal{Z}_0}&\mathbb{E}[Y|X=x,Z_0=z_0], \text{ and }\\
   &\sup_{z_0\in\mathcal{Z}_0}\mathbb{E}[Y|X=x,Z_0=z_0]\leq\mathbb{E}[m_{\mathbb{P},1}(V,x)].
\end{aligned}
\end{equation}
\begin{condition}[{Monotone Treatment Response at Mean for given $(V,X)$}]\label{condition:mono_resp_meanV}
The mean treatment responses are weakly increasing given $(V,X)=(v,x)$ for any $(v,x)\in{\mathcal{V}}\times\mathcal{X}$, i.e., $\mathbb{E}[Y_1\mid V=v,X=x]\geq \mathbb{E}[Y_0\mid V=v,X=x]$.
\end{condition}

\begin{condition}[{Monotone Treatment Response at Mean for given $X$}]\label{condition:mono_resp_relax}The mean treatment responses are weakly increasing given $X=x$ for any $x\in\mathcal{X}$, i.e.,
$\mathbb{E}[Y_1\mid X=x]\geq \mathbb{E}[Y_0\mid X=x]$.
\end{condition}

 Conditions \ref{condition:mono_resp_meanV} and \ref{condition:mono_resp_relax} only require monotone treatment response at the mean value and are weaker versions of Condition \ref{condition:mono_resp}. Specifically, Condition \ref{condition:mono_resp_meanV} implies restrictions on $(m_{\mathbb{P},0},m_{\mathbb{P},1})$ that $$m_{\mathbb{P},1}(v,x)-m_{\mathbb{P},0}(v,x)=\mathbb{E}[Y_1-Y_0\mid V=v,X=x]\geq 0,$$ 
 allowing for heterogeneous treatment responses across individuals for some random reasons.

For Condition \ref{condition:mono_resp_relax}, since $\mathbb{E}[Y_1-Y_0\mid X=x]=\mathbb{E}\{\mathbb{E}[Y_1-Y_0\mid V,X=x]\mid X=x\}=\mathbb{E}[m_{\mathbb{P},1}(V,x)-m_{\mathbb{P},0}(V,x)]$, it imposes the sign restriction  on $(m_{\mathbb{P},0},m_{\mathbb{P},1})$ that $$\int_{{\mathcal{V}}}(m_{\mathbb{P},1}(v,x)-m_{\mathbb{P},0}(v,x))dv\geq0,$$
allowing for heterogeneous treatment responses across individuals with different values of $V$ for any given $X=x$. A similar condition is explored by \citet{carlos2019average} to study the ATE bounds.

\subsubsection{Restrictions on $(m_{\mathbb{P},0},m_{\mathbb{P},1},m_{\mathbb{P},0D},m_{\mathbb{P},1D})$}
In this section, we discuss shape restrictions which impose joint restrictions on $(m_{\mathbb{P},0},m_{\mathbb{P},1},m_{\mathbb{P},0D},m_{\mathbb{P},1D})$. \citet{manski2000monotone} consider a monotone treatment selection
assumption, which states that the mean value of the potential outcome is weakly increasing in
the treatment. 
\begin{condition}[{Monotone Treatment Selection}]\label{condition:mono_select}For $d=0,1$ and any $x\in\mathcal{X}$,
$\mathbb{E}[Y_d\mid X=x,D=1]\geq \mathbb{E}[Y_d\mid X=x,D=0]$.
\end{condition}
Given Assumption \ref{assumption_DGP}, Condition \ref{condition:mono_select} implies the following two inequalities\footnote{See detailed proofs in Appendix \ref{appendix_proofs}.}
\begin{equation}\label{MTS_constraints} 
\begin{aligned}
\mathbb{E}[Y\mid X=x,D=1]\mathbb{P}(D=0\mid X=x)~\geq&~\mathbb{E}\Big[\int_{{\mathcal{V}}}(m_{\mathbb{P},1}(v,x)-m_{\mathbb{P},1D}(v,Z))dv~\Big|~ X=x\Big],\\
\mathbb{E}[Y\mid X=x,D=0]\mathbb{P}(D=1\mid X=x)~\leq&~\mathbb{E}\Big[\int_{{\mathcal{V}}}(m_{\mathbb{P},0}(v,x)-m_{\mathbb{P},0D}(v,Z))dv~\Big|~ X=x\Big].
\end{aligned}
\end{equation} 
Another shape restriction is the monotone selection on the gains, which states that the individuals who self-select into the treatment would gain more from being treated compared to those who remain untreated.
\begin{condition}[{Monotone Selection Gain}]\label{condition:mono_select_gain}
For any $x\in\mathcal{X}$, $\mathbb{E}[Y_1-Y_0\mid X=x,D=1]\geq \mathbb{E}[Y_1-Y_0\mid X=x,D=0]$.
\end{condition}
Under Assumption \ref{assumption_DGP}, similar arguments used to show the inequalities in \eqref{MTS_constraints} can be applied to show that Condition \ref{condition:mono_select_gain} implies the following restriction:
\begin{equation}\label{MSG_constraints}
\begin{aligned}
&\mathbb{E}[Y\mid X=x,D=1]+\mathbb{E}[Y\mid X=x,D=0]\\
&~~~\geq~\mathbb{E}\Big[\int_{{\mathcal{V}}}(m_{\mathbb{P},1}(v,x)-m_{\mathbb{P},1D}(v,Z))dv\Big|~ X=x\Big]/\mathbb{P}(D=0\mid X=x)\\
&~~~~~~+\mathbb{E}\Big[\int_{{\mathcal{V}}}(m_{\mathbb{P},0}(v,x)-m_{\mathbb{P},0D}(v,Z))dv\Big|~ X=x\Big]/\mathbb{P}(D=1\mid X=x).
\end{aligned}
\end{equation}

\section{Identification Analysis}\label{section:identification_analysis}

In this section, we investigate the identification of the target PRTE $\Gamma^\star(m_{\mathbb{P}})$. Recall that the functional
$\Gamma^\star$ is identifiable, and we only need to bound the unknown function $m_{\mathbb{P}}$. To achieve this goal, we first derive constraints that characterize the identified set for $m_{\mathbb{P}}$. Next, we apply a convex relaxation method to obtain computationally feasible PRTE bounds, which we will refer to as the convex-relaxation bounds. Then, we provide sufficient conditions under which the convex-relaxation bounds result in no identification power loss.

\subsection{Identified set for $m_{\mathbb{P}}$}
We begin by deriving a set of conditional moment equalities that are satisfied by $m_{\mathbb{P}}$. Under Assumption \ref{assumption_DGP}, by the law of iterated expectation and the conditional mean independence between $Y_d$ and $D$ given $(V,Z)$, we have 
\begin{align*}
\mathbb{E}[YD\mid Z]
=&
\mathbb{E}[\mathbb{E}[Y_1\mid D=1,Z,V]\mathbb{E}[D\mid Z,V]\mid Z]
\\
=&
\mathbb{E}[\mathbb{E}[Y_1\mid X,V]\mathbb{E}[D\mid Z,V]\mid Z]
\\
=&
\mathbb{E}[m_{\mathbb{P},1}(V,X)m_{\mathbb{P},D}(V,Z)\mid Z]
\\
=&
\mathbb{E}[m_{\mathbb{P},1D}(V,Z)\mid Z].
\end{align*}
Similarly, we can show that $\mathbb{E}[Y(1-D)\mid Z]=\mathbb{E}[m_{\mathbb{P},0D}(V,Z)\mid Z]$ and $\mathbb{E}[D\mid Z]=\mathbb{E}[m_{\mathbb{P},D}(V,Z)\mid Z]$. For any generic function $m=(m_{0},m_{1},m_{D},m_{0D},m_{1D})'\in\mathcal{M}$, let us introduce the following restrictions on $m$: 
\begin{align}
\label{sharp_1}
\mathbb{E}[YD\mid Z]=&\int_{{\mathcal{V}}}m_{1D}(v,Z)dv,\\
\label{sharp_2}
\mathbb{E}[Y(1-D)\mid Z]=&
\int_{{\mathcal{V}}}m_{0D}(v,Z)dv,\\
\label{sharp_3}
\mathbb{E}[D\mid Z]=&\int_{{\mathcal{V}}}m_{D}(v,Z)dv.
\end{align}
%Formally, we can state the following theorem. 

\begin{theorem}\label{eq:theorem:sharpness}Under Assumption \ref{assumption_DGP},
Eqs. \eqref{sharp_1}\text{ to }\eqref{sharp_3} hold for  $m_{\mathbb{P}}$.
\end{theorem}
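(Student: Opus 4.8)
The plan is to verify each of the three equations by combining the definition of the observed outcome with the conditional mean independence in Assumption \ref{assumption_DGP}(i) and the uniform normalization in Assumption \ref{assumption_DGP}(ii). The starting observation is that, because $D$ is binary, $YD=Y_1D$ and $Y(1-D)=Y_0(1-D)$ hold identically; this removes the unobserved counterfactual from each product and reduces the task to computing conditional expectations of $Y_1D$, $Y_0(1-D)$, and $D$ given $Z$.

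First I would establish Eq. \eqref{sharp_1}. Applying the law of iterated expectations with the conditioning set $(Z,V)$, and then refining to $(D,Z,V)$ inside, gives
\begin{align*}
\mathbb{E}[YD\mid Z]
&=\mathbb{E}\big[\mathbb{E}[Y_1D\mid Z,V]\mid Z\big]\\
&=\mathbb{E}\big[\mathbb{E}[\,D\,\mathbb{E}[Y_1\mid D,Z,V]\mid Z,V]\mid Z\big].
\end{align*}
Since $X$ is a component of $Z$, it is measurable with respect to $(Z,V)$, so Assumption \ref{assumption_DGP}(i) replaces $\mathbb{E}[Y_1\mid D,Z,V]$ by $m_{\mathbb{P},1}(V,X)$, a quantity that does not depend on $D$. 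Pulling it out of the inner expectation and using $\mathbb{E}[D\mid Z,V]=m_{\mathbb{P},D}(V,Z)$ yields $\mathbb{E}[YD\mid Z]=\mathbb{E}[m_{\mathbb{P},1}(V,X)\,m_{\mathbb{P},D}(V,Z)\mid Z]$. The final step converts this conditional expectation into an integral: by Assumption \ref{assumption_DGP}(ii), $V$ given $Z$ has the uniform density on $[0,1]^K$, so $\mathbb{E}[g(V,Z)\mid Z]=\int_{\mathcal{V}}g(v,Z)\,dv$ for any integrable $g$, which delivers Eq. \eqref{sharp_1} exactly.

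Equation \eqref{sharp_2} follows from the identical chain of steps after replacing $Y_1D$ by $Y_0(1-D)$ and using $\mathbb{E}[1-D\mid Z,V]=1-m_{\mathbb{P},D}(V,Z)$, while Eq. \eqref{sharp_3} is the simplest: it needs only $\mathbb{E}[D\mid Z]=\mathbb{E}[\mathbb{E}[D\mid Z,V]\mid Z]$ together with the uniform normalization, and does not invoke part (i) at all.

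I do not anticipate a substantive obstacle, since the result is essentially a bookkeeping exercise in iterated conditioning; the two points that require care are (a) conditioning on $(D,Z,V)$ rather than $(Z,V)$ before invoking Assumption \ref{assumption_DGP}(i), so that the mean-independence statement applies and the factor $D$ (respectively $1-D$) can legitimately be pulled out, and (b) justifying the passage from the conditional expectation over $V$ to the Lebesgue integral over $\mathcal{V}$, which relies on the uniform conditional law of $V$ given $Z$ from part (ii). The finite-second-moment condition in part (i) guarantees all the expectations are well defined.
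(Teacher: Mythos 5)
Your proposal is correct and follows essentially the same route as the paper's proof: the law of iterated expectations conditioning on $(V,Z)$, Assumption \ref{assumption_DGP}(i) to replace $\mathbb{E}[Y_1\mid D,Z,V]$ by $m_{\mathbb{P},1}(V,X)$, and Assumption \ref{assumption_DGP}(ii) to convert the conditional expectation over $V$ into a Lebesgue integral. The only cosmetic difference is that the paper factorizes $\mathbb{E}[YD\mid V,Z]$ as $\mathbb{E}[Y\mid D=1,V,Z]\,\mathbb{E}[D\mid V,Z]$ by conditioning on the event $\{D=1\}$, whereas you keep $D$ inside the inner expectation and pull out the $D$-free conditional mean; these are interchangeable bookkeeping steps.
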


The conditional moment conditions in Eqs. \eqref{sharp_1}\text{ to }\eqref{sharp_3} may not be appealing for estimation and inference when $Z$ is continuously distributed. Following \cite{mogstad/santos/torgovitsky:2017}, we use IV-like functions to transform the conditional moment conditions into unconditional ones. 
Let $\mathcal{S}$ denote a (user-specified) subset of $\mathbf{L}^2(\mathcal{Z},\mathbb{R})$.
Then, we use an IV-like function $s\in\mathcal{S}$ as a weighting function to integrate the conditional moments in Eqs. \eqref{sharp_1}\text{ to }\eqref{sharp_3}.
Namely, Theorem \ref{eq:theorem:sharpness} implies that $m_{\mathbb{P}}$ satisfies the following unconditional moment conditions for every $s\in\mathcal{S}$:\footnote{The IV-like function $s\in\mathcal{S}$ plays the same role as an instrumental variable. For example, in a simple case where $Z_0$ is one-dimensional, if we set $s(z)=\frac{z_0-\mathbb{E}[Z_0]}{\mathbb{C}ov(D,Z_0)}$, then Eqs. \eqref{intro_contraint_Gamma_s0} and \eqref{intro_contraint_Gamma_s1} together imply that $\mathbb{E}\Big[\int_{{\mathcal{V}}}m_{\mathbb{P},1D}(v,Z)s(Z)dv\Big]+\mathbb{E}\Big[\int_{{\mathcal{V}}}m_{\mathbb{P},0D}(v,Z)s(Z)dv\Big]=\frac{\mathbb{C}ov(Y,Z_0)}{\mathbb{C}ov(D,Z_0)}$, where the right-hand side is the identifiable IV estimand.} 
\begin{align}
\label{intro_contraint_Gamma_s0}
\mathbb{E}[YDs(Z)]=&\mathbb{E}\Big[\int_{{\mathcal{V}}}m_{1D}(v,Z)s(Z)dv\Big],\\
\label{intro_contraint_Gamma_s1}
\mathbb{E}[Y(1-D)s(Z)]=&\mathbb{E}\Big[\int_{{\mathcal{V}}}m_{0D}(v,Z)s(Z)dv\Big],\\
\label{intro_contraint_Lambda_s}
\mathbb{E}[Ds(Z)]=&\mathbb{E}\Big[\int_{{\mathcal{V}}}m_{D}(v,Z)s(Z)dv\Big].
\end{align}

\begin{theorem}\label{theorem_sharp}
Under Assumption \ref{assumption_DGP}, Eqs. \eqref{intro_contraint_Gamma_s0} to \eqref{intro_contraint_Lambda_s} hold for $m_{\mathbb{P}}$ and $\forall s\in\mathcal{S}\subseteq\mathbf{L}^2(\mathcal{Z},\mathbb{R})$.
\end{theorem}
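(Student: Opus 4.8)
The plan is to derive the unconditional moment conditions in Eq. \eqref{intro_contraint_Gamma_s0} to \eqref{intro_contraint_Lambda_s} directly from the conditional ones already established in Theorem \ref{eq:theorem:sharpness}, treating each $s\in\mathbf{L}^2(\mathcal{Z},\mathbb{R})$ as a $Z$-measurable weight and invoking the law of iterated expectations. Since Theorem \ref{eq:theorem:sharpness} does the substantive work of connecting the observed conditional moments to the bilinear expressions in $m_{\mathbb{P}}$, Theorem \ref{theorem_sharp} is essentially a repackaging step.

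First I would fix an arbitrary $s\in\mathbf{L}^2(\mathcal{Z},\mathbb{R})$ and start from the first conditional equality of Theorem \ref{eq:theorem:sharpness}, $\mathbb{E}[YD\mid Z]=\int_{\mathcal{V}}m_{\mathbb{P},1}(v,X)m_{\mathbb{P},D}(v,Z)dv$. Both sides are functions of $Z$ alone: recall $X$ is a component of $Z=(Z_0,X)$, so $m_{\mathbb{P},1}(v,X)$ is $\sigma(Z)$-measurable and the $v$-integral on the right is a genuine function of $Z$. Multiplying both sides by $s(Z)$ and taking expectations, the left side becomes $\mathbb{E}[s(Z)\mathbb{E}[YD\mid Z]]=\mathbb{E}[YDs(Z)]$ by the tower property, while on the right $s(Z)$ can be pulled inside the $v$-integral (it does not depend on $v$), giving $\mathbb{E}[\int_{\mathcal{V}}m_{\mathbb{P},1}(v,X)m_{\mathbb{P},D}(v,Z)s(Z)dv]$. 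This is exactly Eq. \eqref{intro_contraint_Gamma_s0}, and the identical two-line argument applied to the second and third equalities of Theorem \ref{eq:theorem:sharpness} yields Eq. \eqref{intro_contraint_Gamma_s1} and Eq. \eqref{intro_contraint_Lambda_s}.

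The only point requiring care is integrability, so that the tower property and the interchange of expectation with the $v$-integration are justified. Because $Y\in[y_L,y_U]$ and $D\in\{0,1\}$ are bounded, each left-hand integrand ($YDs(Z)$, $Y(1-D)s(Z)$, $Ds(Z)$) is dominated by a constant multiple of $|s(Z)|$, and Cauchy--Schwarz gives $\mathbb{E}[|s(Z)|]\le(\mathbb{E}[s(Z)^2])^{1/2}<\infty$ since $s\in\mathbf{L}^2(\mathcal{Z},\mathbb{R})$; hence all products are integrable and the defining property of conditional expectation applies. On the right-hand sides, $m_{\mathbb{P},d}$ takes values in $[y_L,y_U]$ and $m_{\mathbb{P},D}$ in $[0,1]$, so the $v$-integrands are again bounded by a constant times $|s(Z)|$ over the unit-measure set $\mathcal{V}=[0,1]^K$, which justifies Fubini when moving $s(Z)$ through the integral.

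I do not anticipate a genuine obstacle: the content is contained in Theorem \ref{eq:theorem:sharpness}, and the present statement merely integrates those conditional equalities against $L^2$ weights. The mildly delicate part is the bookkeeping when $Y$ is allowed to be unbounded ($y_L=-\infty$ or $y_U=\infty$). In that case I would instead lean on the second-moment condition $\mathbb{E}[\mathbb{E}[Y_d\mid X,V]^2]<\infty$ from Assumption \ref{assumption_DGP}(i), which controls the right-hand-side integrands in $L^2$ and, together with $s\in\mathbf{L}^2(\mathcal{Z},\mathbb{R})$, keeps all the relevant products integrable so that the equalities persist.
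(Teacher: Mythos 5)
Your proposal is correct and follows essentially the same route as the paper: multiply the conditional moment equalities from Theorem \ref{eq:theorem:sharpness} by the weight $s(Z)$ and apply the law of iterated expectations to obtain the unconditional conditions \eqref{intro_contraint_Gamma_s0}--\eqref{intro_contraint_Lambda_s}. Your additional integrability bookkeeping (Cauchy--Schwarz for $\mathbb{E}[|s(Z)|]$ and the second-moment condition in Assumption \ref{assumption_DGP}(i) when $Y$ is unbounded) is a welcome refinement that the paper's proof leaves implicit, but it does not change the argument.
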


Recall that we denote $m=(m_0,m_1,m_D,m_{0D},m_{1D})'$ as some generic function in the space $\mathcal{M}$. To characterize the constraints on $m$ that is compatible with $m_{\mathbb{P}}$, let us introduce the following notations. 
For any $s\in\mathcal{S}\subseteq \mathbf{L}^2(\mathcal{Z},\mathbb{R})$ and $m\in\mathcal{M}$, let us
define 
$$
\Gamma_{s}(m)=\mathbb{E}\Big[s(Z)\int_{{\mathcal{V}}}(m_{1D}(v,Z),m_{0D}(v,Z),m_D(v,Z))'dv\Big].
$$
Then, there are two sets of constraints under which $m\in \mathcal{M}$ is compatible with $m_{\mathbb{P}}$. The first set includes the following nonlinear constraints that come from the definition of $m_{\mathbb{P},dD}$:
\begin{equation}\label{cond:bilinear}
\begin{aligned}
m_{0D}(v,z)=&m_0(v,x)(1-m_D(v,z))\\
m_{1D}(v,z)=&m_1(v,x)m_D(v,z),~~\qquad\text{ for }  \forall (v,z)\in{\mathcal{V}}\times \mathcal{Z}.
\end{aligned}
\end{equation}
The second set includes the linear constraints in Eqs. \eqref{intro_contraint_Gamma_s0} to \eqref{intro_contraint_Lambda_s} that are imposed by identifiable estimands using observed data:
\begin{equation}\label{cond:Gamma}
\mathbb{E}[s(Z)(YD,Y(1-D),D)']=\Gamma_{s}(m),\mbox{ for every }s\in\mathcal{S}.
\end{equation}
For a given $\mathcal{S}$, let us define $\mathcal{M}_{\mathcal{S}}$ as a collection of  functions in $\mathcal{M}$ that satisfy both the above linear and nonlinear constraints%$\mathcal{M}_{\mathcal{S}}$ as the set of function $m$ that satisfies Eq. \eqref{intro_contraint_Gamma_s0} to \eqref{intro_contraint_Lambda_s}:
\begin{align}\label{def_MS}
\mathcal{M}_{\mathcal{S}}=\{m\in\mathcal{M}:~m\text{ satisfies Eqs. \eqref{cond:bilinear} and \eqref{cond:Gamma} }\mbox{ for every }s\in\mathcal{S}\}.
\end{align}
We can see from Theorem \ref{theorem_sharp} that  $m_\mathbb{P}\in\mathcal{M}_{\mathcal{S}}$. However,  for any given $\mathcal{S}$, $\mathcal{M}_{\mathcal{S}}$ is only an outer identified set for all $m\in\mathcal{M}$ that are compatible with $m_{\mathbb{P}}$, as it may not fully utilize all information from the observed data. If $\mathcal{S}$ expands to include more IV-like functions, then more restrictions are imposed on
$m$, and $\mathcal{M_S}$ will become tighter. We formalize this result in the following corollary.

\begin{corollary}\label{cor:bilinar_reprentation}
Under Assumption \ref{assumption_DGP}, we have $m_\mathbb{P}\in\mathcal{M}_{\mathcal{S}}$ for any $\mathcal{S}\subseteq\mathbf{L}^2(\mathcal{Z},\mathbb{R})$. In addition, if $\mathcal{S}=\mathbf{L}^2(\mathcal{Z},\mathbb{R})$, then $\mathcal{M}_{\mathcal{S}}$ satisfies
$$\mathcal{M}_{\mathcal{S}}=\{m\in\mathcal{M}:~m\mbox{ satisifes Eqs. }\eqref{sharp_1}\text{ to \eqref{sharp_3} and \eqref{cond:bilinear}}\}.
$$
\end{corollary}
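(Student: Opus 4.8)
The plan is to treat the two assertions separately. For the membership claim $m_\mathbb{P}\in\mathcal{M}_\mathcal{S}$, I would verify that $m_\mathbb{P}$ satisfies both defining conditions of $\mathcal{M}_\mathcal{S}$, namely the bilinear constraints in Eq.~\eqref{cond:bilinear} and the moment constraints in Eq.~\eqref{cond:Gamma}. The bilinear constraints hold by construction, since $m_{\mathbb{P},0D}$ and $m_{\mathbb{P},1D}$ were \emph{defined} as $m_{\mathbb{P},0}(v,x)(1-m_{\mathbb{P},D}(v,z))$ and $m_{\mathbb{P},1}(v,x)m_{\mathbb{P},D}(v,z)$, so there is nothing to prove there. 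For Eq.~\eqref{cond:Gamma}, I would substitute these definitions into $\Gamma_s(m_\mathbb{P})$ and observe that its three coordinates become exactly the right-hand sides of Eq.~\eqref{intro_contraint_Gamma_s0} to \eqref{intro_contraint_Lambda_s}; Theorem~\ref{theorem_sharp} then identifies them with $\mathbb{E}[YDs(Z)]$, $\mathbb{E}[Y(1-D)s(Z)]$, and $\mathbb{E}[Ds(Z)]$, which is precisely Eq.~\eqref{cond:Gamma}. Since Theorem~\ref{theorem_sharp} holds for every $s\in\mathbf{L}^2(\mathcal{Z},\mathbb{R})$, it holds a fortiori for every $s$ in any subset $\mathcal{S}$, giving $m_\mathbb{P}\in\mathcal{M}_\mathcal{S}$.

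For the set equality with $\mathcal{S}=\mathbf{L}^2(\mathcal{Z},\mathbb{R})$, both sides impose the same bilinear constraint \eqref{cond:bilinear}, so I would reduce the problem to showing that, for $m$ already satisfying \eqref{cond:bilinear}, the full family of unconditional constraints \eqref{cond:Gamma} is equivalent to the conditional moment equalities \eqref{sharp_1} to \eqref{sharp_3}. The direction ``conditional implies unconditional'' is a law-of-iterated-expectations argument: multiplying each of Eq.~\eqref{sharp_1} to \eqref{sharp_3} by a function $s(Z)$ and taking expectations, using that $s(Z)$ is a function of $Z$, recovers the three coordinates of Eq.~\eqref{cond:Gamma}.

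The converse direction is where the hypothesis $\mathcal{S}=\mathbf{L}^2(\mathcal{Z},\mathbb{R})$ does the work. For each coordinate, Eq.~\eqref{cond:Gamma} states that $\mathbb{E}[s(Z)g(Z)]=0$ for all $s\in\mathbf{L}^2(\mathcal{Z},\mathbb{R})$, where $g(Z)$ is the difference between the relevant conditional expectation (e.g.\ $\mathbb{E}[YD\mid Z]$) and the corresponding $\mathcal{V}$-integral of $m$ (e.g.\ $\int_{\mathcal{V}} m_{1D}(v,Z)\,dv$). Since the $\mathcal{V}$-integrals are functions of $Z$ lying in $\mathbf{L}^2(\mathcal{Z},\mathbb{R})$---using that $\mathcal{V}=[0,1]^K$ is bounded and $m\in\mathcal{M}$---the residual $g$ is itself an element of $\mathbf{L}^2(\mathcal{Z},\mathbb{R})$. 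Choosing $s=g$ then yields $\mathbb{E}[g(Z)^2]=0$, hence $g=0$ almost surely, which is exactly Eq.~\eqref{sharp_1} to \eqref{sharp_3} after invoking \eqref{cond:bilinear} to rewrite the integrands as $m_{1}(v,X)m_D(v,Z)$ and $m_{0}(v,X)(1-m_D(v,Z))$.

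I expect the only genuine subtlety to be the measurability and integrability bookkeeping that guarantees each residual $g$ lives in $\mathbf{L}^2(\mathcal{Z},\mathbb{R})$, so that the self-test $s=g$ is admissible; once that is secured, the argument reduces to the standard completeness fact that a function orthogonal to all of $\mathbf{L}^2(\mathcal{Z},\mathbb{R})$ is null. This is also where the assumption that $\mathcal{S}$ exhausts $\mathbf{L}^2(\mathcal{Z},\mathbb{R})$ is indispensable: for a strict subset $\mathcal{S}$ one only obtains orthogonality to the closed linear span of $\mathcal{S}$, which need not force $g=0$, so in general one recovers only the outer set $\mathcal{M}_\mathcal{S}$ rather than the conditional-moment characterization.
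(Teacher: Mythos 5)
Your proposal is correct and takes essentially the same route as the paper's own proof: membership follows by definition of $m_{\mathbb{P},0D},m_{\mathbb{P},1D}$ together with Theorem \ref{theorem_sharp}, and the set equality under $\mathcal{S}=\mathbf{L}^2(\mathcal{Z},\mathbb{R})$ is established by testing the orthogonality conditions against the residual itself, i.e.\ the paper's choice $s(Z)=\mathbb{E}[YD\mid Z]-\mathbb{E}[m_1(V,X)m_D(V,Z)\mid Z]$ is exactly your self-test $s=g$, yielding $\mathbb{E}[g(Z)^2]=0$ and hence $g=0$ almost surely. The only difference is that you make the $\mathbf{L}^2$ integrability bookkeeping explicit, which the paper leaves implicit.
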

Corollary \ref{cor:bilinar_reprentation} provides a sufficient condition under which $\mathcal{M}_{\mathcal{S}}$ reduces to the identified set of $m_{\mathbb{P}}$ characterized by the conditional moment constraints in Eqs. \eqref{sharp_1}\text{ to }\eqref{sharp_3} and the constraints imposed by the definition in Eq. \eqref{cond:bilinear}. Specifically, it requires the choice of the IV-like functions in $\mathcal{S}$ to exhaust all available information in the data, which is not restrictive. For example, if $Z\in\{0,1\}$ is a binary scalar, then $\mathcal{S}=\{1[z=0],1[z=1]\}$ satisfies this condition. If $Z$ is continuous, then $\mathcal{S}=\{1[z\leq z']:~z'\in\mathcal{Z}\}$ needs to consist of indicators for all half-spaces of $\mathcal{Z}$. %We have a formal statement about $\mathcal{M}_{\mathcal{S}}$ as follows.} 
For any given $\mathcal{M_S}$, we define the identified set for the target PRTE parameter in Eq. \eqref{def_target_parameters_linear} as 
$$\{\Gamma^\star(m): m\in\mathcal{M_S}\}.$$

\subsubsection{Sharpness}\label{section:sharp}
%\textcolor{red}{This subsection can be moved up to be the 2nd subsection instead of the 3rd, because it has nothing to do with convex relaxation.} 
In this section, we define the sharp identified set for $m_{\mathbb{P}}$ and investigate sufficient conditions under which $\mathcal{M_S}$ is sharp.  The sharp identified set for the target PRTE is then defined as the set of all possible values of $\Gamma^\star(m)$, where $m$ is in the sharp identified set for $m_\mathbb{P}$. 

% In this section, we first introduce the definition of sharp identified set for $m_\mathbb{P}$. Then, we investigate the sharpness of the identified set $\mathcal{M_S}$ defined using both linear and nonlinear constraints. Finally, we provide sufficient conditions under which the identified set $\mathcal{M}^r_\mathcal{S}$ obtained by applying the convex relaxation method is also sharp. 

\vspace{5mm}

\noindent\textbf{Definition of Sharp Identified Set}. \hspace{3mm} Let us first introduce some useful notations. For any generic distribution $P$ for  $(Y_0,Y_1,D,Z,V)$, we denote $$m_P=(m_{P,0},m_{P,1},m_{P,D},m_{P,0D},m_{P,1D})',$$ where $m_{P,d}$ with $d=0,1$ and $m_{P,D}$ are the marginal treatment response function and the propensity score function induced by $P$, respectively, and we define $m_{P,0D}=m_{P,0}(1-m_{P,D})$ and $m_{P,1D}=m_{P,1}m_{P,D}$. Let $\mathcal{P}$ be the set of distributions $P$ for $(Y_0,Y_1,D,Z,V)$ that satisfy the following conditions: (i) $P_{Y_D,D,Z}=\mathbb{P}_{Y,D,Z}$, (ii) $E_P[Y_d\mid D,Z,{V}]=E_P[Y_d\mid X,{V}]$ and $E_P[{E_P}[Y_d\mid X,{V}]^2]<\infty$ for each $d=0,1$, and (iii) $V$ conditional on $Z$ is uniformly distributed on $\mathcal{V}$ with a known dimension $K$.\footnote{We use $P_{Y_D,D,Z}$ to denote the distribution for $(Y_D,D,Z)$ induced by $P$, and we use $\mathbb{P}_{Y,D,Z}$ to denote the distribution for $(Y,D,Z)$ induced by the true distribution $\mathbb{P}$.} In other words, $\mathcal{P}$ is the set of distributions that satisfy Assumption \ref{assumption_DGP} and are observationally equivalent to $\mathbb{P}$. Then, the sharp identified set for $m_{\mathbb{P}}$ can be defined by 
$$\{m_{P}\in\mathcal{M}: P\in\mathcal{P}\}.$$ 
If $m_\mathbb{P}$ satisfies certain shape restriction of the form $m_\mathbb{P}\in\mathcal{R}$ for a known set $\mathcal{R}\subseteq\mathcal{M}$, such as those discussed in Section \ref{section:shape_restriction}, then the sharp identified set for $m_{\mathbb{P}}$ can be defined as
$$
\{m_{P}\in\mathcal{R}: P\in\mathcal{P}\}.
$$

\vspace{5mm}

\noindent\textbf{Sharpness of $\mathcal{M_S}$ with Binary Outcome}. \hspace{3mm} Next, we establish the sharpness of the identified set $\mathcal{M_S}$ defined in \eqref{def_MS} in an empirically relevant case with a binary outcome variable.\footnote{It is interesting to explore sharpness in cases with general outcome variables. However, we leave a rigorous exploration for future research.} In the theorem below, we show that if we utilize all available information from observed data by choosing $\mathcal{S}=\mathbf{L}^2(\mathcal{Z},\mathbb{R})$, then $\mathcal{M_S}$ is the sharp identified set for $m_\mathbb{P}$.
\begin{theorem}\label{theorem:sharp_binary_P_m}
Under Assumption \ref{assumption_DGP}, if $Y\in\{0,1\}$ and $m_\mathbb{P}\in\mathcal{R}$ for some known set $\mathcal{R}\subseteq\mathcal{M}$, then
$$
\{m\in\mathcal{R}: ~m\in \mathcal{M_S}\mbox{ with }\mathcal{S}=\mathbf{L}^2(\mathcal{Z},\mathbb{R})\}=\{m_{P}\in\mathcal{R}: P\in\mathcal{P}\}.
$$
\end{theorem}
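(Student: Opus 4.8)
The claim is a set equality between an "observable-restrictions" set and a "sharp" set defined through the family $\mathcal{P}$ of admissible distributions. The natural plan is to prove the two inclusions separately. The inclusion $\supseteq$ is the easy direction: it is essentially Corollary \ref{cor:bilinar_reprentation} together with the definition of $\mathcal{R}$. Given any $P \in \mathcal{P}$ with $m_P \in \mathcal{R}$, Assumption \ref{assumption_DGP} holds for $P$ (conditions (ii) and (iii) of $\mathcal{P}$ are exactly Assumption \ref{assumption_DGP}(i),(ii)), so by Corollary \ref{cor:bilinar_reprentation} applied to $P$, the function $m_P$ satisfies Eq. \eqref{sharp_1} to \eqref{sharp_3} and the bilinear condition \eqref{cond:bilinear}; condition (i) of $\mathcal{P}$, namely $P_{Y_D,D,Z}=\mathbb{P}_{Y,D,Z}$, guarantees that the moments $\mathbb{E}[s(Z)(YD,Y(1-D),D)']$ computed under $P$ match those under $\mathbb{P}$, so \eqref{cond:Gamma} holds with $\mathcal{S}=\mathbf{L}^2(\mathcal{Z},\mathbb{R})$. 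Hence $m_P \in \mathcal{M}_{\mathcal{S}} \cap \mathcal{R}$.

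\textbf{The hard direction: constructing a distribution from an admissible $m$.}

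The reverse inclusion $\subseteq$ is the substance of the theorem and is where I expect the main obstacle. Here I start from an arbitrary $m \in \mathcal{R} \cap \mathcal{M}_{\mathcal{S}}$ (with $\mathcal{S}=\mathbf{L}^2$) and must \emph{construct} a distribution $P \in \mathcal{P}$ whose induced marginal response functions equal $m$, i.e. $m_P = m$. The plan is to build $P$ explicitly on $(Y_0,Y_1,D,Z,V)$ as follows. Keep the marginal of $Z$ equal to its true marginal and set $V \mid Z \sim \mathrm{Unif}(\mathcal{V})$, which secures condition (iii). Because $Y \in \{0,1\}$, each potential outcome $Y_d$ is Bernoulli, so I can define the conditional law of $Y_d$ given $(V,X)$ to be Bernoulli with mean $m_d(V,X)$, and the conditional law of $D$ given $(V,Z)$ to be Bernoulli with mean $m_D(V,Z)$; I would further specify $(Y_0,Y_1)$ and $D$ to be conditionally independent given $(V,Z)$. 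This construction makes condition (ii) hold by design, since $\mathbb{E}_P[Y_d \mid D,Z,V] = \mathbb{E}_P[Y_d \mid X,V] = m_d(V,X)$ and the finiteness of the second moment is automatic for a bounded binary outcome. It also yields $m_{P,d}=m_d$, $m_{P,D}=m_D$, and via conditional independence $m_{P,dD}=m_{dD}$, matching $m_P=m$.

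\textbf{Verifying the observational-equivalence condition (i).}

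The remaining and most delicate step is to check condition (i), that the constructed $P$ reproduces the true joint law of the \emph{observables} $(Y,D,Z)=(Y_D,D,Z)$. Since $Z$ and $V\mid Z$ are pinned down, it suffices to match, for each $z$, the joint distribution of $(Y,D)$ given $Z=z$; as $(Y,D)$ takes four values, this reduces to matching $\mathbb{P}(D=1\mid Z=z)$, $\mathbb{P}(Y=1,D=1\mid Z=z)$, and $\mathbb{P}(Y=1,D=0\mid Z=z)$. Under the constructed $P$ these equal $\int_{\mathcal{V}} m_D\,dv$, $\int_{\mathcal{V}} m_1 m_D\,dv$, and $\int_{\mathcal{V}} m_0(1-m_D)\,dv$ respectively (using $Y=DY_1+(1-D)Y_0$ and the Bernoulli/independence structure). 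These are \emph{exactly} the right-hand sides of Eq. \eqref{sharp_3}, \eqref{sharp_1}, \eqref{sharp_2}, whose left-hand sides are the true conditional moments. The key point is that $m \in \mathcal{M}_{\mathcal{S}}$ with $\mathcal{S}=\mathbf{L}^2(\mathcal{Z},\mathbb{R})$ forces, by Corollary \ref{cor:bilinar_reprentation}, precisely Eq. \eqref{sharp_1} to \eqref{sharp_3} to hold for $m$, so these conditional moments agree $\mathbb{P}_Z$-almost surely. Here the binary structure of $Y$ is essential: it is what lets the three matched moments \emph{determine} the full conditional law of $(Y,D)\mid Z$, so that matching moments upgrades to matching distributions. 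I would flag this as the crux—for non-binary $Y$ the three moments would not exhaust the conditional distribution—and would close by noting that $m\in\mathcal{R}$ carries over to $m_P=m\in\mathcal{R}$, completing $P\in\mathcal{P}$ with $m_P\in\mathcal{R}$.
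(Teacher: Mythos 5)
Your proposal is correct and follows essentially the same route as the paper's proof: the easy inclusion via matching observable moments under $P_{Y_D,D,Z}=\mathbb{P}_{Y,D,Z}$, and the hard inclusion by constructing $P$ with $P_Z=\mathbb{P}_Z$, $V\mid Z$ uniform, $D\mid(V,Z)$ Bernoulli with mean $m_D$, and $(Y_0,Y_1)$ conditionally independent Bernoullis with means $m_d(V,X)$ not depending on $D$, then using Corollary \ref{cor:bilinar_reprentation} (i.e., Eq. \eqref{sharp_1} to \eqref{sharp_3}) and the binariness of $Y$ to upgrade matched moments to matched conditional laws of $(Y,D)\mid Z$. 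The only cosmetic difference is that the paper fixes the joint law of $(Y_0,Y_1)$ explicitly as the independent coupling in its Eq. \eqref{proof_sharp_jointY}, whereas you leave the coupling unspecified (any choice works since it never affects the observables).
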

When $\mathcal{M_S}$ is the sharp identified set for $m_\mathbb{P}$, the identified set for PRTE based on $\mathcal{M_S}$ is also sharp. Despite the sharpness demonstrated in Theorem \ref{theorem:sharp_binary_P_m}, the PRTE identified set can be computationally challenging to obtain, as it requires solving nonconvex optimization problems due to the nonlinear constraints in \eqref{cond:bilinear}. In what follows, we address this challenge by introducing a convex relaxation method.

\subsection{A Convex Relaxation Method}

In the objective function $\Gamma^\star(m)$ and the constraints on $m\in\mathcal{M_S}$, everything is linear in $m$ except for the nonlinear (in particular, bilinear) constraints in Eq. \eqref{cond:bilinear}. 
In this section, we propose to use the convex relaxation method proposed by \cite{mccormick1976computability} to relax the bilinear constraints in  Eq. \eqref{cond:bilinear} into linear constraints. This then allows us to bound $\Gamma^\star(m)$ by solving linear optimization problems. Let us first introduce the convex-relaxation presentation of the bilinear constraints on $m_{\mathbb{P},d}$. 
\begin{lemma}[\citealp{mccormick1976computability}, p.153]\label{lemma:mccormick}
Suppose there are known (or identifiable) functions $m_d^L$, $m_d^U$, $m_D^L$, and $m_D^U$, such that $m_{\mathbb{P}}$ satisfies the following inequalities for all $(v,z)\in\mathcal{V}\times\mathcal{Z}$
\begin{align}\label{eq:mccormick_bounds_m}
\begin{cases}
m_d^L(v,x)\leq m_{d}(v,x)\leq m_d^U(v,x), \text{ for }d=0,1,\\
m_D^L(v,z)\leq m_{D}(v,z)\leq m_D^U(v,z).\end{cases}
\end{align}
Then, any $m_\mathbb{P}$ that satisfies Eq. \eqref{cond:bilinear} also satisfies 
\begin{flalign}\hspace*{-0.2cm}\label{eq:mccormick_relax1_0}
\begin{cases}
m_{0D}(v,z)\geq m_0^L(v,x)\left(1-m_{D}(v,z)\right)+m_{0}(v,x)\left(1-m_D^U(v,z)\right)-m_0^L(v,x)\left(1-m_D^U(v,z)\right)\\
m_{0D}(v,z)\geq m_0^U(v,x)\left(1-m_{D}(v,z)\right)+m_{0}(v,x)\left(1-m_D^L(v,z)\right)-m_0^U(v,x)\left(1-m_D^L(v,z)\right)\\
m_{0D}(v,z)\leq m_0^U(v,x)\left(1-m_{D}(v,z)\right)+m_{0}(v,x)\left(1-m_D^U(v,z)\right)-m_0^U(v,x)\left(1-m_D^U(v,z)\right)\\
m_{0D}(v,z)\leq m_0^L(v,x)\left(1-m_{D}(v,z)\right)+m_{0}(v,x)\left(1-m_D^L(v,z)\right)-m_0^L(v,x)\left(1-m_D^L(v,z)\right),
\end{cases}
\end{flalign}
and
\begin{align}\label{eq:mccormick_relax4_1}
\begin{cases}
m_{1D}(v,z)\geq m_1^L(v,x)m_{D}(v,z)+m_{1}(v,x)m_D^L(v,z)-m_1^L(v,x)m_D^L(v,z)\\
m_{1D}(v,z)\geq m_1^U(v,x)m_{D}(v,z)+m_{1}(v,x)m_D^U(v,z)-m_1^U(v,x)m_D^U(v,z)\\
m_{1D}(v,z)\leq m_1^U(v,x)m_{D}(v,z)+m_{1}(v,x)m_D^L(v,z)-m_1^U(v,x)m_D^L(v,z)\\
m_{1D}(v,z)\leq m_1^L(v,x)m_{D}(v,z)+m_{1}(v,x)m_D^U(v,z)-m_1^L(v,x)m_D^U(v,z),
\end{cases}
\end{align}\mbox{ for all }$(v,z)\in\mathcal{V}\times\mathcal{Z}$.
\end{lemma}
%%%%%%%%%%%KEEP PROOF HERE
\begin{proof}[Proof of Lemma \ref{lemma:mccormick}] 
The proofs are similar for all eight inequalities, so we focus on the first inequality in \eqref{eq:mccormick_relax1_0}. By definition, $m_{\mathbb{P},0D}(v,z)=m_{\mathbb{P},0}(v,x)(1-m_{\mathbb{P},D}(v,z))$ and $m_{\mathbb{P},1D}(v,z)=m_{\mathbb{P},1}(v,x)m_{\mathbb{P},D}(v,z)$.
Then, we have that $m_{\mathbb{P}}$ satisfies
\begin{align*}
m_{0D}(v,z)
&=
m_0(v,x)(1-m_D(v,z))
\\
&=
m_0^L(v,x)(1-m_D(v,z))
+m_0(v,x)(1-m_D^U(v,z))
-m_0^L(v,x)(1-m_D^U(v,z))
\\&\quad+(m_0(v,x)-m_0^L(v,x))(m_D^U(v,z)-m_D(v,z)).
\end{align*}
The last term is always non-negative, which results in the desirable inequality. 
\end{proof}

In Lemma \ref{lemma:mccormick},  we assume the bounds $m^L_d(v,x)$, $m^U_d(v,x)$, $m^L_D(v,z)$, and $m^U_D(v,z)$ are either known or identifiable. If no prior information about these upper and lower bounds is available, we
can set $m^L_D(v,z)$ and $m^U_D (v,z)$ to 0 and 1, respectively, since $D$ is a binary variable, and set $m_d^L(v,x)$ and $m_d^U(v,x)$ to $y^L$ and $y^U$, respectively, if they are finite. From Lemma \ref{lemma:mccormick}, we can see that after applying the convex relaxation, the equality constraints in Eq. \eqref{cond:bilinear}, which are nonlinear in $m_{\mathbb{P}}$, can be relaxed and replaced by the inequality constraints in Eqs. \eqref{eq:mccormick_bounds_m} to \eqref{eq:mccormick_relax4_1}, which are all linear in $m_{\mathbb{P}}$. The boundaries of the set for $m_{\mathbb{P},dD}$, characterized by the inequalities in \eqref{eq:mccormick_relax1_0} and \eqref{eq:mccormick_relax4_1}, are referred to as the McCormick envelopes. %To implement the convex relaxation, we need known (or identifiable) bounds $(m_d^L,m_d^U)$ and $(m_D^L,m_D^U)$ for the unknown functions $m_{\mathbb{P},d}$ and $m_{\mathbb{P},D}$, respectively.  

Under convex relaxation, for any given user-specified $\mathcal{S}$, let us define  
$$\mathcal{M}^r_\mathcal{S}=\{m\in \mathcal{M}:~m\text{ satisfies Eq. \eqref{cond:Gamma} for any $s\in \mathcal{S}$, and Eqs. \eqref{eq:mccormick_bounds_m} to \eqref{eq:mccormick_relax4_1}}\}.$$
Then, $\mathcal{M}^r_\mathcal{S}$ is the collection of functions $m\in \mathcal{M}$ characterized by linear constraints from observed data and convex relaxation. Thus, it is an outer set for all $m\in\mathcal{M}$ that are compatible with $m_\mathbb{P}$ even if $S$ is a set of sufficiently rich class of IV-like functions. 

\begin{theorem}\label{theorem:main_result_relax}
%For every $m\in\mathcal{M}$, define 
%$$\Gamma_{s}(m)=\mathbb{E}\Big[\int_{{\mathcal{V}}}s(Z)(m_{1D}(v,Z),m_{0D}(v,Z),m_D(v,Z))'dv\Big].$$
Under Assumption \ref{assumption_DGP} and the assumptions in Lemma \ref{lemma:mccormick}, $m_\mathbb{P}\in \mathcal{M}^r_\mathcal{S}$ and $\mathcal{M}_\mathcal{S}\subseteq\mathcal{M}^r_\mathcal{S}$ for any $\mathcal{S}\subseteq\mathbf{L}^2(\mathcal{Z},\mathbb{R})$.
\end{theorem}
Based on $\mathcal{M}^r_\mathcal{S}$, we can define an outer set for the target PRTE parameter as below:
\begin{equation}\label{eq:set_representation+relax}
\{\Gamma^\star(m): m\in\mathcal{M}^r_\mathcal{S}\}.
\end{equation}
%Because of the convex relaxation, this set is an outer identified set for the target parameter. 
Since both $\Gamma^\star(m)$ and the constraints that characterize $\mathcal{M}^r_\mathcal{S}$ are linear (therefore, convex) in $m$, under Assumption \ref{assumption_DGP}, the PRTE outer set in \eqref{eq:set_representation+relax} is equivalent to an interval. Let us denote this interval as $[\underline\beta^\star,\overline\beta^\star]$. Then, we can characterize the two ending points $\underline\beta^\star$ and $\overline\beta^\star$ using the following linear optimization problems:\footnote{Note that we omit the dependence of $\underline\beta^\star$ and $\overline\beta^\star$ on $\mathcal{S}$ for simplicity. }
\begin{equation}\label{opt_relax}
\underline\beta^\star=\inf_{m\in\mathcal{M}^r_\mathcal{S}}\Gamma^\star(m) ~\text{ and } ~\overline\beta^\star=\sup_{m\in\mathcal{M}^r_\mathcal{S}}\Gamma^\star(m).
\end{equation}

There are a few advantages of considering the outer set $\{\Gamma^\star(m): m\in\mathcal{M}^r_\mathcal{S}\}$. First, it gives us computationally tractable PRTE bounds because $\underline\beta^\star$ and $\overline\beta^\star$ can be easily and reliably solved by linear programming techniques that are routinely used in many empirical works.  Second, shape restrictions that are linear in $m$ can be easily incorporated into the linear programming problems to further shrink the bounds. Third, under some conditions, we can show that $\underline\beta^\star$ and $\overline\beta^\star$ can be computed without knowing the true dimension of the unobserved heterogeneity $V$. To illustrate this, let us denote
\begin{align}\label{prop_reduce_dim_fun}
\tilde{m}_d(x)=&\int_{\mathcal{V}}m_d(v,x)dv,~~
\tilde{m}_D(z)=\int_{\mathcal{V}}m_D(v,z)dv,~~
\tilde{m}_{dD}(z)=\int_{\mathcal{V}}m_{dD}(v,z)dv.
\end{align}
Let $\tilde{m}=(\tilde{m}_0,\tilde{m}_1,\tilde{m}_D,\tilde{m}_{0D},\tilde{m}_{1D})'\in \tilde{\mathcal{M}}$, where
$
\tilde{\mathcal{M}}=\{\tilde{m}:~\tilde{m}\mbox{ defined  in }\eqref{prop_reduce_dim_fun}\mbox{ for some }m\in\mathcal{M}\}$.

\begin{proposition}\label{prop_reduce_dim}Suppose $\omega_{dd'}^\star(V,Z)$, $m^L_d(V,X)$, $m^U_d(V,X)$, $m^L_{D}(V,Z)$, and $m^U_{D}(V,Z)$ do not depend on $V$ for all $d,d'=0,1$. For any $\mathcal{S}$, if there exists a $m^\star\in \mathcal{M}$ such that $m^\star=\arg\inf_{m\in\mathcal{M}^r_\mathcal{S}}\Gamma^\star(m)$, then we have
$$\tilde{m}^\star=\arg\inf_{\tilde{m}\in\tilde{\mathcal{M}}^r_\mathcal{S}}\Gamma^\star(\tilde{m}),$$
where $\tilde{m}^\star\in \tilde{\mathcal{M}}$ is obtained from substituting $m^\star$ into \eqref{prop_reduce_dim_fun}, and we define $\tilde{\mathcal{M}}^r_\mathcal{S}=\{\tilde{m}\in \tilde{\mathcal{M}}:~\tilde{m}\mbox{ satisfies  
\eqref{cond:Gamma} for any $s\in\mathcal{S}$ and \eqref{eq:mccormick_bounds_m} to \eqref{eq:mccormick_relax4_1}}\}$. 
Moreover, we have $\underline\beta^\star=\Gamma^\star(m^\star)=\Gamma^\star(\tilde{m}^\star)$.
The same result holds for the supremum. 
\end{proposition}

Proposition \ref{prop_reduce_dim} provides sufficient conditions under which our proposed convex-relaxation bounds for PRTEs can be obtained by solving linear optimization problems over the space $\tilde{\mathcal{M}}$. Given that any $\tilde{m}\in\tilde{\mathcal{M}}$ is degenerate in $V$, the solutions to these linear optimization problems over $\tilde{\mathcal{M}}$ do not rely on knowledge of the true dimension of $V$. This notable result is contingent upon a crucial condition: the weights $\omega_{dd'}^\star(V,Z)$ and the known bounds for $m_{\mathbb{P},d}$ and $m_{\mathbb{P},D}$ must be degenerate in $V$. Specifically, the former, $\omega_{dd'}^\star(V,Z)=\omega_{dd'}^\star(Z)$, holds for many target PRTE estimands listed in Table \ref{table_weights}. The latter holds if we set constant upper and lower bounds for $m_{\mathbb{P},d}$ and $m_{\mathbb{P},D}$. %In the special case where $V= V^b$, that is, when all components of $V$ are integrated out, $\tilde{m}_d$, $\tilde{m}_D$, and $\tilde{m}_{dD}$ all become . 

\subsection{No Identification Power Loss of Convex-Relaxation Bounds}\label{section:no_id_power_loss} 
%As we discussed in the previous section, although the convex relaxation method can be applied to overcome this challenge,  Below, we provide two cases within which the convex-relaxation bounds for PRTEs actually result in no identification power loss under some conditions. Note that in both cases, we do not require the outcome to be binary, and the results also hold for discrete and continuous outcomes.  

Despite the advantages of the convex relaxation method mentioned above, its conservativeness may raise concerns. This is because the resulting identified set for $m_{\mathbb{P}}$, $\mathcal{M}^r_\mathcal{S}$, is only an outer set, which is often larger and less informative than its identified set $\mathcal{M_S}$ that does not rely on convex relaxation. This section investigates this concern. 
We consider two cases, and in each case, we provide sufficient conditions under which our convex-relaxation bounds result in no identification power loss. 

We first consider the case where the known or identifiable lower and upper bounds of $m_D(V,Z)$ used in the convex relaxation coincide, implying that $m_D(V,Z)$ is identifiable. If researchers are aware of this and use it properly by setting $m_D^L(V,Z)=m_D^U(V,Z)=m_D(V,Z)$ when applying the convex relaxation method, we can show that the outer set for $\Gamma^\star(m)$ derived from $\mathcal{M}^r_\mathcal{S}$ reduces to its identified set derived from $\mathcal{M_S}$.

\begin{theorem}\label{theorem:sharp_binary_determonistic_monot}
Suppose Assumption \ref{assumption_DGP}, the assumptions in Lemma \ref{lemma:mccormick}, and $m_D^L(V,Z)=m_D^U(V,Z)$ hold. For any $\mathcal{S}\subseteq\mathbf{L}^2(\mathcal{Z},\mathbb{R})$ and any $\mathcal{R}\subseteq\mathcal{M}$, if we set $m_d^L(V,X)=y^L$ and $m_d^U(V,X)=y^U$ for $d=0,1$, and $m_D^L(V,Z)=m_D^U(V,Z)$ in \eqref{eq:mccormick_bounds_m}, then $$\{\Gamma^\star(m): m\in\mathcal{R}\mbox{ and }m\in \mathcal{M}^r_\mathcal{S}\}= \{\Gamma^\star(m): m\in\mathcal{R} \mbox{ and }m\in\mathcal{M_S}\}.$$
\end{theorem}
The result in Theorem \ref{theorem:sharp_binary_determonistic_monot} is straightforward because under the condition $m_D^L(V,Z)=m_D^U(V,Z)$, the inequality
constraints in \eqref{eq:mccormick_bounds_m} to \eqref{eq:mccormick_relax4_1} all reduce to the equality constraints in \eqref{cond:bilinear}. We can see that $m_D^L(V,Z)=m_D^U(V,Z)$ holds, if $m_{\mathbb{P},D}(V,Z)$ is point identified. This is the case if we impose the shape restriction in Condition \ref{condition:random_coeff} that $D$ is generated by a random coefficient model, where $m_{\mathbb{P},D}(V,Z)=1[\pi_1Z_1+\pi_2Z_2+...+\pi_KZ_K>0]$ is point identified. Another example for the point identified $m_{\mathbb{P},D}(V,Z)$ is when we impose the shape restriction in Condition \ref{condition:double_hurdle} that $D$ is generated by a double hurdle model, where $m_{\mathbb{P},D}(V,Z)=1[U_1<Q_1(Z)\text{ and }U_2<Q_2(Z)]$ is point identified. Besides, if we impose the classical monotonicity restriction on $D$, which is equivalent to assuming that $D$ is generated by a threshold-crossing model with a single-dimensional $V\sim \mathrm{Uniform}[0,1]$, then $m_{\mathbb{P},D}(V,Z)=1[\mathbb{P}[D=1\mid Z]\geq V]$ is also point identified.

By combining Theorems \ref{theorem:sharp_binary_P_m} and \ref{theorem:sharp_binary_determonistic_monot}, we can see that in the case of a binary outcome, if $m_{\mathbb{P},D}(V,Z)$ is point identified, the convex-relaxation bounds for PRTEs are sharp as long as we choose IV-like functions to exhaust all the information from observed data.
Such a sharpness result directly relates to the existing literature. For example, \cite{heckman/vylacil:2001:book} propose the sharp ATE bound assuming bounded outcomes and threshold crossing in the treatment equation. % In their model, $m_{\mathbb{P},D}(V,Z)=1\{\mathbb{P}(D=1\mid Z)\geq V\}$ is identifiable. 
Below, we show that %under the condition of $m_D^L(V,Z)=m_D^U(V,Z)=1\{\mathbb{P}(D=1\mid Z)\geq V\}$, 
our convex-relaxation bounds for ATE without shape restrictions coincide with the sharp ATE bounds of \cite{heckman/vylacil:2001:book}.

\begin{corollary}\label{coro_equiv_CvR_HV}
Consider the assumptions in Theorem \ref{theorem:sharp_binary_determonistic_monot} and assume $Y\in\{0,1\}$. 
Suppose there are no covariates $X$ and we are interested in the ATE, that is, $\omega_{00}^\star(v,z)=\omega_{01}^\star(v,z)=-1$ and $\omega_{10}^\star(v,z)=\omega_{11}^\star(v,z)=1$. 
Assume $K=1$ and
$$
m_{\mathbb{P},D}(V,Z)=1\{\mathbb{P}(D=1\mid Z)\geq V\},\text{ with }V\sim \mathrm{Uniform}[0,1].
$$
If we set $\mathcal{S}=\mathbf{L}^2(\mathcal{Z},\mathbb{R})$,  $m_D^L(V,Z)=m_D^U(V,Z)=1\{\mathbb{P}(D=1\mid Z)\geq V\}$, and $m_d^L(V,Z)=0$ and $m_d^U(V,Z)=1$ for $d=0,1$, 
then our convex relaxation bounds $[\underline\beta^\star,\overline\beta^\star]$ defined in \eqref{opt_relax} are the same as the ATE bounds of \cite{heckman/vylacil:2001:book}:
\begin{align*}
\underline\beta^\star=&p^U\mathbb{E}[Y\mid p(Z)=p^U,D=1]-(1-p^L)\mathbb{E}[Y\mid p(Z)=p^L,D=0]-p^L, \\
\overline\beta^\star=&p^U\mathbb{E}[Y\mid p(Z)=p^U,D=1]+(1-p^U)-(1-p^L)\mathbb{E}[Y\mid p(Z)=p^L,D=0],
\end{align*}
where we denote $p^L=\inf_{z\in\mathcal{Z}}\mathbb{P}(D=1\mid Z=z)\text{ and }p^U=\sup_{z\in\mathcal{Z}}\mathbb{P}(D=1\mid Z=z)$.
\end{corollary}

Next, we consider another case in which the convex-relaxation bounds are simplified to those derived by \citet*{mogstad/santos/torgovitsky:2017} (hereafter, MST) if the true data satisfy the threshold-crossing model for the treatment with one-dimensional unobserved heterogeneity, even though the researcher neither knows nor uses this information. We use the propensity score given $Z$, defined by $p(z)=\mathbb{P}(D=1|Z=z)$. Suppose the treatment variable is generated by a threshold-crossing model in the true data:
\begin{align}\label{model_MST}
%Y=DY_1+(1-D)Y_0,~~
D=1\{p(z)\geq U\}\mbox{ almost surely},
\end{align} 
where $U$ conditional on $Z$ is uniformly distributed on $[0,1]$, $\mathbb{E}[Y_d\mid D,Z,U]=\mathbb{E}[Y_d\mid X,U]$, and $\mathbb{E}[Y^2_d]<\infty$ for $d\in\{0,1\}$. Here, we use a different notation $U$ to represent the unobserved heterogeneity, in order to emphasize that the researcher may not know whether the true data satisfy the threshold-crossing assumption for the treatment, and instead, a multidimensional $V$ can be assumed when computing our convex-relaxation bounds. %In addition, MST maintains the following assumptions.
%\begin{assumption}\label{assumption_MST}~
    %\begin{itemize}
    %    \item[(i)] $\mathbb{E}[Y_d\mid Z,U]=\mathbb{E}[Y_d\mid X,U]$ and $\mathbb{E}[Y^2_d]<\infty$ for $d\in\{0,1\}$.
    %      \item[(ii)]Conditional on $X$, $U$ is independent of $Z_0$  and is continuously distributed.
    %\end{itemize}
%\end{assumption}

Let us first introduce the MST bounds assuming the threshold-crossing structure in \eqref{model_MST}. With notation abuse, we denote the marginal treatment response functions as $\tilde{m}_{\mathbb{P},d}(u,x)=\mathbb{E}[Y_d\mid U=u, X=x]$ for $d=0,1$. The propensity score given $(U,Z)$, i.e., $\mathbb{E}[D|U=u,Z=z]=1\{p(z)\geq u\}$, is point identified for any  $(u,z)\in[0,1]\times\mathcal{Z}$. Therefore, different from the expression of our target parameter given in \eqref{def_target_parameters}, the MST target parameter depends only on the marginal treatment response functions:
\begin{align}\label{def_MST_target}\mathbb{E}\left[\int_0^1\tilde{m}_{\mathbb{P},0}(u,X)\tau^\star_0(u,Z)du\right]+\mathbb{E}\left[\int_0^1\tilde{m}_{\mathbb{P},1}(u,X)\tau^\star_1(u,Z)du\right],
\end{align}
where $\tau^\star_d(u,z)$ for $d=0,1$ denotes some known or identifiable weights. %For example, if we set $\tau^\star_0(u,Z)=-1$ and $\tau^\star_0(u,Z)=1$, then $\tilde{\Gamma}^\star(\tilde{m}_0,\tilde{m}_1)$ is the ATE. 
Under the the threshold-crossing structure in \eqref{model_MST}, the expression in \eqref{def_MST_target} is equivalent to that of our target parameter in \eqref{def_target_parameters} if we set
\begin{equation}\label{weights_MST}\begin{aligned}
    \tau^\star_0(u,Z)=&1[u>p(Z)]\omega^\star_{00}(u,Z)+1[u\leq p(Z)]\omega^\star_{01}(u,Z),\\
    \tau^\star_1(u,Z)=&1[u>p(Z)]\omega^\star_{10}(u,Z)+1[u\leq p(Z)]\omega^\star_{11}(u,Z),
\end{aligned}
\end{equation}
and all PRTE parameters listed in Table \ref{table_weights} can be equivalently expressed as in \eqref{def_MST_target}, using the weights given in \eqref{weights_MST}. For any generic functions $(\tilde{m}_0,\tilde{m}_1)$ in some parameter space $\tilde{\mathcal{M}}_{01}\subseteq(\mathbf{L}^2([0,1]\times\mathcal{X},[y^L,y^U]))^2$, define a map from $\tilde{\mathcal{M}}_{01}$ to $\mathbb{R}$ as below:
$$\tilde{\Gamma}^\star(\tilde{m}_0,\tilde{m}_1)=\mathbb{E}\left[\int_0^1\tilde{m}_0(u,X)\tau^\star_0(u,Z)du\right]+\mathbb{E}\left[\int_0^1\tilde{m}_1(u,X)\tau^\star_1(u,Z)du\right].$$
In addition, define $\tilde{\mathcal{S}}$ to be a collection of $\tilde{s}(d,z)$, where $\tilde{s}:\{0,1\}\times\mathcal{Z}\mapsto\mathbb{R}$ denotes the identifiable (or known) IV-like function used by MST to construct identified sets for $\tilde{m}_{\mathbb{P},0}$ and $\tilde{m}_{\mathbb{P},1}$. 
For any generic functions $\tilde{m}_0$ and $\tilde{m}_1$, they consider the following restriction
\begin{align}\label{MST_restrictions}\mathbb{E}[\tilde{s}(D,Z)Y]=\mathbb{E}\left[\int_0^1\tilde{m}_0(u,X)\tilde{s}(0,Z)1[u>p(Z)]du+\int_0^1\tilde{m}_1(u,X)\tilde{s}(1,Z)1[u\leq p(Z)]du\right].
\end{align}
MST has shown that $\tilde{m}_{\mathbb{P},0}$ and $\tilde{m}_{\mathbb{P},1}$ must satisfy the restriction in \eqref{MST_restrictions} for any $\tilde{s}\in\tilde{\mathcal{S}}$. Define $$\tilde{\mathcal{M}}_{01,\tilde{\mathcal{S}}}=\left\{(\tilde{m}_0,\tilde{m}_1)\in \tilde{\mathcal{M}}_{01}:~\tilde{m}_0\text{ and }\tilde{m}_1\text{ satisfy \eqref{MST_restrictions} for all }\tilde{s}\in \tilde{\mathcal{S}}\right\}.$$  
Then, the MST bounds for the  target PRTE parameter $\tilde{\Gamma}^\star(\tilde{m}_0,\tilde{m}_1)$ is
\begin{equation}\label{MST_bounds}
\left[\inf_{(\tilde{m}_0,\tilde{m}_1)\in\tilde{\mathcal{M}}_{01,\tilde{\mathcal{S}}}}\tilde{\Gamma}^\star(\tilde{m}_0,\tilde{m}_1),~\sup_{(\tilde{m}_0,\tilde{m}_1)\in\tilde{\mathcal{M}}_{01,\tilde{\mathcal{S}}}}\tilde{\Gamma}^\star(\tilde{m}_0,\tilde{m}_1)\right].
\end{equation}

%$$\left\{\tilde{\Gamma}^\star(\tilde{m}_0,\tilde{m}_1):~(\tilde{m}_0,\tilde{m}_1)\in\tilde{\mathcal{M}}_{01,\tilde{\mathcal{S}}}\right\}.$$
Even if the true data generating process satisfies the threshold-crossing assumption in \eqref{model_MST}, the researcher may not be aware of it and may instead compute our convex relaxation bounds. The theorem below provides sufficient conditions under which our convex relaxation bounds for $\Gamma^\star(m)$ coincide
with the MST bounds for $\tilde{\Gamma}^\star(\tilde{m}_0,\tilde{m}_1)$. %Without loss of generality, we assume that both $\mathcal{M}^r_\mathcal{S}$ and $\mathcal{M}_{01,\tilde{\mathcal{S}}}$ are non-empty.

\begin{theorem}\label{coro_equav_MST_CVR}
   Suppose Assumption \ref{assumption_DGP}, the assumptions in Lemma \ref{lemma:mccormick}, and the threshold-crossing structure in \eqref{model_MST} hold. If we further assume the following conditions
  \begin{itemize}
      \item[(i)] $\mathcal{S}=\mathbf{L}^2(\mathcal{Z},\mathbb{R})$ and $\tilde{\mathcal{S}}=\mathbf{L}^2(\{0,1\}\times\mathcal{Z},\mathbb{R})$,
       \item[(ii)]$\omega_{dd'}^\star(V,Z)=\omega_{dd'}^\star(Z)$ does not depend on $V$  for $d,d'=0,1$,
        \item[(iii)]$\tau^\star_0(u,Z)=1[u>p(Z)]\omega^\star_{00}(Z)+1[u\leq p(Z)]\omega^\star_{01}(Z)$ and $\tau^\star_1(u,Z)=1[u>p(Z)]\omega^\star_{10}(Z)+1[u\leq p(Z)]\omega^\star_{11}(Z)$,
         \item[(iv)]we set $m_d^L(v,x)=y^L$, $m_d^U(v,x)=y^U$, $m_D^L(v,x)=0$, $m_D^U(v,x)=1$ in Lemma \ref{lemma:mccormick},
  \end{itemize}
then our convex relaxation bounds $[\underline\beta^\star,\overline\beta^\star]$ defined in \eqref{opt_relax} are the same as MST bounds:
   $$[\underline\beta^\star,\overline\beta^\star]=\left[\inf_{(\tilde{m}_0,\tilde{m}_1)\in\tilde{\mathcal{M}}_{01,\tilde{\mathcal{S}}}}\tilde{\Gamma}^\star(\tilde{m}_0,\tilde{m}_1),~\sup_{(\tilde{m}_0,\tilde{m}_1)\in\tilde{\mathcal{M}}_{01,\tilde{\mathcal{S}}}}\tilde{\Gamma}^\star(\tilde{m}_0,\tilde{m}_1)\right].$$ 
 This result also holds if we impose the same linear shape restrictions on 
$(\int_{0}^1\tilde{m}_0(u,x)du,\int_{0}^1\tilde{m}_1(u,x)du)$ and $(\int_{\mathcal{V}}m_0(v,x)dv,\int_{\mathcal{V}}m_1(v,x)dv)$.
\end{theorem}

Theorem \ref{coro_equav_MST_CVR} provides sufficient conditions under which our convex relaxation method does not result in identification power loss compared to the MST bounds. That is, if the true data generating process satisfies the threshold-crossing assumption in \eqref{model_MST}, our convex-relaxation bounds—despite not utilizing this information—are just as tight as the MST bounds for a large class of PRTE parameters whose weights do not depend on $V$. Moreover, this result also holds under a particular set of linear shape restrictions that are imposed on the integrated treatment response functions. 
%Theoretically, this theorem does not require the shape restrictions to be linear in $(m_0,m_1)$ and $(\tilde{m}_0,\tilde{m}_1)$. In practice, however, we require linear shape restrictions so that linear programming methods can be applied to compute our convex-relaxation bounds. 
In short, our method generalizes MST, and remains applicable and more robust in settings where practitioners are uncertain whether the threshold-crossing structure in \eqref{model_MST} holds.

Theorem \ref{coro_equav_MST_CVR} may resemble Theorem 2 of \cite{heckman/vylacil:2001:book}, but they are substantively different results. In their Theorem 2, if the threshold-crossing assumption in \eqref{model_MST} holds, the \cite{manski1990nonparametric} IV-mean-independence sharp bounds, which do not impose the threshold crossing structure for the treatment, are equivalent to the  Heckman-Vytlacil sharp bounds. In our Theorem \ref{coro_equav_MST_CVR}, we do not consider the equivalence between the sharp bounds with and without the threshold-crossing assumption in \eqref{model_MST}. Instead, we consider the equivalence between our convex-relaxation bounds and the MST bounds.

\section{Inference Procedure and Algorithm}\label{section:inference}

In this section, we present our computation and inference procedure for PRTE bounds defined by the linear optimization problems in \eqref{opt_relax}. To solve the optimization problems using linear programming techniques, we first replace the potentially infinite-dimensional parameter space for $m$ with a finite-dimensional space defined using basis functions. Then, we demonstrate that the computationally tractable inference method of \citet{gafarov2019inference} can be directly applied in our analysis to construct uniformly valid PRTE confidence sets.

\subsection{Finite-Dimensional Approximation}\label{section:computation}
In a similar vein to \citet{mogstad/santos/torgovitsky:2017}, we approximate the unknown function $m$ using a finite number of known basis
functions. Let
$\mathbf{B}_d:(\mathcal{V,X})\mapsto\mathbb{R}^{K_d}$ with $d=0,1$ be a vector of known basis functions for the unknown function $m_d$. Similarly, let
$\mathbf{B}_{D}:(\mathcal{V,Z})\mapsto\mathbb{R}^{K_D}$ be a vector of known basis functions for $m_D$, and $\mathbf{B}_{dD}:(\mathcal{V,Z})\mapsto\mathbb{R}^{K_{dD}}$ with $d=0,1$ be a vector of known basis functions for $m_{dD}$. Then, we can approximate $m=(m_0,m_1,m_D,m_{0D},m_{1D})'$ by $\mathbf{B}'\eta_2$ with a $d_{\eta_2}\times 1$ vector of unknown parameters $\eta_2$ and a matrix of known basis functions $\mathbf{B}$, where 
%$$
%\mathbf{B}'\eta_2=(\mathbf{B}_0,\mathbf{B}_1,\mathbf{B}_D,\mathbf{B}_{0D},\mathbf{B}_{1D})'\eta_2,
%$$
$$
\mathbf{B}'\eta_2=\begin{bmatrix}\mathbf{B}'_0&0&0&0&0\\
0&\mathbf{B}'_1&0&0&0\\
0&0&\mathbf{B}'_D&0&0\\
0&0&0&\mathbf{B}'_{0D}&0\\
0&0&0&0&\mathbf{B}'_{1D}\\
\end{bmatrix}\eta_2.
$$
We denote  $\eta_1=\Gamma^\star(\mathbf{B}'\eta_2)$ and $\eta=(\eta_1,\eta_2')'$.  Define the parameter space for $\eta$ as below
$$\Upsilon=\{\eta=(\eta_1,\eta_2')'\in\mathbb{R}^{d_\eta}:~\eta_1=\Gamma^\star(\mathbf{B}'\eta_2)\in\mathbb{R}\text{ and }\mathbf{B}'\eta_2\in\mathcal{M}\}.$$
Then, the dimension of $\eta$ is $d_\eta=d_{\eta_2}+1=K_0+K_1+K_D+K_{0D}+K_{1D}$. Let $e_1=(1,0,...,0)'\in\mathbb{R}^{d_\eta}$. With this approximation, we can modify the optimization problems in \eqref{opt_relax} into 
\begin{equation}\label{linear_prom_orig_finite}
\underline\beta^\star_a=\inf_{\eta\in\Upsilon}e_1'\eta \quad \text{ and } \quad \overline\beta^\star_a=\sup_{\eta\in\Upsilon}
e_1'\eta
\end{equation}
$\eta$ subject to 
\begin{equation}\label{eq:constrainted_finite_relaxed}
\eta_1=\Gamma^\star(\mathbf{B}'\eta_2)~\mbox{ and }~\mathbf{B}'\eta_2\in \mathcal{M}^r_\mathcal{S}.
\end{equation}
Note that any linear shape restrictions on $m$ (e.g., those discussed in Section \ref{section:shape_restriction}) can be easily transformed into linear constraints on the unknown parameters in $\eta$.\footnote{For a known matrix $\mathbf{C}$, we can use $\mathbf{C}\eta\leq 0$ to describe the linear shape restrictions on $m$. For example, if there are no covariates and we impose the restriction that $\mathbb{E}[m_1(V)-m_0(V)]\geq0$, then we can set $\mathbf{C}=[0,\mathbb{E}[\mathbf{B}'_0],-\mathbb{E}[\mathbf{B}'_1],\mathbf{0}'_{K_D+K_{0D}+K_{1D}}]$, where $\mathbf{0}_l$ denotes a $l\times 1$ vector of zeros.} Thus, we do not consider shape restrictions in this section for notation simplicity.

%Our linear programming approach can be applied as long as we approximate the functions $m$ in a linear-in-parameter specification. 
Depending on the support of $Z$, the above finite-dimensional approximation of the unknown function $m$ may lead to narrower PRTE bounds compared to the nonparametric bounds $[\underline\beta^\star,\overline\beta^\star]$ in \eqref{opt_relax}. In other words, $[\underline\beta^\star_a,\overline\beta^\star_a]$ may be a proper subset of  $[\underline\beta^\star,\overline\beta^\star]$ or even a subset of $\{\Gamma^\star(m):~m\in\mathcal{M_S}\}$. 
%When $Z$ does not have finite support, the choice of basis functions may affect the bounds, but 
Below, we show that when $Z$ has a finite support, computing the bounds using finite-dimensional constant splines does not alter the nonparametric bounds. Without loss of generality, assume $X\in\{x_1,...,x_{K_X}\}$ and $Z\in\{z_1,...,z_{K_Z}\}$. For any user-specified partition $\mathcal{V}=\bigcup_{k=1}^{K_V}\mathcal{V}_k$ and $d\in\{0,1\}$, let us consider basis functions $\mathbf{B}_d=\{b_{d,kj}\}$ and $\mathbf{B}_D=\mathbf{B}_{dD}=\{b_{D,kl}\}$, where
\begin{align}\label{constant_splines_B}
b_{d,kj}=1[v\in\mathcal{V}_k,x=x_j],\text{ and }b_{D,kl}=1[v\in\mathcal{V}_k,z=z_l].\end{align}
We refer to any finite-dimensional approximation $\mathbf{B}'\eta_2$ as constant splines if $\mathbf{B}$ are constructed using basis functions in \eqref{constant_splines_B}. For any $m\in\mathcal{M}$, one special constant spline that provides the best mean squared error
approximation for $m$, denoted by $\Lambda m=(\Lambda m_0,\Lambda m_1,\Lambda m_D,\Lambda m_{0D},\Lambda m_{1D})$, is given by
\begin{equation}\label{constant_spline_appx}
\begin{aligned}
\Lambda m_d(v,x)=&\sum_{k=1}^{K_V}\sum_{j=1}^{K_X}\mathbb{E}[m_d(V,X)\mid v\in\mathcal{V}_k,X=x_j]b_{d,kj},\\
\Lambda m_D(v,z)=&\sum_{k=1}^{K_V}\sum_{l=1}^{K_Z}\mathbb{E}[m_D(V,Z)\mid v\in\mathcal{V}_k,Z=z_{l}]b_{D,kl},\\
\Lambda m_{dD}(v,x)=&\sum_{k=1}^{K_V}\sum_{l=1}^{K_Z}\mathbb{E}[m_{dD}(V,Z)\mid v\in\mathcal{V}_k,Z=z_{l}]b_{D,kl},
\end{aligned}
\end{equation}
which corresponds to setting the unknown coefficients in $\eta_2$ to the conditional means of $m_d(V,X)$ given $(V,X)$, and the conditional means of $m_D(V,Z)$ and $m_{dD}(V,Z)$ given $(V,Z)$. The proposition below provides conditions under which the bound $[\underline\beta^\star_a,\overline\beta^\star_a]$, obtained by solving the linear programs defined by \eqref{linear_prom_orig_finite} and \eqref{eq:constrainted_finite_relaxed} over the space of constant splines, is the same as the nonparametric bound $[\underline\beta^\star,\overline\beta^\star]$ defined in \eqref{opt_relax}. This result extends the exact computational approach demonstrated by Proposition 4 of \citet{mogstad/santos/torgovitsky:2017} to more general settings with multidimensional unobserved heterogeneity.
\begin{proposition}\label{prop_finite_dim_approx}
Suppose the support of $Z$ is finite and $\Lambda m\in \mathcal{M}$ for any $m\in\mathcal{M}$. Let $\{\mathcal{V}_k\}_{k=1}^{K_V}$ be a partition of $\mathcal{V}$ such that $\omega^\star_{dd'}(v,z)$ is a constant on $v\in\mathcal{V}_k$ for all $z\in\mathcal{Z}$ and $d,d'=0,1$. Then, we have $\underline\beta^\star=\underline\beta^\star_a$ and $\overline\beta^\star=\overline\beta^\star_a$, where $\underline\beta^\star_a$ and $ \overline\beta^\star_a$ defined in \eqref{linear_prom_orig_finite} and \eqref{eq:constrainted_finite_relaxed} are obtained using finite-dimensional constant spline approximation of $m$. 
\end{proposition}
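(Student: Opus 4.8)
The plan is to prove the two equalities by sandwiching the attainable values of $\Gamma^\star$ from the finite-dimensional program between those of the infinite-dimensional one and vice versa. Write $F$ for the feasible set of \eqref{opt_relax} (all $m\in\mathcal{M}$ satisfying \eqref{cond:Gamma} to \eqref{eq:mccormick_relax4_1} for every $s\in\mathcal{S}$) and $F_a$ for its restriction to the constant splines $m=\mathbf{B}'\eta_2$ used in \eqref{linear_prom_orig_finite}. One inclusion is immediate: a feasible constant spline is a particular element of $\mathcal{M}$ satisfying the same constraints, so $F_a\subseteq F$ and optimizing over the smaller set can only tighten the interval, giving $\underline\beta^\star\le\underline\beta^\star_a$ and $\overline\beta^\star_a\le\overline\beta^\star$. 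The content is in the reverse inclusion.

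For the reverse direction I would show that the constant-spline projection $\Lambda$ from \eqref{constant_spline_appx} maps $F$ into $F_a$ while leaving the objective unchanged, so that every value $\Gamma^\star(m)$ with $m\in F$ is also attained at $\Lambda m\in F_a$. Because $V\mid Z$ is uniform by Assumption \ref{assumption_DGP}(ii), each coordinate of $\Lambda m$ is the average of the corresponding coordinate of $m$ over $\mathcal{V}_k$ (holding $x_j$ or $z_l$ fixed); thus $\Lambda$ is a conditional-expectation operator and (a) preserves cell integrals, $\int_{\mathcal{V}_k}\Lambda g\,dv=\int_{\mathcal{V}_k}g\,dv$ for every coordinate; (b) preserves the ranges $[y_L,y_U]$ and $[0,1]$ defining $\mathcal{M}$, so $\Lambda m\in\mathcal{M}$; and (c) returns a genuine constant spline, because $X$ and $Z$ are finite so $\{b_{d,kj}\}$ and $\{b_{D,kl}\}$ represent exactly any function that is cell-constant in $v$. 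Property (a) immediately transfers \eqref{cond:Gamma}, since $\Gamma_s$ integrates $m_D,m_{0D},m_{1D}$ against the constant weight $1$ and its data side does not involve $m$; and using cell-constancy of $\omega^\star_{dd'}$ to pull the weight out of each cell, $\int_{\mathcal{V}}g(v,\cdot)\omega^\star_{dd'}(v,Z)\,dv=\sum_k\omega^\star_{dd',k}(Z)\int_{\mathcal{V}_k}g\,dv$, property (a) also yields $\Gamma^\star(\Lambda m)=\Gamma^\star(m)$.

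The crux, and where I expect the real work, is verifying that $\Lambda m$ inherits the McCormick inequalities \eqref{eq:mccormick_relax1_0} and \eqref{eq:mccormick_relax4_1}. I would take the bounds $m_d^L,m_d^U,m_D^L,m_D^U$ to be constant in $v$ (the default $y_L,y_U,0,1$, or a partition refined so they are cell-constant while keeping $\omega^\star$ cell-constant). Writing the first inequality in \eqref{eq:mccormick_relax1_0} as a pointwise nonnegative slack $m_{0D}(v,z)-m_0^L(v,x)(1-m_D(v,z))-m_0(v,x)(1-m_D^U(v,z))+m_0^L(v,x)(1-m_D^U(v,z))\ge0$ and averaging over $v\in\mathcal{V}_k$, the cell-constancy of $m_0^L$ and $m_D^U$ lets each term collapse to the same expression in the averages $\Lambda m_{0D},\Lambda m_0,\Lambda m_D$, so the averaged slack is exactly the McCormick slack for $\Lambda m$ and stays $\ge0$. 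The obstacle is precisely this step: if a bound varied within a cell, the average of a cross term such as $m_0 m_D^U$ would not equal the product of the averages, and the inequality need not survive; cell-constant bounds are exactly what removes this. Repeating the averaging for all eight inequalities gives $\Lambda m\in F_a$, hence $\underline\beta^\star_a\le\underline\beta^\star$ and $\overline\beta^\star\le\overline\beta^\star_a$, which together with the easy inclusion delivers $\underline\beta^\star=\underline\beta^\star_a$ and $\overline\beta^\star=\overline\beta^\star_a$.
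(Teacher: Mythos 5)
Your proof is correct and follows essentially the same route as the paper's: the easy inclusion $F_a\subseteq F$ in one direction, and in the other direction the constant-spline projection $\Lambda$ of \eqref{constant_spline_appx}, shown to preserve the objective $\Gamma^\star$, the constraints \eqref{cond:Gamma}, and the McCormick inequalities, so that every attainable value of the infinite-dimensional program is also attained by a constant spline. Your explicit treatment of the McCormick step---observing that \eqref{eq:mccormick_relax1_0} and \eqref{eq:mccormick_relax4_1} survive cell-averaging precisely because the bounds $m_d^L,m_d^U,m_D^L,m_D^U$ are constant in $v$ on each cell, and would not otherwise---is in fact more careful than the paper's own proof, which justifies this inheritance only by remarking that the constraints are ``linear in $m$.''
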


In general, the above finite-dimensional approximation approach can be applied to compute the bounds as long as we approximate $m$ using a linear-in-parameter specification, regardless of the support of $Z$. For example, if $Z$ is a vector of possibly continuous IVs and covariates, we can employ additive linear structures to approximate the unknown functions: $m_d(v,x)=g_d(v)+x'\eta_{d,X}$, $m_D(v,z)=g_D(v)+z'\eta_{D,Z}$, and $m_{dD}(v,z)=g_{dD}(v)+z'\eta_{dD,Z}$ for $d=0,1$, where $\eta_{d,X}$, $\eta_{D,Z}$, and $\eta_{dD,Z}$ are finite-dimensional vectors of unknown parameters. Then, we further approximate the unknown function $g(v)=(g_0(v),g_1(v),g_D(v),g_{0D}(v),g_{1D}(v))$ using finite-dimensional basis functions.\footnote{However, unlike the case with a finite support of $Z$, if $Z$ is continuous, we may need to increase the dimension of the basis functions as the sample size increases, to ensure that the resulting PRTE bounds converge to the nonparametric bounds $[\underline\beta^\star,\overline\beta^\star]$.}

\subsection{Inference}\label{subsection:inference}

In this section, we denote $\mathcal{S}=\{s_1,...,s_{|\mathcal{S}|}\}$ by assuming its cardinality is finite, i.e., $|\mathcal{S}|<\infty$. For illustrative purposes, we do not consider shape restrictions in the presentation of the inference procedure. Nonetheless, these restrictions can be easily incorporated with additional notations.  Below, we demonstrate that the regularized support function method of \citet{gafarov2019inference} can be directly applied in our analysis to obtain uniformly valid confidence sets for the PRTEs. In this section, we closely follow the notations and assumptions from  \citet{gafarov2019inference}. %%%%%%%%?????TU: We did not show anything??????? We verify this claim by showing that our linear programs defined in \eqref{linear_prom_orig_finite} and \eqref{eq:constrainted_finite_relaxed} have an equivalent expression to that considered by  \citet{gafarov2019inference}.

Let us denote by $\Upsilon(\mathbb{P})$ the set of $\eta$ that satisfies all the constraints in \eqref{eq:constrainted_finite_relaxed}: 
$$\Upsilon(\mathbb{P})=\left\{\eta=(\eta_1,\eta_2')'\in\Upsilon:~\eta_1=\Gamma^\star(\mathbf{B}'\eta_2)\text{ and }\mathbf{B}'\eta_2\in \mathcal{M}^r_\mathcal{S}\mbox{ for any given }\mathcal{S}\right\}.$$
Without loss of generality, let us assume that the set $\Upsilon(\mathbb{P})$ is compact. Denote $\mathcal{J}^{eq}$ as a set of $d_{eq}$ equality constraints which include $\eta_1=\Gamma^\star(\mathbf{B}'\eta_2)$ and those given in \eqref{cond:Gamma}. Denote $\mathcal{J}^{ineq}$ as the set of $d_{ineq}$ inequality constraints in \eqref{eq:mccormick_bounds_m} to \eqref{eq:mccormick_relax4_1} that describe the convex relaxation. Since $\Upsilon(\mathbb{P})$ only consists of linear constraints in $\eta$, we can find a random matrix $\mathbf{W}\in\mathbb{R}^{(d_{eq}+d_{ineq})\times (d_\eta+1)}$ that depends only on observed variables, such that $\Upsilon(\mathbb{P})$ is equivalently expressed as a set of $\eta$ that satisfies
\begin{eqnarray}\label{constraints_g}
&&\mathbb{E}[g_j(\mathbf{W},\eta)]=0,~~j\in\mathcal{J}^{eq},\nonumber\\
&&\mathbb{E}[g_j(\mathbf{W},\eta)]\leq0,~~j\in\mathcal{J}^{ineq},\nonumber
\end{eqnarray}
where $g_j(\mathbf{W},\eta)=\sum_{l=1}^{d_\eta}W_{j,l}*\eta_l-W_{j,d_\eta+1}$ for all $j\in\mathcal{J}^{eq}\cup\mathcal{J}^{ineq}$, and $W_{j,l}$ denotes the $(j,l)$-th element in $\mathbf{W}$. The detailed expression of $\mathbf{W}$ is given in Appendix \ref{appendix_W}.
Let us rewrite $$\mathbf{W}=\begin{bmatrix}\mathbf{W}_1^{eq}&\mathbf{W}_2^{eq}\\\mathbf{W}^{ineq}_1&\mathbf{W}^{ineq}_2\end{bmatrix},$$ 
where the four blocks $\mathbf{W}^{eq}_1\in\mathbb{R}^{d_{eq}\times d_\eta}$, $\mathbf{W}^{eq}_2\in\mathbb{R}^{d_{eq}\times 1}$, $\mathbf{W}^{ineq}_1\in\mathbb{R}^{d_{ineq}\times d_\eta}$, and $\mathbf{W}^{ineq}_2\in\mathbb{R}^{d_{ineq}\times 1}$ are compatible with the dimensions of $\mathcal{J}^{eq}$ and $\mathcal{J}^{ineq}$.
Denote $A_{\mathbb{P}}=\mathbb{E}\begin{bmatrix}\mathbf{W}^{eq}_1\\\mathbf{W}^{ineq}_1\end{bmatrix}$ and $\pi_{\mathbb{P}}=\mathbb{E}\begin{bmatrix}\mathbf{W}^{eq}_2\\\mathbf{W}^{ineq}_2\end{bmatrix}$.
Then the linear programs in \eqref{linear_prom_orig_finite} and \eqref{eq:constrainted_finite_relaxed} are equivalent to
\begin{eqnarray}\label{test_stat}
\underline{\beta}_a^\star&=&\min_{\eta\in\Upsilon}e_1'\eta~~~~\text{and}~~~~\overline{\beta}_a^\star~=~\max_{\eta\in\Upsilon}e_1'\eta\\
\text{s.t.}&&~~~~e_j'A_{\mathbb{P}}\eta=e_j'\pi_{\mathbb{P}},~~j\in\mathcal{J}^{eq}\nonumber\\
&&~~~~e_j'A_{\mathbb{P}}\eta\leq e_j'\pi_{\mathbb{P}},~~j\in\mathcal{J}^{ineq},\nonumber
\end{eqnarray}
where $e_j$ is a vector whose $j$-th entry is one and all other entries are zero. The expression of the linear programs in \eqref{test_stat} is the same as that in \citet{gafarov2019inference}. Therefore, we can directly apply his method of the regularized support function to construct uniformly valid confidence intervals for $\left[\underline{\beta}_a^\star,\overline{\beta}_a^\star\right]$. The idea of the method is to add a small regularization term to the objective functions
and consider the following regularized programs which are strictly convex:
\begin{align}\label{regularized_primal}
\underline{\beta}_a^\star(\mu_n)=\min_{\eta\in\Upsilon(\mathbb{P})}\left(e_1'\eta+\mu_n\|\eta\|^2\right)~~\text{and}~~\overline{\beta}_a^\star(\mu_n)=-\min_{\eta\in\Upsilon(\mathbb{P})}\left(-e_1'\eta+\mu_n\|\eta\|^2\right),
\end{align}
where $\|\cdot\|$ denotes the Euclidian norm, and $\mu_n>0$ with $\mu_n\rightarrow0$ as sample size $n\rightarrow\infty$ is a tuning parameter chosen by the researcher.\footnote{Following \citet{gafarov2019inference}, we use the tuning parameter $\overline{\mu}_n=\sqrt{\frac{\log(\log(n))}{n}}\hat{\overline{\mu}}_1\text{ and }\underline{\mu}_n=\sqrt{\frac{\log(\log(n))}{n}}\hat{\underline{\mu}}_1,$ where   $\hat{\overline{\mu}}_1=\frac{(tr\{\hat{var}[\hat{\overline{\lambda}}_n(0)'\mathbf{W}_i]\})^{1/2}}{d_\eta+1}$ and $\hat{\underline{\mu}}_1=\frac{(tr\{\hat{var}[\hat{\underline{\lambda}}_n(0)'\mathbf{W}_i]\})^{1/2}}{d_\eta+1}$, and $\hat{\overline{\lambda}}_n(0)$ and $\hat{\underline{\lambda}}_n(0)$ are the Lagrange multiplier of the nonregularized maximization and minimization problem, respectively.} 

\citet{gafarov2019inference} implements the inference procedure in three steps. 

\textbf{Step 1}.  Construct the sample analog of the regularized programs. Denote $\mathbf{W}_i=\begin{bmatrix}\mathbf{W}_{1i}^{eq}&\mathbf{W}_{2i}^{eq}\\\mathbf{W}^{ineq}_{1i}&\mathbf{W}^{ineq}_{2i}\end{bmatrix}$ as a sample observation of $\mathbf{W}$ for $i=1,...,n$. Define $\hat{A}_{n}=\frac{1}{n}\sum_{i=1}^{n}\begin{bmatrix}\mathbf{W}^{eq}_{1i}\\\mathbf{W}^{ineq}_{1i}\end{bmatrix}$ and $\hat{\pi}_{n}=\frac{1}{n}\sum_{i=1}^{n}\begin{bmatrix}\mathbf{W}^{eq}_{2i}\\\mathbf{W}^{ineq}_{2i}\end{bmatrix}$.
The sample analog of \eqref{regularized_primal} with some sequence $\mu_n$ is given by
\begin{align}\label{regularized_primal_sample}
\hat{\underline{\beta}}_a^\star(\mu_n)=\min_{\eta\in\Upsilon}&\left(e_1'\eta+\mu_n\|\eta\|^2\right)~~\text{and}~~\hat{\overline{\beta}}_a^\star(\mu_n)=-\min_{\eta\in\Upsilon}\left(-e_1'\eta+\mu_n\|\eta\|^2\right)\\
\text{s.t. }&~~~~~~~~~~~e_j'\hat{A}_{n}\eta=e_j'\hat{\pi}_{n},~~j\in\mathcal{J}^{eq}\nonumber\\
&~~~~~~~~~~~e_j'\hat{A}_{n}\eta\leq e_j'\hat{\pi}_{n},~~j\in\mathcal{J}^{ineq}.\nonumber
\end{align}

\textbf{Step 2}. Estimate the asymptotic variance of the regularized support function estimators, $\hat{\underline{\beta}}_a^\star(\mu_n)$ and $\hat{\overline{\beta}}_a^\star(\mu_n)$, denoted by $
\hat{\underline{\sigma}}^2_n(\mu_n)$ and $\hat{\overline{\sigma}}_n^2(\mu_n)$, respectively. For any given $\mu_n$, denote $\hat{\underline{\eta}}^\star_n(\mu_n)$ and $\hat{\overline{\eta}}^\star_n(\mu_n)$ as the solutions to the  regularized programs in \eqref{regularized_primal_sample}. Then, the asymptotic variance can be computed as below:
\begin{align}\label{def_asym_variance_beta}
\hat{\underline{\sigma}}^2_n(\mu_n)=\frac{1}{n}\sum_{i=1}^n\left[\hat{\underline{\lambda}}_n'(\mu_n)g\left(\mathbf{W}_i,\hat{\underline{\eta}}^\star_n(\mu_n)\right)\right]^2\text{ and }\hat{\overline{\sigma}}_n^2(\mu_n)=\frac{1}{n}\sum_{i=1}^n\left[\hat{\overline{\lambda}}_n'(\mu_n)g\left(\mathbf{W}_i,\hat{\overline{\eta}}^\star_n(\mu_n)\right)\right]^2,
\end{align}
where $\hat{\underline{\lambda}}_n(\mu_n)$ and $\hat{\overline{\lambda}}_n(\mu_n)$ are the estimated vectors of Lagrange multipliers of the programs in \eqref{regularized_primal_sample}. 

\textbf{Step 3}.  Compute the bias correction term, the outer bound estimator, and the uniformly valid confidence set. Recall that compared to $\underline{\beta}_a^\star$ and $\overline{\beta}_a^\star$, the regularization term $\mu_n\|\eta\|^2$ introduces an upward and a downward bias in $\hat{\underline{\beta}}_a^\star(\mu_n)$ and $\hat{\overline{\beta}}_a^\star(\mu_n)$, respectively. Below, we introduce two bias correction terms $\mu_n\|\hat{\underline{\eta}}^{out}_n\|^2$ and $\mu_n\|\hat{\overline{\eta}}^{out}_n\|^2$ in Step 3.1, so that the outer bound estimators defined in Step 3.2 are asymptotically unbiased or biased towards the desirable directions.

\begin{itemize}
\item \textbf{Step 3.1}.  Let $\hat{\Upsilon}(\mathbb{P})$ be the sample analog of set $\Upsilon(\mathbb{P})$ defined using $\hat{A}_n$ and $\hat\pi_n$. %\footnote{\textcolor{red}{DELETE? As shown in \citet{gafarov2019inference}, the outer bounds are asymptotically unbiased in the regular cases where both the primal and the dual programs have a unique solution. In nonregular cases where there exist nonsingleton primal or dual solutions, $\hat{\underline{\beta}}_n^{out}(\mu_n)$ is biased downward and $\hat{\overline{\beta}}_n^{out}(\mu_n)$ is biased upward, leading to conservative but valid inference.}} 
Define $$\hat{\underline{\eta}}^{out}_n=(\hat{\underline{\eta}}^{out}_{1,n},...,\hat{\underline{\eta}}^{out}_{d_\eta,n})'\text{ and }\hat{\overline{\eta}}^{out}_n=(\hat{\overline{\eta}}^{out}_{1,n},...,\hat{\overline{\eta}}^{out}_{d_\eta,n})',$$ where we denote $\hat{\underline{\eta}}^{out}_{k,n}=\max\{\hat{\underline{\eta}}_{k,n}^+,\hat{\underline{\eta}}_{k,n}^-\}$ and $\hat{\overline{\eta}}^{out}_{k,n}=\max\{\hat{\overline{\eta}}_{k,n}^+,\hat{\overline{\eta}}_{k,n}^-\}$ for $k=1,...,d_\eta$, with
\begin{align*}
&\hat{\underline{\eta}}_{k,n}^+=\Big|\min_{\eta\in\hat{\Upsilon}(\mathbb{P}),\eta_{1}\leq \hat{\underline{\beta}}_a^{\star}+\mu_n}\{\eta_{k}\}\Big|,~~\hat{\underline{\eta}}_{k,n}^-=\Big|\min_{\eta\in\hat{\Upsilon}(\mathbb{P}),\eta_{1}\leq \hat{\underline{\beta}}_a^{\star}+\mu_n}\{-\eta_{k}\}\Big|,\\
&\hat{\overline{\eta}}_{k,n}^+=\Big|\min_{\eta\in\hat{\Upsilon}(\mathbb{P}),\eta_{1}\geq \hat{\overline{\beta}}_a^{\star}-\mu_n}\{\eta_{k}\}\Big|,~~\hat{\overline{\eta}}_{k,n}^-=\Big|\min_{\eta\in\hat{\Upsilon}(\mathbb{P}),\eta_{1}\geq \hat{\overline{\beta}}_a^{\star}-\mu_n}\{-\eta_{k}\}\Big|,
\end{align*}
where we denote $\hat{\underline{\beta}}_a^\star$ and $\hat{\overline{\beta}}_a^\star$ as the estimators of $\underline{\beta}_a^\star$ and $\overline{\beta}_a^\star$ defined in \eqref{test_stat}. We can see that, with probability approaching one, $\|\hat{\underline{\eta}}^{out}_n\|\geq\|\eta\|$ for any $\eta$ that solves $\min\limits_{\eta\in\hat\Upsilon(\mathbb{P})}e_1'\eta$, and $\|\hat{\overline{\eta}}^{out}_n\|\geq\|\eta\|$ for any $\eta$ that solves $\max\limits_{\eta\in\hat\Upsilon(\mathbb{P})}e_1'\eta$. %\footnote{\textcolor{red}{DELETE? The latter is because $-\hat{\overline{\beta}}_K^\star(\mu_n)=\min\limits_{\eta\in\hat{\Upsilon}_K}(-e_1'\eta+\mu_n\|\eta\|^2)$. Thus, we need that,  with probability approaching one, $\|\hat{\overline{\eta}}^{out}_n\|$ is not smaller than the minimum norm of $\eta$ for any $\eta$ that is in the set $\arg\min\limits_{\eta\in\Upsilon_K}-e_1'\eta=\arg\max\limits_{\eta\in\Upsilon_K}e_1'\eta$.}} 
\item \textbf{Step 3.2}. Given the two bias correction terms, $\mu_n\|\hat{\underline{\eta}}^{out}_n\|^2$ and $\mu_n\|\hat{\overline{\eta}}^{out}_n\|^2$, the outer bound estimators can be defined as $\hat{\underline{\beta}}^{out}(\mu_n)=\hat{\underline{\beta}}_a^\star(\mu_n)-\mu_n\|\hat{\underline{\eta}}^{out}_n\|^2$ and  $\hat{\overline{\beta}}^{out}(\mu_n)=\hat{\overline{\beta}}_a^\star(\mu_n)+\mu_n\|\hat{\overline{\eta}}^{out}_n\|^2$.
\item \textbf{Step 3.3}. Let $c_{1-\alpha}$ be the $1-\alpha$ quantile of $N(0,1)$. Then, a $(1-\alpha)$-confidence interval for $\left[\underline{\beta}_a^\star,\overline{\beta}_a^\star\right]$ can be defined as  
\begin{align}\label{def_conf_interval}\left[\hat{\underline{\beta}}^{out}(\mu_n)-c_{1-\frac{\alpha}{2}}\frac{\underline{\hat{\sigma}}^+_n}{\sqrt{n}},~~\hat{\overline{\beta}}^{out}(\mu_n)+c_{1-\frac{\alpha}{2}}\frac{\hat{\overline{\sigma}}^+_n}{\sqrt{n}}\right],
\end{align}
where $\underline{\hat{\sigma}}^+_n=\hat{\underline{\sigma}}_n(\mu_n)$ and $\hat{\overline{\sigma}}^+_n=\hat{\overline{\sigma}}_n(\mu_n)$ denote the standard deviation of the  asymptotic variance given in \eqref{def_asym_variance_beta}.
\end{itemize}

The following assumptions are employed by \citet{gafarov2019inference} to show the asymptotic properties of the confidence interval in \eqref{def_conf_interval}. We present them here for completeness.

\begin{assumption}\label{assumption_sample}~
\begin{itemize}
\item[(i)] $\Upsilon(\mathbb{P})$ is non-empty and compact. 
\item[(ii)] The IV-like functions in $\mathcal{S}$ are chosen so that $\min\limits_{\mathcal{J}\subseteq\mathcal{J}^{ineq},|\mathcal{J}|=d_\eta-d_{eq}}\kappa_1(\mathbf{J}(A_{\mathbb{P}},\pi_{\mathbb{P}}))>0$ and 
$\min\limits_{\mathcal{J}\subseteq\mathcal{J}^{ineq},|\mathcal{J}|=d_\eta-d_{eq}+1,~\eta\in\Upsilon(\mathbb{P})}\|\mathbf{J}(A_{\mathbb{P}}\eta-\pi_{\mathbb{P}})\|>0$, where $\mathbf{J}=(e_{j_1},...,e_{j_k})'$ is a selection matrix that selects active constraints $\mathcal{J}\cup \mathcal{J}^{eq}=\{j_1,...,j_k\}$ for some subset $\mathcal{J}\subseteq\mathcal{J}^{ineq}$, and $\kappa_1(B)$ is the smallest left-singular-value of a matrix $B$.
\item[(iii)] We observe i.i.d. samples with bounded higher-order moments: $\{\mathbf{W}_i\in\mathbb{R}^{(d_{eq}+d_{ineq})\times (d_\eta+1)},~i=1,...,n\}$ is an i.i.d. sample of $\mathbf{W}$ and 
there exists $\epsilon>0$ and $\overline{M}<\infty$ such that $\mathbb{E}[\|\mathbf{W}_i\|^{2+\epsilon}]<\overline{M}$.
\end{itemize}
\end{assumption}

%Assumption \ref{assumption_inference_nonempty} requires that the model is correctly specified so that $\Upsilon(\mathbb{P})$ is non-empty and compact. Condition (i) in Assumption \ref{assumption_inference_singl} requires that any $d_\eta$ active constraints are not parallel. Condition (ii) excludes the cases where there are more than $d_\eta$ binding constraints for all $\eta\in \Upsilon(\mathbb{P})$. Assumption \ref{assumption_sample} assumes i.i.d. samples with bounded higher-order moments.

Let $\mathcal{P}^\star$ be a collection of $P$ that satisfies Assumptions \ref{assumption_DGP} and \ref{assumption_sample}. The theorem below demonstrates that the $(1-\alpha)$-confidence interval in \eqref{def_conf_interval} covers $\left[\underline{\beta}_a^\star,\overline{\beta}_a^\star\right]$ uniformly across $\mathcal{P}^\star$ with probability at least $1-\alpha$.  

\begin{theorem}[\citealp{gafarov2019inference}]\label{inference_gafarov} Under Assumptions \ref{assumption_DGP} and \ref{assumption_sample}, for $0<\alpha<1/2$, if $\mu_n\rightarrow0$ and $\mu_n\sqrt{n}\rightarrow\infty$, then we have $$\liminf_{n\rightarrow\infty}\inf_{P\in\mathcal{P}^\star}P\left(\left[\underline{\beta}_a^\star,\overline{\beta}_a^\star\right]\subset \left[\hat{\underline{\beta}}^{out}(\mu_n)-c_{1-\frac{\alpha}{2}}\frac{\underline{\hat{\sigma}}^+_n}{\sqrt{n}},~~\hat{\overline{\beta}}^{out}(\mu_n)+c_{1-\frac{\alpha}{2}}\frac{\hat{\overline{\sigma}}^+_n}{\sqrt{n}}\right]\right)\geq 1-\alpha.$$
\end{theorem}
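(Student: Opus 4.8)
The plan is to recognize that the two programs in \eqref{test_stat} are linear in $\eta$, with a linear objective $e_1'\eta$ and affine equality/inequality constraints whose coefficient matrix $A_{\mathbb{P}}$ and right-hand side $\pi_{\mathbb{P}}$ are population expectations of the observable matrix $\mathbf{W}$. Consequently $\underline{\beta}_a^\star$ and $\overline{\beta}_a^\star$ are exactly support-function values of the random polytope $\Upsilon(\mathbb{P})$ in the directions $\mp e_1$, which is precisely the class of estimands analyzed by \citet{gafarov2019inference}. I would therefore structure the proof as a verification-plus-invocation: show that Assumptions \ref{assumption_inference_nonempty}--\ref{assumption_sample} are the maintained conditions of \citet{gafarov2019inference}, and then quote his uniform-coverage theorem. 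Concretely, Assumption \ref{assumption_inference_nonempty} supplies the nonemptiness and compactness of the feasible set; Assumption \ref{assumption_inference_singl}(i), through the positivity of $\kappa_1(\mathbf{J}(A_{\mathbb{P}},\pi_{\mathbb{P}}))$ over all size-$d_\eta$ active sets, is the linear-independence constraint qualification that guarantees a unique basic optimizer and well-defined Lagrange multipliers; Assumption \ref{assumption_inference_singl}(ii) rules out degenerate over-determined binding sets; and Assumption \ref{assumption_sample} provides the i.i.d. sampling and the $2+\epsilon$ moment bound needed for a Lindeberg-type central limit theorem for $(\hat{A}_n,\hat{\pi}_n)$.

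Granting this correspondence, the internal mechanics run as follows. The regularizer $\mu_n\|\eta\|^2$ makes the sample program \eqref{regularized_primal_sample} strictly convex, so its minimizer and the associated multipliers $\hat{\underline{\lambda}}_n(\mu_n),\hat{\overline{\lambda}}_n(\mu_n)$ are uniquely determined and Lipschitz-stable in the estimated constraint data. An envelope (Danskin-type) expansion then represents the centered estimator $\hat{\underline{\beta}}_a^\star(\mu_n)-\underline{\beta}_a^\star(\mu_n)$ as a sample average of the influence function $\hat{\underline{\lambda}}_n(\mu_n)'g(\mathbf{W}_i,\hat{\underline{\eta}}^\star_n(\mu_n))$, which is asymptotically Gaussian with variance consistently estimated by $\hat{\underline{\sigma}}^2_n(\mu_n)$ from Step 2. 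The rate conditions $\mu_n\to0$ and $\mu_n\sqrt{n}\to\infty$ balance the two roles of the regularizer: $\mu_n\sqrt{n}\to\infty$ lets the penalty dominate the sampling noise when the active set is selected, so the first-order expansion is valid, while $\mu_n\to0$ keeps the regularization bias $O(\mu_n)$ asymptotically negligible relative to the $1/\sqrt{n}$ Gaussian fluctuations after the Step 3.2 correction. The outward bias-correction terms $\mu_n\|\hat{\underline{\eta}}^{out}_n\|^2$ and $\mu_n\|\hat{\overline{\eta}}^{out}_n\|^2$ push the two endpoints away from each other so that the interval brackets $[\underline{\beta}_a^\star,\overline{\beta}_a^\star]$ from outside with the nominal probability.

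The main obstacle is uniformity over $\mathcal{P}^\star$ rather than pointwise coverage: one must show that the Gaussian approximation and the multiplier expansion hold along arbitrary sequences $P_n\in\mathcal{P}^\star$, which requires the constraint-qualification quantities in Assumption \ref{assumption_inference_singl} to be bounded away from zero uniformly and the $2+\epsilon$ moment bound of Assumption \ref{assumption_sample} to deliver a uniform (triangular-array) central limit theorem with uniformly controlled remainders. In our framework this step does not call for any new probabilistic argument; it reduces entirely to checking that Assumptions \ref{assumption_inference_nonempty}--\ref{assumption_sample} coincide with the hypotheses imposed in \citet{gafarov2019inference}, after which the stated $\liminf$ coverage bound follows directly from his theorem.
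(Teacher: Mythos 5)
Your proposal takes essentially the same approach as the paper: the paper offers no independent proof of this theorem, but attributes it directly to \citet{gafarov2019inference}, presenting Assumptions \ref{assumption_inference_nonempty}--\ref{assumption_sample} as that paper's maintained conditions ``for completeness,'' so that the result follows by verifying the correspondence of assumptions and invoking the cited uniform-coverage theorem. Your verification-plus-invocation structure, including the sketch of the regularized support-function mechanics and the role of the rate conditions $\mu_n\to 0$, $\mu_n\sqrt{n}\to\infty$, is precisely the intended justification.
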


\section{Numerical Illustration and Monte Carlo Simulation}\label{section:numerical_simulation}

% Remark for ourselves. Our model setup considers a general framework where the weights in the PRTEs can be a function of V. However, in the numerical section, we didn't consider this generality. So there is a discrepancy between the generality of our framework and of our numerical analysis. Allowing for weights to be a function of V, we can consider shape restrictions, such as those assuming separability of all elements in V.

In this section, we numerically illustrate the performance of our proposed convex-relaxation bounds and verify the asymptotic properties in finite sample using Monte Carlo simulations. We omit the covariates $X$ to ease the computation.

\subsection{Data Generating Process and Target Parameters}\label{section:num_DGP}
Consider the data generating process (DGP) for a binary outcome variable:
$$Y_1=1[m_1(V)-\varepsilon_1>0]\mbox{ and }Y_0=1[m_0(V)-\varepsilon_0>0],$$
where $(\varepsilon_0,\varepsilon_1)\sim \mathrm{Uniform}([0,1]^2)$. We consider two different dimensions of the unobserved heterogeneity: (i) $V\sim \mathrm{Uniform}[0,1]$; and (ii) $V=(V_1,V_2)\sim \mathrm{Uniform}([0,1]^2)$. The specifications for $(m_0,m_1)$ are summarized in Table \ref{DGP_m0m1}.

\begin{table}[h!]
\centering
\caption{DGP Designs for Outcome Variable}\label{DGP_m0m1}
\adjustbox{max width=\textwidth}{\begin{tabular}{lllcccc}
\toprule
\multirow{4}{*}{\;\;$V\sim \mathrm{Uniform}[0,1]$} & $m_0(v)=\mathbf{Ber}(v)'\theta_0,$&\;\;\;\;$\theta_0=(0.6,0.4,0.3)'$ \smallskip\\
&$m_1(v)=\mathbf{Ber}(v)'\theta_1,$&\;\;\;\;$\theta_1=(0.75,0.5,0.3)'$\medskip\\
&\multicolumn{2}{l}{$\mathbf{Ber}(v)=(ber_0^{2}(v),ber_1^{2}(v),ber_2^{2}(v))'$ }\smallskip\\
&\multicolumn{2}{l}{\hspace{2cm}with $ber_k^{p}(v)$ the $k$-th Bernstein polynomial of degree $p$}\smallskip\\
\midrule
\multirow{5}{*}{\minitab[l]{$V=(V_1,V_2)$\\\;\;\;\;$\sim \mathrm{Uniform}([0,1]^2$)}} &$m_0(v_1,v_2)=\mathbf{Ber}(v_1,v_2)'\theta_0,$&\;\;\;\;$\theta_0=(0.7,0.5,0.5,0.3,0.2,0.1,0.1,0,0)'$\smallskip\\
&$m_1(v_1,v_2)=\mathbf{Ber}(v_1,v_2)'\theta_1,$&\;\;\;\;$\theta_1=(0.85,0.65,0.5,0.5,0.45,0.3,0.2,0.1,0.1)'$\medskip\\
&\multicolumn{2}{l}{$\mathbf{Ber}(v_1,v_2)=[ber_{00}^{2}(v_1,v_2),...,ber_{02}^{2}(v_1,v_2),...,ber_{20}^{2}(v_1,v_2),...,ber_{22}^{2}(v_1,v_2)]'$}\smallskip\\
&\multicolumn{2}{l}{\hspace{2.5cm}with
$ber_{kk'}^{p}(v_1,v_2)=ber_{k}^{p}(v_1)ber_{k'}^{p}(v_2)$}\smallskip\\
&\multicolumn{2}{l}{\hspace{2.5cm} and $ber_{k}^{p}(v)$ the $k$-th Bernstein polynomial of degree $p$}\smallskip\\
\bottomrule
\end{tabular}}
\end{table}

Consider two mutually independent instruments $Z=(Z_1,Z_2)\in\mathcal{Z}$, where $\mathcal{Z}=\{0,1,2\}\times\{0.5,1\}$ with $\mathbb{P}(Z_1=0)=0.5$ and $  \mathbb{P}(Z_1=1)=0.4$, and $\mathbb{P}(Z_2=0.5)=0.7$. We generate the treatment variable using a random coefficient model:
\begin{align}\label{simulation_DGP_treatment_random}
D=1[Q(Z,U,V)\geq 0],
\end{align}
where $U\sim N(0,var(U))$ with $sd(U)\in\{0.1,0.5,0.9\}$. In addition, $Z$, $(\varepsilon_0,\varepsilon_1)$, $U$ and $V$ are mutually independent. Let
\begin{align*}
\begin{cases}
Q(z,u,v)=0.2z_1+uz_2-v, & \mbox{if $v$ is  single-dimensional } \\
Q(z,u,v)=(0.6-v_1)z_1+uz_2-0.5v_2, & \mbox{if  $v=(v_1,v_2)$ is two-dimensional}.
\end{cases}
\end{align*}
In the numerical illustration and the simulation studies, we consider three target parameters: (i) ATE$=\mathbb{E}[Y_1-Y_0]$, (ii) ATT$=\mathbb{E}[Y_1-Y_0\mid D=1]$, and (iii) the average selection bias ASB$=\mathbb{E}[Y_0\mid D=1]-\mathbb{E}[Y_0\mid D=0]$.

%where, note that if $Z=(Z_1,Z_2)$, we have $z-\mathbb{E}[Z]=\{z_1-\mathbb{E}[Z_1],z_2-\mathbb{E}[Z_2]\}$. \textcolor{red}{DELETE?  For the random coefficient model, $Z=(Z_1,Z_2)$ is two-dimensional and we use the four sets of IV-like functions
%\begin{align*}\mathcal{S}^1=&\left\{d-\mathbb{E}[D]\right\},\\
%\mathcal{S}^2=&\left\{z_1-\mathbb{E}[Z_1],z_2-\mathbb{E}[Z_2],\;d-\mathbb{E}[D]\right\},\\ \mathcal{S}^3=&\left\{z_1-\mathbb{E}[Z_1],z_2-\mathbb{E}[Z_2],\;d-\mathbb{E}[D],\;(z_1-\mathbb{E}[Z_1])*(d-\mathbb{E}[D]),\;(z_2-\mathbb{E}[Z_2])*(d-\mathbb{E}[D])\right\},\\ \mathcal{S}^4=&\left\{1[d=d',z_1=z'_1,z_2=z'_2]\text{ with }d'=0,1,\;z'_1=0,1,2,\;z'_2=0.5,1\right\}.
%\end{align*}
%}

\subsection{Numerical Illustration}\label{section:numerical}
In this section, we numerically illustrate the performance of our convex-relaxation bounds, starting from the cases without any shape restrictions. 
For all three target parameters, we compare our method with  three existing bounding methods:
\begin{itemize}
\item[] ``\textbf{Manski}'' -- the nonparametric bounds with instruments \citep{manski1990nonparametric};
\item[] ``\textbf{HV}'' -- the nonparametric bounds with instruments assuming the threshold-crossing structure in the treatment variable \citep{heckman/vytlacil:2001};
\item[]``\textbf{MST}'' -- the bounds assuming the threshold-crossing structure in the treatment variable, obtained via a linear programming method \citep*{mogstad/santos/torgovitsky:2017};
\item[]
``\textbf{CvR}'' -- the bounds proposed in this paper, obtained by applying convex relaxation and a linear programming method.
\end{itemize} 
Specifically, both Manski and HV derive analytical bounds for $\mathbb{E}[Y_0]$ and $\mathbb{E}[Y_1]$, which only depend on the distributions of observed variables, $(Y,D,Z)$. With these bounds, we can compute the Manski and HV bounds for the three target parameters, as all of them can be expressed as some linear combinations of $\mathbb{E}[Y_0]$ and $\mathbb{E}[Y_1]$:
\begin{align*}
    ATE=&\mathbb{E}[Y_1]-\mathbb{E}[Y_0],\\
    ATT=&\mathbb{E}[Y_1-Y_0\mid D=1]=\mathbb{E}[Y\mid D=1] -(\mathbb{E}[Y_0]-\mathbb{E}[Y(1-D)])/\mathbb{P}(D=1),\\
    ASB=&\mathbb{E}[Y_0\mid D=1]-\mathbb{E}[Y_0\mid D=0]=(\mathbb{E}[Y_0]-\mathbb{E}[Y(1-D)])/\mathbb{P}(D=1)-\mathbb{E}[Y\mid D=0].
\end{align*}
On the other hand, MST and CvR bounds are both computed using linear programming methods. In particular, we calculate MST bounds using all the IV-like functions in $\tilde{\mathcal{S}}=\mathbf{L}^2(\{0,1\}\times\mathcal{Z},\mathbb{R})$.
%\footnote{In the data generating process of this section, the support $\mathcal{Z}$ is finite, so we only need to use the indicator functions for each point in $\{0,1\}\times\mathcal{Z}$ as the IV-like function. Moreover, we follow the implementation with finite-dimensional constant splines in Footnote 8 of \citet{mogstad/santos/torgovitsky:2017}. 
% We can use constant splines to represent the unknown functions $\tilde{m}_0,\tilde{m}_1$ and restrict $\tilde{m}_0,\tilde{m}_1$ to be between zero and one. The constant splines of MST bounds are constructed following the idea outlined in Footnote 8 of \citet{mogstad/santos/torgovitsky:2017}. Specifically, create a partition $\{\mathcal{U}_k\}_{k=1}^{K_V}$ of $[0,1]$, where $$\mathcal{U}_1=[u_0,u_1]\text{ and }\mathcal{U}_k=(u_{k-1},u_k]\text{ for all } k\geq 2, $$
% and $\{u_k\}_{k=0}^{K_V}$ are the ordered unique elements in $\{0,1\}\cup \{p:~p=\mathbb{P}(D=1|Z=z)\text{ for all }z\in\mathcal{Z}\}$. 
% \textcolor{red}{TU: THE SET NOTATION IS WIERD. THE DEFINITION OF $\{u_k\}_{k=0}^{K_V}$ SEEMS NOT WELL DEFINED? CONSTANT SPLINES=STEP FUNCTIONS? } Then, we construct constant splines for MST bounds using the basis functions $1[u\in\mathcal{U}_k]$ for $k=1,...,K_V$. 
% \textcolor{red}{TU: WE NEED TO DOUBLECHECK THIS POINT. IS IT WHAT MOGSTAD SUGGESTED? THE EMPTY-SET RESULTS ARE ROBUST TO A CHOICE OF BASIS FUNCTIONS?}
%}

For CvR bounds, we implement the linear programming computation outlined in Section \ref{section:computation}, and we set $m_d^L(v,x)=m_D^L(v,z)=0$ and $m_d^U(v,x)=m_D^U(v,z)=1$ for the McCormick convex relaxation. %\footnote{Numerical results not reported here show that the CvR bounds for all three target parameters do not depend on the assumed dimension of $V$. This is aligned with the findings in Proposition \ref{prop_reduce_dim}, because $\omega^\star_{dd'}(v,z)$, $m_d^L(v,x)$, $m_d^U(v,x)$, $m_D^L(v,z)$ and $m_D^U(v,z)$ all do not depend on $v$ for all $z\in\mathcal{Z}$, $d,d'=0,1$, and for all three target parameters.}  
We adopt the IV-like functions in $\mathcal{S}=\mathbf{L}^2(\mathcal{Z},\mathbb{R})$,  which exhausts all available information from the observed data.\footnote{In the data generating process of this section, the support $\mathcal{Z}$ is finite, so we only need to use the indicator functions for each point in $\mathcal{Z}$ as the IV-like function.}
%\footnote{Numerical results not reported here show that the CvR bounds for all three target parameters remain invariant across different partitions of the support of $V$. This is aligned with the findings in Proposition \ref{prop_finite_dim_approx}, because $Z$ is discrete, and for all three target parameters, their weights $\omega^\star_{dd'}(v,z)$ do not depend on $v$ for all $z\in\mathcal{Z}$ and $d,d'=0,1$.} 
Because $\omega^\star_{dd'}(v,z)$, $m_d^L(v,x)$, $m_d^U(v,x)$, $m_D^L(v,z)$ and $m_D^U(v,z)$ all do not depend on $v$ for all three target parameters, according to Proposition \ref{prop_reduce_dim}, we do not need to know the true dimension of $V$ and only need to bound the integral of the function $m$ in Eq.\eqref{prop_reduce_dim_fun},  instead of the function $m$ itself. 
Thus, for both the single-dimensional and two-dimensional $V$ cases,  we use  $\mathbf{B}_d=1$ for $\int_{\mathcal{V}}m_d(v)dv$ with $d=0,1$, and $\mathbf{B}_D=\mathbf{B}_{dD}=\{1[z=l]\}_{l\in\mathcal{Z}}$ for $\int_{\mathcal{V}}m_D(v,z)dv$ and $\int_{\mathcal{V}}m_{dD}(v,z)dv$.

\bigskip

\noindent\textbf{Bounds without Shape Restrictions}. \quad
Numerical results of the four bounding methods are summarized in Table \ref{tab:num_random_coef}.
There are a few points worth mentioning. 
%\footnote{Results not reported here show that when the true dimension of $V$ is two, our CvR method, implemented using basis functions assuming a single dimension $V$, yields the same bounds as when assuming a two dimensional $V$.} 
First, we know that the assumptions required by HV bounds, which include the threshold-crossing structure, are stronger than those required by Manski bounds.
However, for ATE, its Manski bounds are narrower than or equal to HV bounds across all DGP designs, as the threshold-crossing structure does not hold. This highlights that imposing more structural assumptions does not always lead to narrower bounds. Whereas for ATT and ABS, Manski bounds and HV bounds coincide with each other. Second, due to model misspecification, there are no solutions to the linear programming problems of the MST method, resulting in empty MST bounds for all three target parameters in all DGP designs.\footnote{The empty MST bounds are an indication of the failure of their assumption (i.e., the threshold-crossing structure). However, it is not vice versa, that is, the MST bounds can be non-empty under other DGPs even if the threshold-crossing structure fails.} 
Third, our CvR bounds for all three target parameters are the same as their Manski bounds across all DGPs designs. These results demonstrate the consequences of model misspecification in the MST method and the informativeness of our CvR bounds when the threshold-crossing structure fails to hold.

%Third, for PRTE under the local departure model (Table \ref{tab:num_local_dep_prte}), both MST and CvR methods achieve point identification in all DGP designs. However, for the bounds of PRTE under random coefficient models (Table \ref{tab:num_random_coef_prte}),  in all DGP designs, model misspecification leads to empty MST bounds, whereas our CvR method performs well and point identifies the true PRTE values.

\bigskip

\noindent\textbf{Bounds with Shape Restrictions}. \quad Next, we evaluate our CvR method under shape restrictions. First, we consider the restrictions in \eqref{MTR_implied_restrictions}, which are implied by the monotone treatment response (MTR) in Condition \ref{condition:mono_resp}. Define %Because of $m_D(v,z)\geq 0$ for any $(v,z)\in{\mathcal{V}}\times \mathcal{Z}$, MTR also implies that $m_{1D}(v,z)-m_{0D}(v,z)\geq0$ holds for all $(v,z)\in{\mathcal{V}}\times \mathcal{Z}$. Therefore, let us define the set of function $m$ satisfying MTR as 
\begin{align*}
%(MTR\text{ at mean})~~\mathcal{R}^1=\{\mathbb{E}[m_1(V)-m_0(V)]\geq0\}.
%&(MTR\text{ for all }v)~~~\mathcal{R}^2=\{m_1(v)-m_0(v)\geq0\text{ and }m_{1D}(v,z)-m_{0D}(v,z)\geq0\text{ for }\forall (v,z)\in{\mathcal{V}}\times \mathcal{Z}\},\\
(MTR)~~\mathcal{R}^1=\Big\{m\in\mathcal{M}:~\mathbb{E}[m_0(V,x)]& \leq \inf_{z_0\in\mathcal{Z}_0}\mathbb{E}[Y\mid X=x,Z_0=z_0],\text{ and }\\
   \sup_{z_0\in\mathcal{Z}_0}&\mathbb{E}[Y\mid X=x,Z_0=z_0]\leq\mathbb{E}[m_1(V,x)]\Big\}.
\end{align*}
Second, we consider the shape restrictions in \eqref{MTS_constraints}, which are implied by the monotone treatment selection (MTS) in Condition \ref{condition:mono_select}. Define
\begin{align*}
(MTS)~~\mathcal{R}^2=\Big\{m\in\mathcal{M}:~\mathbb{E}[Y\mid D=1]\mathbb{P}(D=0)&\geq \mathbb{E}[m_1(V)-m_{1D}(V,Z))],\text{ and }\\
\mathbb{E}[Y\mid D=0]\mathbb{P}(D=1)&\leq \mathbb{E}[m_0(V)-m_{0D}(V,Z)]\Big\}.
\end{align*}
Third, let us consider the shape restriction in \eqref{MSG_constraints}, which is implied by the monotone selection on the gains (MSG) in Condition \ref{condition:mono_select_gain}. Define
\begin{align*}
(MSG)~~\mathcal{R}^3=\Big\{m\in\mathcal{M}:~\mathbb{E}[Y\mid D=1]&+\mathbb{E}[Y\mid D=0]\\
~\geq~&\mathbb{E}\Big[\int_{{\mathcal{V}}}(m_{1}(v)-m_{1D}(v,Z))dv\Big]/\mathbb{P}(D=0)\\
&+\mathbb{E}\Big[\int_{{\mathcal{V}}}(m_{0}(v)-m_{0D}(v,Z))dv\Big]/\mathbb{P}(D=1)\Big\}.
\end{align*}
The last shape restriction is a combination of MTR, MTS, and MSG, defind as $$\mathcal{R}^4=\mathcal{R}^1\cap\mathcal{R}^2\cap\mathcal{R}^3.$$ 
Note that all these four sets of shape restrictions are satisfied by our DGPs.%\textcolor{red}{DELETE? For two-dimensional $V=(V_1,V_2)$ and $Z=(Z_1,Z_2)$, the sign restriction becomes $m_1(v)-m_0(v)>0$ for all $v=(v_1,v_2)\in\mathcal{V}_1\times \mathcal{V}_2$ and the stochastic monotonicity restriction is that $m_D(v,z)$ with $z=(z_1,z_2)$ is increasing in $z_1$ for all $(v,z_2)\in{\mathcal{V}}\times \mathcal{Z}_2$. Their implied restrictions on $g=(m_0,m_1)$ can be adapted accordingly.}

The resulting CvR bounds for the three target parameters under these shape restrictions are presented in Tables \ref{tab:num_ATE_random_shape}, \ref{tab:num_ATT_random_shape}, and  \ref{tab:num_ABS_random_shape}. Notably, compared to the CvR bounds without restrictions, the restrictions implied by MTR, as in $\mathcal{R}^1$, significantly reduce the CvR bound size for all three target parameters and identify the signs of both ATE and ATT. Furthermore, the restrictions implied by MTS, as in $\mathcal{R}^2$, improve the upper CvR bounds for ATE and ATT and identify the sign of ASB. Compared to MTR and MTS, the restrictions imposed by MSG, as in $\mathcal{R}^3$, yield relatively smaller improvements to the CvR bounds for all three target parameters. Finally, by combining all three shape restrictions,  as in $\mathcal{R}^4$, we obtain the tightest CvR bounds, which coincide with the intersection of the bounds under $\mathcal{R}^1$ to $\mathcal{R}^3$.

\begin{table}[htbp!]
\begin{center}
\caption{Comparison of Bounds without Shape Restrictions}\label{tab:num_random_coef}
\subcaption{Target parameter: ATE}
\adjustbox{max width=\textwidth}{
\begin{tabular}{lcccccccc}
\toprule
&\multicolumn{3}{c}{\textbf{Single-dimensional $V$}}&&\multicolumn{3}{c}{\textbf{Two-dimensional $V$}}\\
\cline{2-4}\cline{6-8}
&$sd(U)=0.1$&$sd(U)=0.5$&$sd(U)=0.9$&&$sd(U)=0.1$&$sd(U)=0.5$&$sd(U)=0.9$\\
True value& 0.083&0.083&0.083&&0.139&0.139&0.139\\
\toprule
Manski &[-0.181, 0.437]&[-0.228, 0.439] &[-0.269, 0.451]&&[-0.031, 0.534]&[-0.145, 0.564]& [-0.231, 0.570]\\
HV&[-0.183, 0.437] &[-0.229, 0.439]&[-0.269, 0.451]&&[-0.031, 0.534] &[-0.155, 0.564]&[-0.247, 0.570]\\
MST &$\emptyset$&$\emptyset$&$\emptyset$&& $\emptyset$& $\emptyset$& $\emptyset$\\
CvR& [-0.181, 0.437]&[-0.228, 0.439]&[-0.269, 0.451]&& [-0.031, 0.534] &[-0.145, 0.564]&[-0.231, 0.570]\\
\bottomrule
\end{tabular}}
\medskip
\subcaption{Target parameter: ATT}
\adjustbox{max width=\textwidth}{
\begin{tabular}{lcccccccc}
\toprule
&\multicolumn{3}{c}{\textbf{Single-dimensional $V$}}&&\multicolumn{3}{c}{\textbf{Two-dimensional $V$}}\\
\cline{2-4}\cline{6-8}
&$sd(U)=0.1$&$sd(U)=0.5$&$sd(U)=0.9$&&$sd(U)=0.1$&$sd(U)=0.5$&$sd(U)=0.9$\\
True value&0.135&0.120&0.107&&0.143&0.144&0.142\\
\toprule
Manski & [0.073, 0.223]& [-0.102, 0.381]&[-0.212, 0.429]&&[0.021, 0.208]& [-0.291,    0.332]&  [-0.429, 0.369]\\
HV & [0.073, 0.223]& [-0.102, 0.381]&[-0.212, 0.429]&&[0.021, 0.208]& [-0.291,    0.332]&  [-0.429, 0.369]\\
MST & $\emptyset$ & $\emptyset$ & $\emptyset$ &&$\emptyset$ & $\emptyset$ & $\emptyset$\\
CvR& [0.073, 0.223]& [-0.102, 0.381]&[-0.212, 0.429]&&[0.021, 0.208]& [-0.291,    0.332]&  [-0.429, 0.369]\\
\bottomrule
\end{tabular}}
\medskip
\subcaption{Target parameter: ABS}
\adjustbox{max width=\textwidth}{
\begin{tabular}{lcccccccc}
\toprule
&\multicolumn{3}{c}{\textbf{Single-dimensional $V$}}&&\multicolumn{3}{c}{\textbf{Two-dimensional $V$}}\\
\cline{2-4}\cline{6-8}
&$sd(U)=0.1$&$sd(U)=0.5$&$sd(U)=0.9$&&$sd(U)=0.1$&$sd(U)=0.5$&$sd(U)=0.9$\\
True value&0.134&0.097&0.067&&0.203&0.130&0.087\\
\toprule
Manski & [0.046, 0.196]& [-0.165, 0.319]& [-0.256, 0.385]&& [0.138, 0.324]&[-0.059,    0.564]&[-0.140, 0.659]\\
HV & [0.046, 0.196]& [-0.165, 0.319]& [-0.256, 0.385]&&[0.138, 0.324]&[-0.059,    0.564]&[-0.140, 0.659]\\
MST & $\emptyset$ & $\emptyset$ & $\emptyset$ &&$\emptyset$ & $\emptyset$ & $\emptyset$\\
CvR&[0.046, 0.196]& [-0.165, 0.319]& [-0.256, 0.385]&&[0.138, 0.324]&[-0.059,    0.564]&[-0.140, 0.659]\\
\bottomrule
\end{tabular}}
\end{center}
\footnotesize
Note: This table compares four bounds (Manski, HV, MST, and CvR) for three target parameters: ATE, ATT, and ABS. In the true DGP, the treatment variable is generated by a random coefficient model with either a single- or two-dimensional unobserved heterogeneity $V$. The `True value' in the second row represents the value of each target parameter under the true DGP. The four bounds are calculated as described in Section \ref{section:numerical}. 
$\emptyset$ represents an empty set.
\end{table}

\begin{table}[htbp!]
\begin{center}
\caption{CvR Bounds with Shape Restrictions for ATE}\label{tab:num_ATE_random_shape}
\subcaption{\textbf{Single-dimensional $V$}}
\adjustbox{max width=\textwidth}{\begin{tabular}{cccccccccc}
\toprule
&  &  &$\mathcal{R}^1$&$\mathcal{R}^2$&$\mathcal{R}^3$&$\mathcal{R}^4$\\
CvR&True value&No restriction&  (MTR)& (MTS)& (MSG)&($\mathcal{R}^1\cap\mathcal{R}^2\cap\mathcal{R}^3$)\\
\toprule
$sd(U)= 0.1$&0.083 &[-0.181, 0.437]&[0.048, 0.437]&[-0.181, 0.263]&[-0.181, 0.223]&[0.048, 0.223]\\
$sd(U)= 0.5$&0.083&[-0.228, 0.439] & [0.034, 0.439]&[-0.228, 0.217]&    [-0.228, 0.381]&[0.034, 0.217]\\
$sd(U)= 0.9$&0.083&[-0.269, 0.451]&[0.025, 0.451]&[-0.269, 0.174]&  [-0.269, 0.429]&[0.025, 0.174]\\
\bottomrule
\end{tabular}
}
\medskip
\subcaption{\textbf{Two-dimensional $V$}}
\adjustbox{max width=\textwidth}{\begin{tabular}{cccccccccc}
\toprule
&  &  &$\mathcal{R}^1$&$\mathcal{R}^2$&$\mathcal{R}^3$&$\mathcal{R}^4$\\
CvR&True value&No restriction&  (MTR)& (MTS)& (MSG)&($\mathcal{R}^1\cap\mathcal{R}^2\cap\mathcal{R}^3$)\\
\toprule
$sd(U)= 0.1$&0.139 &[-0.031, 0.534]&[0.060, 0.534]&[-0.031, 0.316]&[-0.031, 0.208]& [0.060, 0.208]\\
$sd(U)= 0.5$& 0.139&[-0.145, 0.564]& [0.038, 0.564]&[-0.145, 0.273]&[-0.078, 0.332]& [0.038, 0.273]\\
$sd(U)= 0.9$&0.139&[-0.231, 0.570]&[0.025, 0.570]& [-0.231, 0.230]&[-0.112, 0.369]&[0.025, 0.230]\\
\bottomrule
\end{tabular}
}
\end{center}
\footnotesize
Note: This table displays the CvR bounds for ATE under four shape restrictions. Column “True value” presents the ATE value under the true DGP. Column “No restrictions” gives the CvR bounds without any shape restrictions. Columns $\mathcal{R}^1$ to $\mathcal{R}^4$ list the CvR bounds under the corresponding shape restrictions described in Section \ref{section:numerical}.
\end{table}

\begin{table}[htbp!]
\begin{center}
\caption{CvR Bounds with Shape Restrictions for ATT}\label{tab:num_ATT_random_shape}
\subcaption{\textbf{Single-dimensional $V$}}
\adjustbox{max width=\textwidth}{\begin{tabular}{ccccccccc}
\toprule
&  &  &$\mathcal{R}^1$&$\mathcal{R}^2$&$\mathcal{R}^3$&$\mathcal{R}^4$\\
CvR&True value&No restriction&  (MTR)& (MTS)& (MSG)&($\mathcal{R}^1\cap\mathcal{R}^2\cap\mathcal{R}^3$)\\
\toprule
$sd(U)= 0.1$& 0.135&[0.073, 0.223]&[0.113, 0.223]&[0.073, 0.223]&[0.073, 0.223]&[0.113, 0.223]\\
$sd(U)= 0.5$&0.120&[-0.102, 0.381] &  [0.056, 0.381]&[-0.102, 0.217]&    [-0.102, 0.381]&[0.056, 0.217]\\
$sd(U)= 0.9$& 0.107&[-0.212, 0.429]&[0.032, 0.429]&[-0.212, 0.174]&  [-0.212, 0.429]&[0.032, 0.174]\\
\bottomrule
\end{tabular}
}
\medskip
\subcaption{\textbf{Two-dimensional $V$}}
\adjustbox{max width=\textwidth}{\begin{tabular}{ccccccccc}
\toprule
&  &  &$\mathcal{R}^1$&$\mathcal{R}^2$&$\mathcal{R}^3$&$\mathcal{R}^4$\\
CvR&True value&No restriction&  (MTR)& (MTS)& (MSG)&($\mathcal{R}^1\cap\mathcal{R}^2\cap\mathcal{R}^3$)\\
\toprule
$sd(U)= 0.1$ &0.143&[0.021, 0.208] &[0.114, 0.208]&[0.021, 0.208]&[0.021, 0.208]&[0.114, 0.208]\\
$sd(U)= 0.5$ & 0.144&[-0.291, 0.332]&[0.051, 0.332]& [-0.291, 0.273]&[-0.078, 0.332]& [0.051, 0.273]\\
$sd(U)= 0.9$&0.142&[-0.429, 0.369]&[0.027, 0.369]& [-0.429, 0.230]&[-0.112, 0.369]& [0.027, 0.230]\\
\bottomrule
\end{tabular}
}
\end{center}
\footnotesize
Note: This table displays the CvR bounds for ATT under four shape restrictions. Column “True value” presents the ATE value under the true DGP. Column “No restrictions” gives the CvR bounds without any shape restrictions. Columns $\mathcal{R}^1$ to $\mathcal{R}^4$ list the CvR bounds under the corresponding shape restrictions described in Section \ref{section:numerical}.
\end{table}

\begin{table}[htbp!]
\begin{center}
\caption{CvR Bounds with Shape Restrictions for ASB}\label{tab:num_ABS_random_shape}
\subcaption{\textbf{Single-dimensional $V$}}
\adjustbox{max width=\textwidth}{\begin{tabular}{ccccccccc}
\toprule
&  &  &$\mathcal{R}^1$&$\mathcal{R}^2$&$\mathcal{R}^3$&$\mathcal{R}^4$\\
CvR&True value&No restriction&  (MTR)& (MTS)& (MSG)&($\mathcal{R}^1\cap\mathcal{R}^2\cap\mathcal{R}^3$)\\
\toprule
$sd(U)= 0.1$& 0.134&[0.046, 0.196]&[0.046, 0.156]&[0.046, 0.196]&[0.046, 0.196]&[0.046, 0.156]\\
$sd(U)= 0.5$& 0.097&[-0.165, 0.319]& [-0.165, 0.161]&[0.000, 0.319]&    [-0.165, 0.319]&[0.000, 0.161]\\
$sd(U)= 0.9$& 0.067&[-0.256, 0.385]&[-0.256, 0.142]&[0.000, 0.385]&  [-0.256, 0.385]&[0.000, 0.142]\\
\bottomrule
\end{tabular}
}
\medskip
\subcaption{\textbf{Two-dimensional $V$}}
\adjustbox{max width=\textwidth}{\begin{tabular}{ccccccccc}
\toprule
&  &  &$\mathcal{R}^1$&$\mathcal{R}^2$&$\mathcal{R}^3$&$\mathcal{R}^4$\\
CvR&True value&No restriction&  (MTR)& (MTS)& (MSG)&($\mathcal{R}^1\cap\mathcal{R}^2\cap\mathcal{R}^3$)\\
\toprule
$sd(U)= 0.1$& 0.203& [0.138, 0.324]&[0.138, 0.231]&[0.138, 0.324]&[0.138, 0.324]&[0.138, 0.231]\\
$sd(U)= 0.5$& 0.130& [-0.059, 0.564]& [-0.059, 0.222]&[0.000, 0.564]& [-0.059, 0.351]&[0.000, 0.222]\\
$sd(U)= 0.9$& 0.087& [-0.140, 0.659]&[-0.140, 0.202]& [0.000, 0.659]& [-0.140, 0.341]&[0.000, 0.202]\\
\bottomrule
\end{tabular}
}
\end{center}
\footnotesize
Note: This table displays the CvR bounds for ASB under four shape restrictions. Column “True value” presents the ATE value under the true DGP. Column “No restrictions” gives the CvR bounds without any shape restrictions. Columns $\mathcal{R}^1$ to $\mathcal{R}^4$ list the CvR bounds under the corresponding shape restrictions described in Section \ref{section:numerical}.
\end{table}

\newpage
\subsection{Monte Carlo Simulations}

To check the finite sample performance of the CvR bounds, we conduct Monte Carlo simulations following the \citet{gafarov2019inference} inference method described in Sections \ref{subsection:inference}. The DGPs, target parameters, IV-like functions, and known boundaries $m^L_d$, $m^U_d$, $m^L_D$, and $m^U_D$ in the convex relaxation are the same as those used to generate the numerical results summarized in Table \ref{tab:num_random_coef}. In this section, we do not consider shape restrictions. % The sample sizes are $n=1000,$ 2000, and 3000, and the number of replications is $M=1000$. For computational simplicity, we construct the constant splines using the partition $\{\mathcal{V}_k\}_{k=1}^2$ with $\mathcal{V}_1=[0, 0.5]$ and $ \mathcal{V}_2=(0.5, 1]$ if $V$ is single-dimensional, and the partition $\{\mathcal{V}_k\}_{k=1}^4 $ with $\mathcal{V}_1=[0, 0.5]\times[0, 0.5]$, $ \mathcal{V}_2=[0, 0.5]\times(0.5, 1]$, $\mathcal{V}_3=(0.5, 1]\times[0, 0.5]$, and $ \mathcal{V}_4=(0.5, 1]\times(0.5, 1]$ if $V$ is two-dimensional.
We present the coverage rates of the 95\% confidence interval for ATE, ATT, and ASB in Figures \ref{fig:random_coverage_ATE}, \ref{fig:random_coverage_ATT}, and \ref{fig:random_coverage_ABS}, respectively. These results demonstrate that the asymptotic properties of the inference method, as described in Theorem \ref{inference_gafarov}, hold true in finite samples. Importantly, the confidence intervals exhibit reasonable width and cover the true CvR bounds with a probability of at least 95\%. As the sample size increases, the confidence intervals become narrower, and they exclude values outside the true CvR bounds with an increasing probability.

\begin{figure}[htbp!]
\caption{Coverage Rates of the 95\% Confidence Interval for ATE}\label{fig:random_coverage_ATE}
\subcaption{Single-dimensional $V$}
\includegraphics[width=\linewidth]{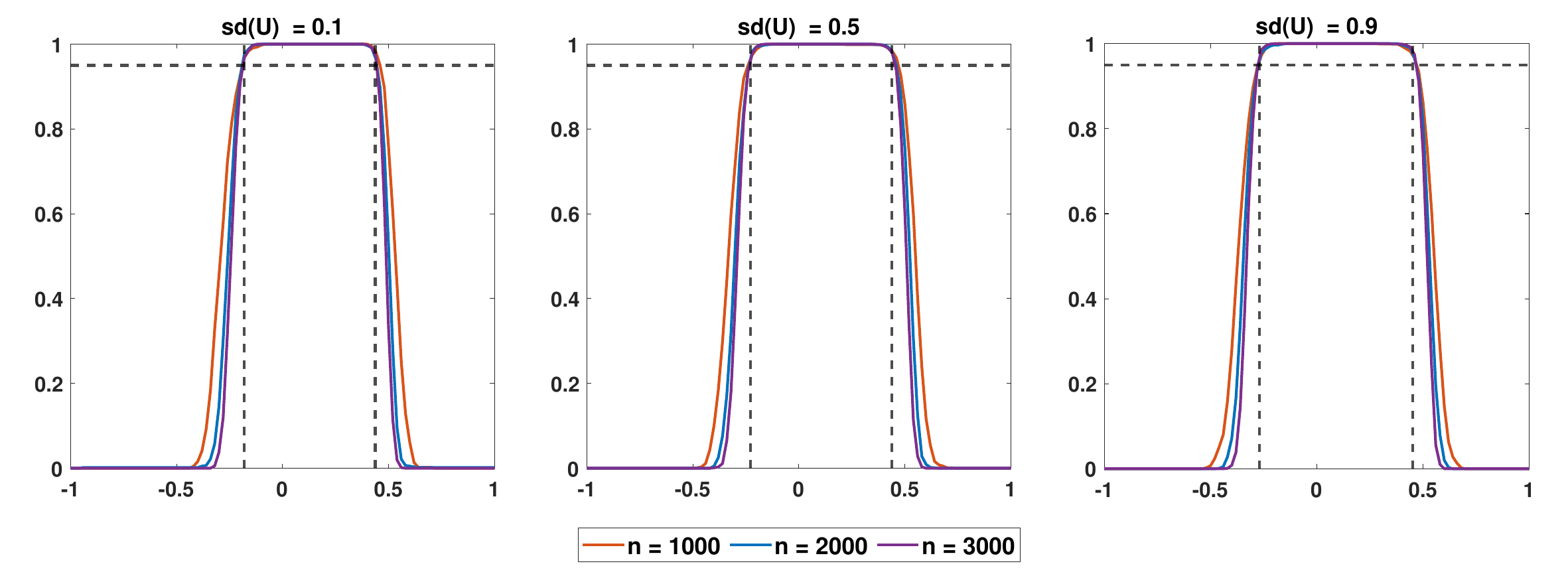}
\subcaption{Two-dimensional $V$}
\includegraphics[width=\linewidth]{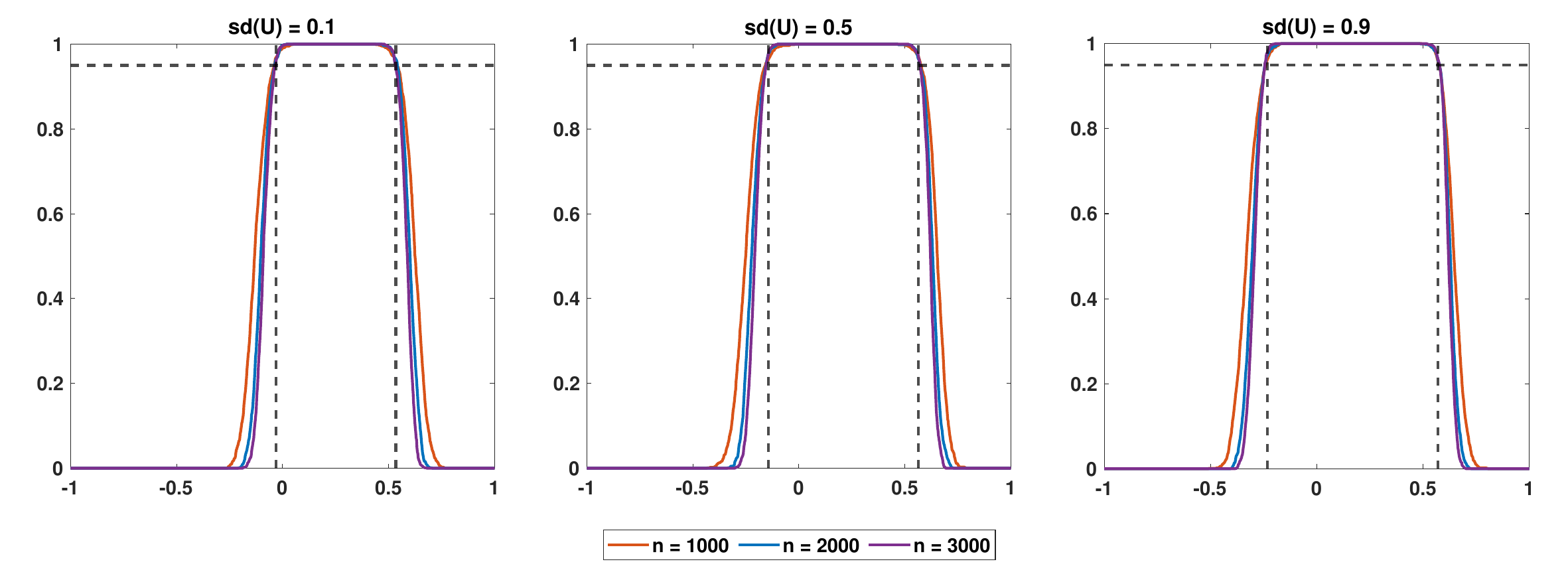}
\footnotesize Note: This figure plots the coverage rates of the 95\% confidence interval for the target parameter ATE, obtained using the regularized support function estimation method of \citet{gafarov2019inference}. We plot the coverage rates using solid curves and the true CvR bounds using dashed vertical lines. The dashed horizontal line marks the 95\% confidence level.
\end{figure}

\begin{figure}[htbp!]
\caption{Coverage Rates of the 95\% Confidence Set for ATT}\label{fig:random_coverage_ATT}
\subcaption{Single-dimensional $V$}
\includegraphics[width=\linewidth]{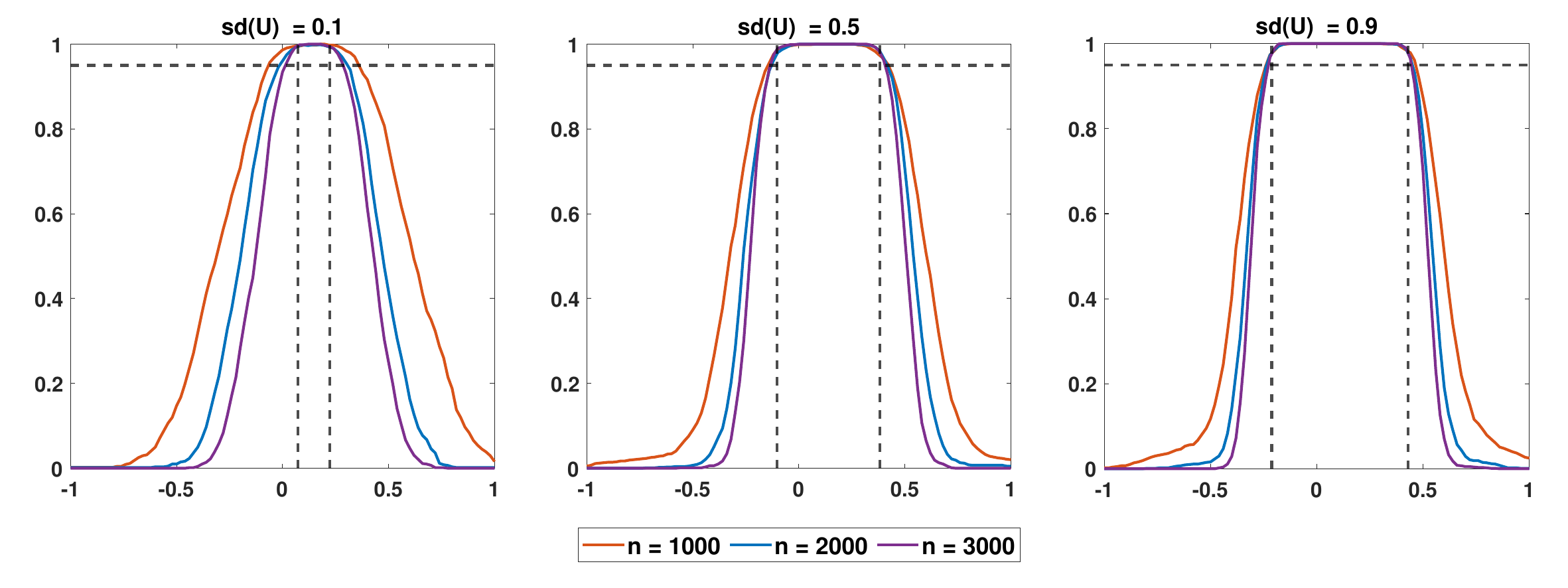}
\subcaption{Two-dimensional $V$}
\includegraphics[width=\linewidth]{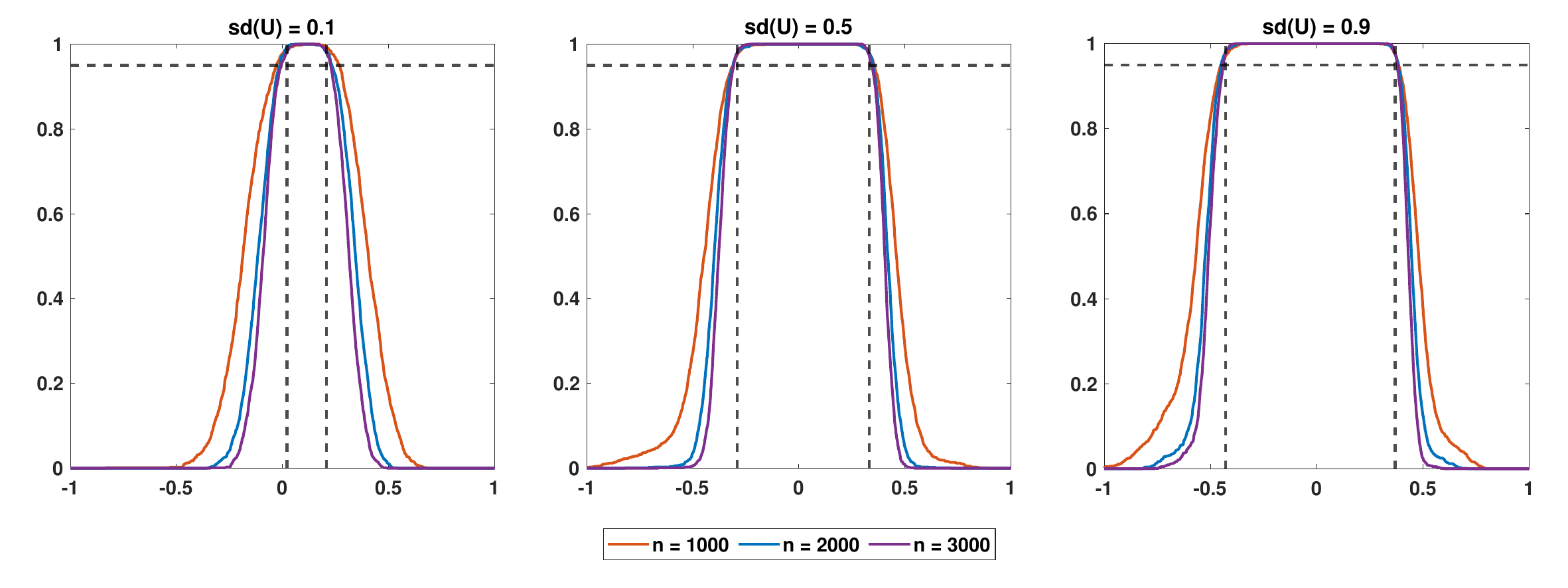}
\footnotesize Note: This figure plots the coverage rates of the 95\% confidence interval for the target parameter ATT, obtained using the regularized support function estimation method of \citet{gafarov2019inference}. We plot the coverage rates using solid curves and the true CvR bounds using dashed vertical lines. The dashed horizontal line marks the 95\% confidence level.
\end{figure}

\begin{figure}[htbp!]
\caption{Coverage Rates of the 95\% Confidence Set for ASB}\label{fig:random_coverage_ABS}
\subcaption{Single-dimensional $V$}
\includegraphics[width=\linewidth]{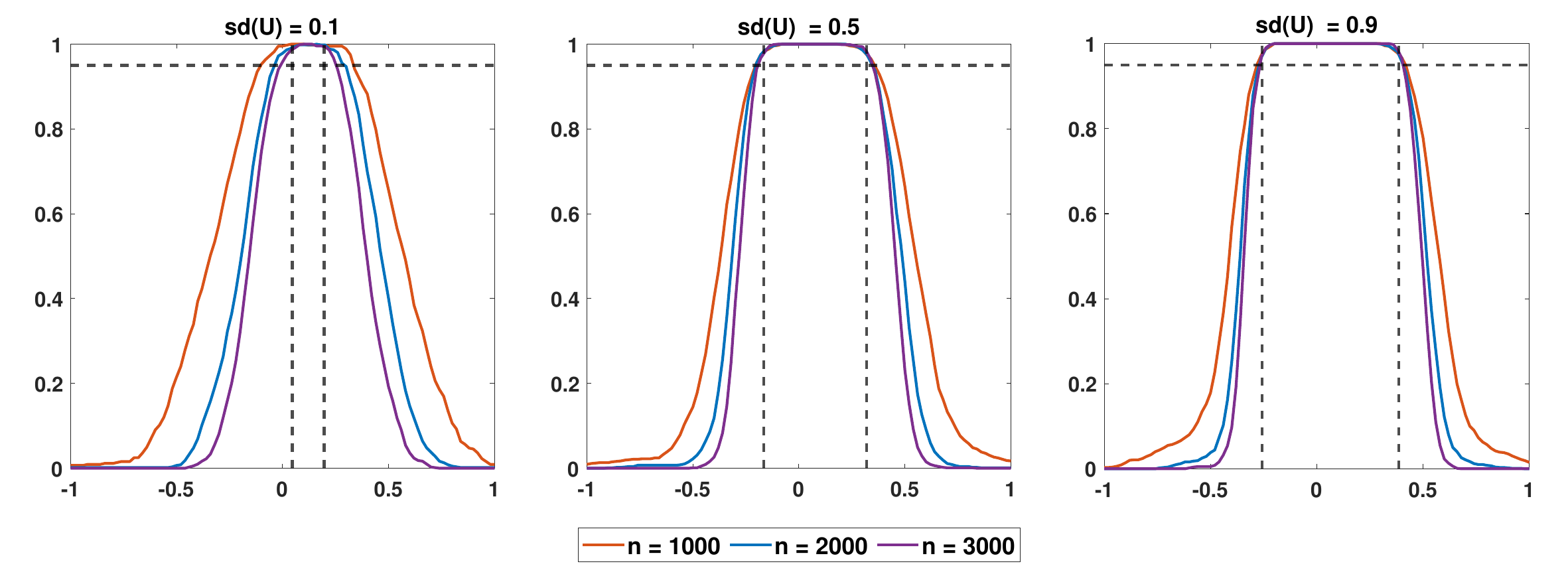}
\subcaption{Two-dimensional $V$}
\includegraphics[width=\linewidth]{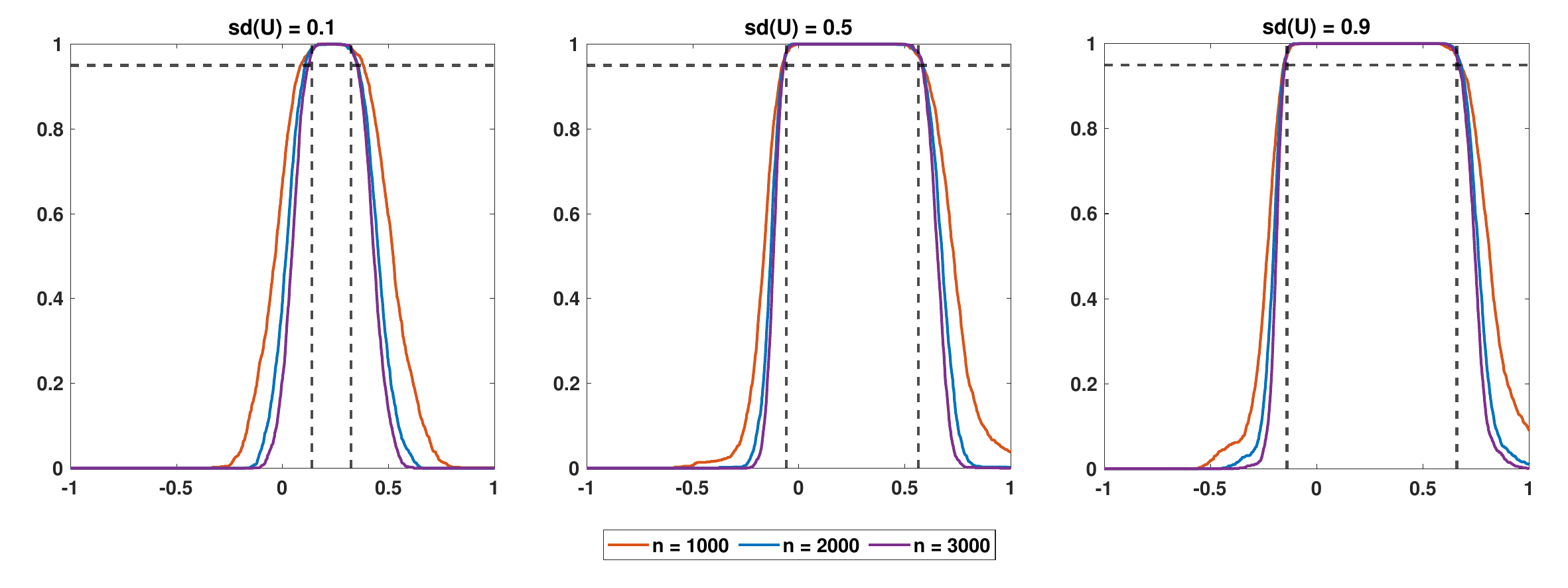}
\footnotesize Note: This figure plots the coverage rates of the 95\% confidence interval for the target parameter ASB, obtained using the regularized support function estimation method of \citet{gafarov2019inference}. We plot the coverage rates using solid curves and the true CvR bounds using dashed vertical lines. The dashed horizontal line marks the 95\% confidence level.
\end{figure}

\section{Conclusion}\label{section:conclusion}

This paper proposes a unified framework to partially identify various policy relevant treatment effects (PRTEs) using a convex relaxation method, when the threshold-crossing structure for the treatment may not hold. Our proposed method robustifies the one of \citet{mogstad/santos/torgovitsky:2017} in the following sense: When the true data generating process satisfies the threshold-crossing assumption for the treatment our convex-relaxation bounds for a large class of PRTE parameters are simplified to the bounds of \citet{mogstad/santos/torgovitsky:2017}, even if we do not impose this condition. This robustness makes our method a more attractive option for practitioners, as it remains applicable regardless of whether the threshold-crossing structure for the treatment holds or not.

We focus on a flexible model that accommodates multidimensional unobserved heterogeneity. We assume that the potential outcomes and the treatment variable are mean-independent given the unobserved heterogeneity. Under this assumption, both the target parameters and the identifiable estimands can be expressed as weighted averages of bilinear functions of the unknown marginal treatment responses and propensity scores given the unobserved heterogeneity. The PRTE bounds are then characterized by nonconvex optimization problems subject to restrictions imposed by the identifiable estimands. 

To address the nonconvexity problem, we introduce a convex relaxation method which is attractive for several reasons. First, it provides a computationally feasible identified set for PRTEs. This method replaces each bilinear function with a new function and four linear inequality constraints, converting a bilinear program to a linear program. Second, we show that our proposed convex-relaxation bounds do not lose identification power under some conditions as long as we fully exploit the available information on observable data. Third, linear shape restrictions and their combinations can be easily incorporated to further improve the bounds. At last, we can directly apply the existing inference method of \citet{gafarov2019inference} to construct uniformly valid confidence intervals for the PRTE bounds. 
Numerical and simulation analyses demonstrate that our convex-relaxation bounds are informative under violations of the threshold-crossing structure.

% 
% \textcolor{red}{TU: It is a bit difficult to understand the following sentence.}
% Our method is particularly useful when bounding PRTEs (i) whose
% sharp bounds exist but are not analytical, (ii) whose
% sharp bounds exist, are analytical, but difficult to compute, or (iii) whose bounds are derived under shape restrictions or combinations of them that may not have been previously considered in the literature, thereby avoiding deriving the PRTE bounds on a case-by-case basis.

\bibliography{mybib}

\clearpage
\appendix
\section*{Supplemental Appendix}
The appendix is organized as follows. Appendix \ref{appendix_weights} provides the derivations for the weights of various target parameters given in Table \ref{table:target_parameters}. Appendix \ref{appendix_W} presents the expression of $\mathbf{W}$ introduced in Section \ref{subsection:inference}. Appendix \ref{appendix_proofs} provides proofs for results in the main text.

\section{Derivations for Table \ref{table:target_parameters}}\label{appendix_weights}
In this section, all the expressions for the PRTE weights are derived under Assumption \ref{assumption_DGP}.\medskip

\noindent\textbf{Weights for $\mathbb{E}[Y_0]$, $\mathbb{E}[Y_1]$, and ATE}.
\begin{align*}
\mathbb{E}[Y_d]= &\mathbb{E}\{ \mathbb{E}[Y_d\mid V,Z]\}=\mathbb{E}[m_{\mathbb{P},d}(V,X)]=\mathbb{E}\Big[\int_{{\mathcal{V}}}m_{\mathbb{P},d}(v,X)dv\Big],\\
\mathbb{E}[Y_1-Y_0]=&\mathbb{E}\Big[\int_{{\mathcal{V}}}m_{\mathbb{P},1}(v,X)-m_{\mathbb{P},0}(v,X)dv\Big].
\end{align*}

\noindent\textbf{Weights for ATE given $X\in\mathcal{X}^\star$}.
\begin{align*}
\mathbb{E}[Y_d\mid X\in\mathcal{X}^\star]= &\frac{\mathbb{E}\{Y_d1[X\in\mathcal{X}^\star]\}}{\mathbb{P}(X\in\mathcal{X}^\star)}=\frac{\mathbb{E}\{\mathbb{E}[Y_d\mid V,Z]1[X\in\mathcal{X}^\star]\}}{\mathbb{P}(X\in\mathcal{X}^\star)}\\=&\frac{\mathbb{E}\{m_{\mathbb{P},d}( V,X)1[X\in\mathcal{X}^\star]\}}{\mathbb{P}(X\in\mathcal{X}^\star)}
=\frac{\mathbb{E}\{\int_{{\mathcal{V}}}m_{\mathbb{P},d}( v,X)1[X\in\mathcal{X}^\star]dv\}}{\mathbb{P}(X\in\mathcal{X}^\star)},\\
\mathbb{E}[Y_1-Y_0\mid X\in\mathcal{X}^\star]= &\frac{\mathbb{E}\{\int_{{\mathcal{V}}}[m_{\mathbb{P},1}( v,X)-m_{\mathbb{P},0}(v,X)]1[X\in\mathcal{X}^\star]dv\}}{\mathbb{P}(X\in\mathcal{X}^\star)}.
\end{align*}

\noindent\textbf{Weights for ATT and ATU}.
\begin{align*}
\mathbb{E}[Y_d|D=1]=&\frac{\mathbb{E}[Y_dD]}{\mathbb{P}(D=1)}
=\frac{\mathbb{E}\{\mathbb{E}[Y_dD\mid V,Z]\}}{\mathbb{P}(D=1)}
=\frac{\mathbb{E}[m_{\mathbb{P},d}( V,X)m_{\mathbb{P},D}(V,Z)]}{\mathbb{P}(D=1)}\\
=&\frac{\mathbb{E}[\int_{{\mathcal{V}}}m_{\mathbb{P},d}(v,X)m_{\mathbb{P},D}(v,Z)dv]}{\mathbb{P}(D=1)},\\
\mathbb{E}[Y_d|D=0]=&\frac{\mathbb{E}[Y_d(1-D)]}{\mathbb{P}(D=0)}
=\frac{\mathbb{E}[\int_{{\mathcal{V}}}m_{\mathbb{P},d}(v,X)(1-m_{\mathbb{P},D}(v,Z))dv]}{\mathbb{P}(D=0)}.
\end{align*}
Thus,
\begin{align*}
\mathbb{E}[Y_1-Y_0\mid D=1]=&\frac{\mathbb{E}[\int_{{\mathcal{V}}}(m_{\mathbb{P},1}(v,X)-m_{\mathbb{P},0}(v,X))m_{\mathbb{P},D}(v,Z)dv]}{\mathbb{P}(D=1)},\\
\mathbb{E}[Y_1-Y_0\mid D=0]=&\frac{\mathbb{E}[\int_{{\mathcal{V}}}(m_{\mathbb{P},1}(v,X)-m_{\mathbb{P},0}(v,X))(1-m_{\mathbb{P},D}(v,Z))dv]}{\mathbb{P}(D=0)}.
\end{align*}

\noindent\textbf{Weights for Generalized Local Average Treatment Effect given $V\in\mathcal{V}^\star$}.
\begin{align*}
\mathbb{E}[Y_d\mid V\in\mathcal{V}^\star]=&\frac{\mathbb{E}[Y_d1[V\in\mathcal{V}^\star]]}{\mathbb{P}(V\in\mathcal{V}^\star)}
=\frac{\mathbb{E}\{\mathbb{E}[Y_d1[V\in\mathcal{V}^\star]\mid V,Z]\}}{\mathbb{P}(V\in\mathcal{V}^\star)}\\
=&\frac{\mathbb{E}\{1[V\in\mathcal{V}^\star]m_{\mathbb{P},d}(V,X)\}}{\mathbb{P}(V\in\mathcal{V}^\star)}.
\end{align*}
Then we can derive that
\begin{align*}
\mathbb{E}[Y_1-Y_0\mid V\in\mathcal{V}^\star]=&\frac{\mathbb{E}\{1[V\in\mathcal{V}^\star][m_{\mathbb{P},1}(V,X)-m_{\mathbb{P},0}(V,X)]\}}{\mathbb{P}(V\in\mathcal{V}^\star)}\\
=&\mathbb{E}\Big[\int_{{\mathcal{V}}}\frac{1[V\in\mathcal{V}^\star]}{\mathbb{P}(V\in\mathcal{V}^\star)}m_{\mathbb{P},1}(v,X)-\frac{1[V\in\mathcal{V}^\star]}{\mathbb{P}(V\in\mathcal{V}^\star)}m_{\mathbb{P},0}(v,X)dv\Big].
\end{align*}

\noindent\textbf{Weights for PRTE}. The new policy is denoted as $Z^\star=(X,Z_0^\star)\in\mathcal{Z}^\star$, where $X$ is the vector of covariates whose distribution is unchanged in the new policy. Note that $(Y^\star,D^\star)$ denotes the outcome and treatment variable generated by $Z^\star$ and keeping $(m_{\mathbb{P}},V)$ the same, such that $(Y^\star,D^\star,Z^\star)$ also satisfies Assumption \ref{assumption_DGP}. By definition, we have
\begin{align}\label{EYstar0}
\mathbb{E}[Y^\star]=&\mathbb{E}[Y_0(1-D^\star)+Y_1D^\star]\nonumber\\
=&\int_\mathcal{Z^\star}\int_{{\mathcal{V}}}\left[m_{\mathbb{P},0}(v,x)(1-m_{\mathbb{P},D}(v,z))+m_{\mathbb{P},1}(v,x)m_{\mathbb{P},D}(v,z)\right]dvF_{Z^\star}(dz),
\end{align}

Consider a distributional change in the instrument variable with its new support being a subset of the original support, i.e., $Z^\star=(X,Z_0^\star)\sim F_{Z^\star}$ for some known distribution $F_{Z^\star}$ and $\mathcal{Z}^\star\subseteq\mathcal{Z}$. Since $F_{Z}(dz)\neq0$ for all $z\in\mathcal{Z}$,  \eqref{EYstar0} becomes
\begin{align*}
\mathbb{E}[Y^\star]=&\int_\mathcal{Z}\int_{{\mathcal{V}}}\left[m_{\mathbb{P},0}(v,x)(1-m_{\mathbb{P},D}(v,z))+m_{\mathbb{P},1}(v,x)m_{\mathbb{P},D}(v,z)\right]dvF_{Z^\star}(dz)\\
=&\int_\mathcal{Z}\int_{{\mathcal{V}}}\left[m_{\mathbb{P},0}(v,x)(1-m_{\mathbb{P},D}(v,z))+m_{\mathbb{P},1}(v,x)m_{\mathbb{P},D}(v,z)\right]\frac{F_{Z^\star}(dz)}{F_{Z}(dz)}dvF_{Z}(dz).
\end{align*}

\noindent\textbf{Weights for Average Selection Bias}.
\begin{align*}\mathbb{E}[Y_0\mid D=1]-\mathbb{E}[Y_0\mid D=0]=&\frac{\mathbb{E}[\int_{{\mathcal{V}}}m_{\mathbb{P},0}(v,X)m_{\mathbb{P},D}(v,Z)dv]}{\mathbb{P}(D=1)}\\
&~~~~~~-\frac{\mathbb{E}[\int_{{\mathcal{V}}}m_{\mathbb{P},0}(v,X)(1-m_{\mathbb{P},D}(v,Z))dv]}{\mathbb{P}(D=0)},
\end{align*}

\noindent\textbf{Weights for Average Selection on the Gain}.
\begin{align*}
\mathbb{E}[Y_1-Y_0\mid D=1]-\mathbb{E}[Y_1-Y_0\mid D=0]=&\frac{\mathbb{E}[\int_{{\mathcal{V}}}(m_{\mathbb{P},1}(v,X)-m_{\mathbb{P},0}(v,X))m_{\mathbb{P},D}(v,Z)dv]}{\mathbb{P}(D=1)}\\
-&\frac{\mathbb{E}[\int_{{\mathcal{V}}}(m_{\mathbb{P},1}(v,X)-m_{\mathbb{P},0}(v,X))(1-m_{\mathbb{P},D}(v,Z))dv]}{\mathbb{P}(D=0)}.
\end{align*}

%\textcolor{red}{DELETE? \textbf{Weights for PRTE in \citet{mogstad/santos/torgovitsky:2017}}. Note that the weights below for PRTE are derived under the assumptions employed by \citet{mogstad/santos/torgovitsky:2017}. They are not reported in Table \ref{table:target_parameters} and are only used in the numerical and simulation analysis. Denote $P^\star(z)=\mathbb{P}(D^\star=1|Z^\star=z)$ and $P(z)=\mathbb{P}(D=1|Z=z)$. \\}

%\textcolor{red}{Suppose $Z^\star\sim F_{Z^\star}$ and $\mathcal{Z}^\star\subseteq\mathcal{Z}$. For $\mathbb{E}[Y^\star]$, we have
%\begin{align*}\mathbb{E}[Y^\star]=&\int_{\mathcal{Z}^\star}\int_{{\mathcal{V}}}\left[m_0(v,x)1[P^\star(z)<v]+m_1(v,x)1[P^\star(z)\geq v]\right]dvF_{Z^\star}(dz)\\=&\int_{\mathcal{Z}}\int_{{\mathcal{V}}}\Big\{m_0(v,x)1[P(z)<v]+m_1(v,x)1[P(z)\geq v]\Big\}\frac{F_{Z^\star}(dz)}{F_{Z}(dz)}dvF_{Z}(dz).\end{align*}
%Thus, the weights for $m_0$ and $m_1$ is $1[P(z)<v]\frac{F_{Z^\star}(dz)}{F_{Z}(dz)}$ and $1[P(z)\geq v]\frac{F_{Z^\star}(dz)}{F_{Z}(dz)}$, respectively.\\}

\section{Expression of $\mathbf{W}$}\label{appendix_W}
In this section, we provide the expression of $\mathbf{W}$ introduced in Section \ref{subsection:inference}. Recall that we rewrite $$\mathbf{W}=\begin{bmatrix}\mathbf{W}_1^{eq}&\mathbf{W}_2^{eq}\\\mathbf{W}^{ineq}_1&\mathbf{W}^{ineq}_2\end{bmatrix},$$ 
where the four blocks $\mathbf{W}^{eq}_1\in\mathbb{R}^{d_{eq}\times d_\eta}$, $\mathbf{W}^{eq}_2\in\mathbb{R}^{d_{eq}\times 1}$, $\mathbf{W}^{ineq}_1\in\mathbb{R}^{d_{ineq}\times d_\eta}$, and $\mathbf{W}^{ineq}_2\in\mathbb{R}^{d_{ineq}\times 1}$ are compatible with the dimensions of $\mathcal{J}^{eq}$ and $\mathcal{J}^{ineq}$. The definition of these blocks is given below. Denote $\mathbf{1}_b$ and $\mathbf{0}_b$ as $b\times 1$ vector of ones and zeros, respectively, and denote $\mathbf{0}_{b\times c}$ as a $b\times c$ matrix of zeros. Define a $ d_{\eta_2}\times1$ vector
\begin{align}
\mathbf{T}^\star(Z)=&\Big[
\int_{{\mathcal{V}}} \mathbf{B}'_{0}(v,X)\omega^\star_{01}(v,Z)dv,~\int_{{\mathcal{V}}} \mathbf{B}'_{1}(v,X)\omega^\star_{10}(v,Z)dv,~\mathbf{0}'_{K_D},\nonumber\\
&\int_{{\mathcal{V}}} \mathbf{B}'_{0D}(v,Z)(\omega^\star_{00}(v,Z)-\omega^\star_{01}(v,Z))dv,~\int_{{\mathcal{V}}} \mathbf{B}'_{1D}(v,Z)(\omega^\star_{11}(v,Z)-\omega^\star_{10}(v,Z))dv
\Big]'.
\end{align}
Then, we have $\eta_1=\mathbb{E}[\mathbf{T}^\star(Z)]'\eta_2$. For  $d=0,1$ and $z=(z_0,x)$, define 
\begin{align*}
\mathbf{u}_{d}(x)=&\int_{{\mathcal{V}}} \mathbf{B}_{d}(v,x)dv,\qquad\mathbf{u}_{D}(z)=\int_{{\mathcal{V}}} \mathbf{B}_{D}(v,z)dv,\qquad
\mathbf{u}_{dD}(z)=\int_{{\mathcal{V}}} \mathbf{B}_{dD}(v,z)dv.\end{align*} 
Define a $K_{D}\times |\mathcal{S}|$ matrix $\mathbf{U}_{D}(z)$ and a $K_{dD}\times |\mathcal{S}|$ matrix $\mathbf{U}_{dD}(z)$ as below
\begin{align*}
\mathbf{U}_{D}(z)=&[s_1(z)\mathbf{u}_{D}(z),...,s_{|\mathcal{S}|}(z)\mathbf{u}_{D}(z)],\qquad~~
\mathbf{U}_{dD}(z)=[s_1(z)\mathbf{u}_{dD}(z),...,s_{|\mathcal{S}|}(z)\mathbf{u}_{dD}(z)].\end{align*}
Let us further partition $\mathbf{W}^{eq}_1$ as $\mathbf{W}^{eq}_1=\begin{bmatrix}
1 & -\mathbf{T}^{\star'}(Z) \\
\mathbf{0}_{3|\mathcal{S}|}&\mathbf{W}^{eq}_{1,2}(Z)\\
\end{bmatrix}$,
where the first row corresponds to the equality constraint $\eta_1=\mathbb{E}[\mathbf{T}^{\star'}(Z)]\eta_2$, and the rest rows correspond to the equality constraints in \eqref{cond:Gamma}. We have
\begin{align*}
&\mathbf{W}^{eq}_{1,2}(Z)=\begin{bmatrix}
\mathbf{0}_{|\mathcal{S}|\times K_0}&\mathbf{0}_{|\mathcal{S}|\times K_1}&\mathbf{0}_{|\mathcal{S}|\times K_D}&\mathbf{0}_{|\mathcal{S}|\times K_{0D}}& \mathbf{U}'_{1D}(Z)\\
\mathbf{0}_{|\mathcal{S}|\times K_0}&\mathbf{0}_{|\mathcal{S}|\times K_1}&\mathbf{0}_{|\mathcal{S}|\times K_D}& \mathbf{U}'_{0D}(Z)&\mathbf{0}_{|\mathcal{S}|\times K_{1D}}\\
\mathbf{0}_{|\mathcal{S}|\times K_0}&\mathbf{0}_{|\mathcal{S}|\times K_1}&\mathbf{U}'_{D}(Z)& \mathbf{0}_{|\mathcal{S}|\times K_{0D}}&\mathbf{0}_{|\mathcal{S}|\times K_{1D}}
\end{bmatrix}
\end{align*}
and $$\mathbf{W}^{eq}_2=
[0,s_1(Z)YD,...,
s_{|\mathcal{S}|}(Z)YD,s_1(Z)Y(1-D),...,s_{|\mathcal{S}|}(Z)Y(1-D),s_1(Z)D,...,s_{|\mathcal{S}|}(Z)D]'.$$
In addition, we use $\mathbf{W}^{ineq}_1$ to describe the inequality constraints that come from the convex relaxation in Lemma \ref{lemma:mccormick}. 
For $W,W'\in\{L,U\}$ and $d=0,1$, denote 
\begin{align*}
I^W_d(x)=\int_{\mathcal{V}} m^W_{d}(v,x)dv,\quad I^W_D(z)=\int_{\mathcal{V}} m^W_{D}(v,z)dv,\quad I^{WW'}_{dD}(z)=\int_{\mathcal{V}} m^W_{d}(v,x)m^{W'}_{D}(v,z)dv,
\end{align*}and
\begin{align*}
\mathbf{Q}^{W}_{0D}(z)=&\int_{\mathcal{V}} m^W_0(v,x)\mathbf{B}_{D}(v,z)dv,\qquad \mathbf{Q}^{W}_{1D}(z)=\int_{\mathcal{V}}m^W_1(v,x)\mathbf{B}_{D}(v,z)dv\\
\mathbf{Q}^{W}_{D0}(z)=&\int_{\mathcal{V}} (1-m^W_D(v,z))\mathbf{B}_{0}(v,x)dv,\qquad \mathbf{Q}^{W}_{D1}(z)=\int_{\mathcal{V}}m^W_D(v,z)\mathbf{B}_{1}(v,x)dv.
\end{align*}
Define 
\begin{align*}
&w_1=(-1,1)',\\ &w_2(z)=-w_1\otimes \left(\mathbf{Q}^{U}_{D0}(z),\mathbf{Q}^{L}_{D0}(z)\right)',\\ &w_3(z)=\left(-\mathbf{Q}^{L}_{0D}(z),-\mathbf{Q}^{U}_{0D}(z),\mathbf{Q}^{U}_{0D}(z),\mathbf{Q}^{L}_{0D}(z)\right)',\\ &w_4(z)=-w_1\otimes \left(\mathbf{Q}^{L}_{D1}(z),\mathbf{Q}^{U}_{D1}(z)\right)',\\ &w_5(z)=\left(\mathbf{Q}^{L}_{1D}(z),\mathbf{Q}^{U}_{1D}(z),-\mathbf{Q}^{U}_{1D}(z),-\mathbf{Q}^{L}_{1D}(z)\right)',\\ &w_6=(-1,-1,1,1)'.
\end{align*}
Consider a partition
$\mathbf{W}^{ineq}_1=\begin{bmatrix}
\mathbf{0}_{d_{ineq}}& \mathbf{W}^{ineq}_{1,2}\\
\end{bmatrix},$
where the zeros in the first column are the coefficients in the inequalities corresponding to $\eta_1$, and $\mathbf{W}^{ineq}_{1,2}$ defined below consists of the coefficients in the inequalities corresponding to $\eta_2$. Then, for any given $z\in\mathcal{Z}$, let us define
\begin{align}\label{alm_B_21}
\mathbf{W}^{ineq}_{1,2}(z)=
&\left[\begin{array}{c:c:c:c:c}
w_1\otimes\mathbf{u}'_{0}(x)&\mathbf{0}_{2\times K_1}&\mathbf{0}_{2\times K_D}&\mathbf{0}_{2\times K_{0D}}&\mathbf{0}_{2\times K_{1D}}\\
\mathbf{0}_{2\times K_{0}}&w_1\otimes\mathbf{u}'_{1}(x)&\mathbf{0}_{2\times K_{D}}&\mathbf{0}_{2\times K_{0D}}&\mathbf{0}_{2\times K_{1D}}\\
\mathbf{0}_{2\times K_{0}}&\mathbf{0}_{2\times K_{1}}&w_1\otimes\mathbf{u}'_{D}(z)&\mathbf{0}_{2\times K_{0D}}&\mathbf{0}_{2\times K_{1D}}\vspace{2mm}\\
\hdashline[3pt/3pt]
w_2(z) &\mathbf{0}_{4\times K_1}&w_3(z)&w_6\otimes\mathbf{u}'_{0D}(z)&\mathbf{0}_{4\times K_{1D}}\\
\mathbf{0}_{4\times K_0}&w_4(z)&w_5(z)&\mathbf{0}_{4\times K_{0D}}&w_6\otimes\mathbf{u}'_{1D}(z)\\
\end{array}\right],
\end{align}
where the upper blocks above the dashed line represent the inequalities in \eqref{eq:mccormick_bounds_m} for any given $z$, and the lower blocks below the dashed line represent the inequalities in \eqref{eq:mccormick_relax1_0} and \eqref{eq:mccormick_relax4_1} for any given $z$. In addition, we denote
\begin{align}\mathbf{W}^{ineq}_2(z)=&
[-I^{L}_0(x),I^{U}_0(x),-I^{L}_1(x),I^{U}_1(x),-I^{L}_D(z),I^{U}_D(z),\nonumber\\
-&I^{LU}_{0D}(z),-I^{UL}_{0D}(z),I^{UU}_{0D}(z),I^{LL}_{0D}(z),
I^{LL}_{1D}(z),I^{UU}_{1D}(z),-I^{UL}_{1D}(z),-I^{LU}_{1D}(z)]'.\end{align}
If $Z$ has a discrete support, we can define $\mathbf{W}^{ineq}_{1,2}$ by stacking $\mathbf{W}^{ineq}_{1,2}(z)$ for all $z\in\mathcal{Z}$ and define $\mathbf{W}^{ineq}_{2}$ by stacking $\mathbf{W}^{ineq}_{2}(z)$ for all $z\in\mathcal{Z}$. If $Z$ is continuous, then we can choose candidate values in $\mathcal{Z}$, say $\{z_1,...,z_{K_Z}\}$, and define $\mathbf{W}^{ineq}_{1,2}$ and $\mathbf{W}^{ineq}_{2}$ by stacking $\mathbf{W}^{ineq}_{1,2}(z)$ and $\mathbf{W}^{ineq}_{2}(z)$ for all candidate values $z$ in $\{z_1,...,z_{K_Z}\}$, respectively.

\section{Proofs}\label{appendix_proofs}
In this section, we present the proofs for the results in the main text.

\begin{proof}[{Proof of Lemma \ref{lemma_Y0Y1K2}}]
Define the random variable $V=(V_0,V_1)$ based on $(Y_0,Y_1,Z)$ by 
$$
(V_0,V_1)=\left(F_{Y_0\mid X}(Y_0),F_{Y_1\mid Y_0,X}(Y_1)\right).
$$

First, we are going to show Assumption \ref{assumption_DGP} (i). 
By the invertibility of $F_{Y_0\mid X}$ and $F_{Y_1\mid Y_0,X}$, we have 
$$
(Y_0,Y_1)=\left(F_{Y_0\mid X}^{-1}(V_0),F_{Y_1\mid Y_0,X}^{-1}(V_1)\right).
$$
It implies 
$\mathbb{E}[Y_d\mid D,Z,{V}]=Y_d={\mathbb{E}}[Y_d\mid X,{V}]$  for each $d=0,1$.

Second, we are going to show Assumption \ref{assumption_DGP} (ii). 
It suffices to show 
$V_0\mid Z\sim \mathrm{Uniform}[0,1]$ and $V_1\mid V_0,Z\sim \mathrm{Uniform}[0,1]$. 
The conditional distribution of $V_0$ given $Z$ satisfies the following equalities: 
\begin{align*}
\mathbb{P}(V_0\leq v\mid Z)
&=
\mathbb{P}(Y_0\leq F_{Y_0\mid X}^{-1}(v)\mid Z)
\\
&=
\mathbb{E}[\mathbb{P}(Y_0\leq F_{Y_0\mid X}^{-1}(v)\mid Z,X)\mid Z]
\\
&=
\mathbb{E}[\mathbb{P}(Y_0\leq F_{Y_0\mid X}^{-1}(v)\mid X)\mid Z]
\\
&=
\mathbb{E}[F_{Y_0\mid X}(F_{Y_0\mid X}^{-1}(v))\mid Z]
\\
&=
\mathbb{E}[v\mid Z]
\\
&=
v,
\end{align*}
where the first equality follows from the strict monotonicity of $F_{Y_0\mid X}$, the second equality follows from the law of iterated expectations, the third equality follows from the independence of $Y_0$ and $Z$ given $X$, and the fourth equality follows from $F_{Y_0\mid X}(F_{Y_0\mid X}^{-1}(v))=v$. 
The conditional distribution of $V_1$ given $(V_0,Z)$ satisfies the following equalities: 
\begin{align*}
\mathbb{P}(V_1\leq v\mid V_0,Z)
&=
\mathbb{P}(Y_1\leq F_{Y_1\mid Y_0,X}^{-1}(v)\mid V_0,Z)
\\
&=
\mathbb{E}[\mathbb{P}(Y_1\leq F_{Y_1\mid Y_0,X}^{-1}(v)\mid Y_0,Z,X)
\mid V_0,Z]
\\
&=
\mathbb{E}[\mathbb{P}(Y_1\leq F_{Y_1\mid Y_0,X}^{-1}(v)\mid Y_0,X)
\mid V_0,Z]
\\
&=
\mathbb{E}[v
\mid V_0,Z]
\\
&=
v,
\end{align*}
where the first equality follows from the strict monotonicity of $F_{Y_1\mid Y_0,X}$, the second equality follows from the law of iterated expectations, the third equality follows from the independence of $(Y_0,Y_1)$ and $Z$ given $X$, and the fourth equality follows from $F_{Y_1\mid Y_0,X}(F_{Y_1\mid Y_0,X}^{-1}(v))=v$.

% By this construction of $(V_0,V_1)$, we have 

% $$
% \sup_{\epsilon>0}F_{Y_0}(Y_0-\epsilon)\leq V_0\leq F_{Y_0}(Y_0)
% $$

% $V_0\sim \mathrm{Uniform}[0,1]$

% \textcolor{red}{TU: WORKING NOW}

% If $(Y_0,Y_1)$ is independent of $Z$ and absolutely continuous with respect to the Lebesgue measure, then Assumption \ref{assumption_DGP} holds with $V=(F_{Y_0}(Y_0),F_{Y_1\mid Y_0}(Y_1))$ and $K=2$.  

% Because $(Y_0,Y_1)$ is absolutely continuous with respect to the Lebesgue measure, the  cumulative distribution functions $F_{Y_0}$ and $F_{Y_1\mid Y_0}$ are continuous and strictly increasing, implying that they are also invertible. Given $V=(V_1,V_2)=(F_{Y_0}(Y_0),F_{Y_1\mid Y_0}(Y_1))$, we know that, conditional on $V$, $Y_0=F^{-1}_{Y_0}(V_1)$ and $Y_1=F^{-1}_{Y_1\mid Y_0=V_1}(V_2)$  are both deterministic. Hence, Assumption \ref{assumption_DGP} (i) holds.

% Again, because $(Y_0,Y_1)$ is absolutely continuous with respect to the Lebesgue measure, we know that $F_{Y_0}(Y_0)$ follows a uniform distribution on $[0,1]$. In addition, $F_{Y_1\mid Y_0}(Y_1)$ given $Y_0$ also follows a uniform distribution on $[0,1]$, which further implies that $F_{Y_1\mid Y_0}(Y_1)$ is independent of $Y_0$. Since $F_{Y_0}(Y_0)$ and $F_{Y_1\mid Y_0}(Y_1)$ are both uniformly distributed over $[0,1]$ and are independent of each other, their joint distribution is also uniform. Because $(Y_0,Y_1)$ is independent of $Z$, then $V=(F_{Y_0}(Y_0),F_{Y_1\mid Y_0}(Y_1))$ is also independent of $Z$. Thus, $V$ given $Z$ follows a uniform distribution and $K=2$, and Assumption \ref{assumption_DGP} (ii) holds.
\end{proof}

\begin{proof}[{Proof of Eq. \eqref{MTS_constraints}}]If $d=1$, we can show that $\mathbb{E}[Y_1\mid X=x,D=1]=\mathbb{E}[Y\mid X=x,D=1],$ and 
\begin{align*}
\mathbb{E}[Y_1\mid X=x,D=0]= &\mathbb{E}[Y_1(1-D)\mid X=x]/\mathbb{P}(D=0\mid X=x)\\=&\mathbb{E}\{\mathbb{E}[Y_1(1-D)\mid X=x, V,Z]\mid X=x\}/\mathbb{P}(D=0\mid X=x)\\=&\mathbb{E}[m_{\mathbb{P},1}(V,x)(1-m_{\mathbb{P},D}(V,Z))\mid X=x]/\mathbb{P}(D=0\mid X=x)\\
=&\mathbb{E}[m_{\mathbb{P},1}(V,x)-m_{\mathbb{P},1D}(V,Z))\mid X=x]/\mathbb{P}(D=0\mid X=x),\end{align*} where the third equality is based on Assumption \ref{assumption_DGP}, and the last equality comes from the definition of $m_{\mathbb{P},1D}$. Plugging these two expressions into the inequality in Condition \ref{condition:mono_select} gives us the first inequality in Eq. \eqref{MTS_constraints}. Similar arguments can be applied to show the second inequality in Eq. \eqref{MTS_constraints} when $d=0$.
\end{proof}

\begin{proof}[{Proof of Theorem \ref{eq:theorem:sharpness}}]
Note that under Assumption \ref{assumption_DGP}, we can show
\begin{align*}
\mathbb{E}[YD\mid Z]
&=
\mathbb{E}[\mathbb{E}[Y\mid D=1,V,Z]\mathbb{E}[D\mid V,Z]\mid Z]
\\
&=
\mathbb{E}[\mathbb{E}[Y_1\mid D=1,V,Z]\mathbb{E}[D\mid V,Z]\mid Z]
\\
&=
\mathbb{E}[\mathbb{E}[Y_1\mid V,X]\mathbb{E}[D\mid V,Z]\mid Z]
\\
&=
\mathbb{E}[m_{\mathbb{P},1}(V,X)m_{\mathbb{P},D}(V,Z)\mid Z]\\
=&
\int_{{\mathcal{V}}}m_{\mathbb{P},1}(v,X)m_{\mathbb{P},D}(v,Z)dv\\
=&\int_{{\mathcal{V}}}m_{\mathbb{P},1D}(v,Z)dv,
\end{align*}
where the first equality is by the law of iterated expectation and the mean independence between $Y_d$ and $D$ given $(V,Z)$, the third equality is based on $\mathbb{E}[Y_1\mid D,Z,{V}]=\mathbb{E}[Y_1\mid X,{V}]$, and the last equality is by the definition of $m_{\mathbb{P},1D}$.    
Similarly, we can show Eqs. \eqref{sharp_2} and \eqref{sharp_3} hold for $m_{\mathbb{P},0D}$ and $m_{\mathbb{P},D}$.
\end{proof}

\begin{proof}[{Proof of Theorem \ref{theorem_sharp}}] 
For any $s\in\mathcal{S}$, we can get
\begin{align*} 
\mathbb{E}[YDs(Z)\mid Z]
&= 
\mathbb{E}[YD\mid Z]s(Z)
= 
\int_{{\mathcal{V}}}m_{\mathbb{P},1D}(v,Z)s(Z)dv,
\end{align*}
where the second equality is because we have shown that Eq. \eqref{sharp_1} holds for $m_{\mathbb{P},1D}$ in Theorem \ref{eq:theorem:sharpness}. Therefore, $m_{\mathbb{P}}$ satisfies Eq.\eqref{intro_contraint_Gamma_s0}. Similarly, we can show that $m_{\mathbb{P}}$ satisfies Eqs. \eqref{intro_contraint_Gamma_s1} and \eqref{intro_contraint_Lambda_s}. 
\end{proof}

\begin{proof}[{Proof of Corollary \ref{cor:bilinar_reprentation}}]The first part of the corollary, $m_\mathbb{P}\in\mathcal{M}_{\mathcal{S}}$,  is obvious by the definition of $m_{\mathbb{P},0D}$ and  $m_{\mathbb{P},1D}$ and by Theorem  \ref{theorem_sharp}.

Next, we show the second statement of this corollary that any $m\in\mathcal{M}_{\mathcal{S}}$ with $\mathcal{S}=\mathbf{L}^2(\mathcal{Z},\mathbb{R})$ satisfies Eqs. \eqref{sharp_1} to \eqref{sharp_3} and \eqref{cond:bilinear}. The proof is based on \cite{mogstad/santos/torgovitsky:2017}, but we modify it to fit the framework of this paper. 
Let $m$ be any element of $\mathcal{M}_{\mathcal{S}}$. Then, we know that $m$ satisfies Eq.  \eqref{cond:bilinear}.
Therefore, we only need to show that $m$ satisfies Eqs. \eqref{sharp_1} to \eqref{sharp_3}. First, we are going to show that $m$ satisfies Eq. \eqref{sharp_1} with $$
s(Z)=\mathbb{E}[YD\mid Z]-\mathbb{E}[m_1(V,X)m_D(V,Z)\mid Z].
$$
By rearranging terms in Eq. \eqref{intro_contraint_Gamma_s0}, 
we have 
$$
\mathbb{E}\Big[\Big(YD-\int_{{\mathcal{V}}}m_1(v,X)m_D(v,Z)dv\Big)s(Z)\Big]=0.$$
Therefore,  plugging $s(Z)$ into the above equation gives us
$$
\mathbb{E}\Big[\Big(YD-\mathbb{E}[m_1(V,X)m_D(V,Z)\mid Z]\Big)\Big(\mathbb{E}[YD\mid Z]-\mathbb{E}[m_1(V,X)m_D(V,Z)\mid Z]\Big)\Big]=0.
$$
Using the law of iterated expectations, we can equivalently write this equation as 
$$
\mathbb{E}\Big[\Big(\mathbb{E}[YD\mid Z]-\mathbb{E}[m_1(V,X)m_D(V,Z)\mid Z]\Big)^2\Big]=0,
$$
which implies Eq. \eqref{sharp_1}. We can prove Eq. \eqref{sharp_2} with $s(Z)=\mathbb{E}[Y(1-D)\mid Z]-\mathbb{E}[m_0(V,X)(1-m_D(V,Z))\mid Z]$ and Eq. \eqref{sharp_3} with $s(Z)=\mathbb{E}[D\mid Z]-\mathbb{E}[m_D(V,Z)\mid Z]$.   
\end{proof}

\begin{proof}[{Proof of Theorem \ref{theorem:main_result_relax}}] By Lemma \ref{lemma:mccormick}, any $m\in\mathcal{M}$ that satisfies Eq. \eqref{cond:bilinear} also satisfies Eq. \eqref{eq:mccormick_relax1_0} and \eqref{eq:mccormick_relax4_1}.
This theorem then follows from the definition of $\mathcal{M_S}$ and Corollary \ref{cor:bilinar_reprentation} that $m_{\mathbb{P}}\in\mathcal{M_S}$ for any $\mathcal{S}\subseteq\mathbf{L}^2(\mathcal{Z},\mathbb{R})$. 
\end{proof}

\begin{proof}[{Proof of Proposition \ref{prop_reduce_dim}}]
Given any $m\in\mathcal{M}$, we can see that $\tilde{m}$ defined in \eqref{prop_reduce_dim_fun} satisfies $\tilde{m}\in\tilde{\mathcal{M}}$. In what follows, we first show that if $m\in\mathcal{M}$ satisfies constraints in \eqref{cond:Gamma}, \eqref{eq:mccormick_bounds_m}, \eqref{eq:mccormick_relax1_0}, and \eqref{eq:mccormick_relax4_1}, then $\tilde{m}\in\tilde{\mathcal{M}}$ also satisfies those constraints. 
For any $s\in\mathcal{S}$, we can get
\begin{eqnarray}\label{red_dim_Gamma}
\Gamma_s(m)&=&\mathbb{E}\Big\{s(Z)\int_{\mathcal{V}} (m_{1D}(v,Z),m_{0D}(v,Z),m_D(v,Z))'dv\Big\}\nonumber\\
&=&\Gamma_s(\tilde{m}).
\end{eqnarray}
Besides, given that $m^L_d(V,X)$, $m^U_d(V,X)$, $m^L_{dD}(V,Z)$, and $m^U_{dD}(V,Z)$ do not depend on $V$ for all $d=0,1$, if $m$ satisfies constraints in \eqref{eq:mccormick_bounds_m}, \eqref{eq:mccormick_relax1_0}, and \eqref{eq:mccormick_relax4_1}, then integrating all these inequalities with respect to $v$ leads to the result that $\tilde m$ also satisfies those constraints in \eqref{eq:mccormick_bounds_m}, \eqref{eq:mccormick_relax1_0}, and \eqref{eq:mccormick_relax4_1}. Thus, if $m\in\mathcal{M}^r_{\mathcal{S}}$, then we know that $\tilde{m}\in\tilde{\mathcal{M}}^r_{\mathcal{S}}$.%In addition, since the shape restrictions characterized by  $[\mathcal{A}m](w)\leq 0$ are linear, we know that $[\mathcal{A}\tilde{m}](w)\leq 0$ holds.

Next, we show if $m^\star$ solves the minimization problem in \eqref{opt_relax} over $\mathcal{M}^r_{\mathcal{S}}$, then $\tilde{m}^\star$ obtained from plugging $m^\star$ into \eqref{prop_reduce_dim_fun} solves the minimization problem in \eqref{opt_relax} over $\tilde{\mathcal{M}}^r_{\mathcal{S}}$. Because $\omega_{dd'}^\star(V,Z)$ does not depend on $V$ for all $d,d'=0,1$, we can write it as $\omega_{dd'}^\star(Z)$. We can show that
\begin{eqnarray}\label{red_dim_Gammastar}
\Gamma^\star(m)&=&\mathbb{E}\Big\{\int_{{\mathcal{V}}} m_{0D}(v,Z)\omega_{00}^\star(Z)+(m_0(v,X)-m_{0D}(v,Z))\omega_{01}^\star(Z)dv
\nonumber\\&&~~~~+
\int_{{\mathcal{V}}} (m_1(v,X)-m_{1D}(v,Z))\omega_{10}^\star(Z)+m_{1D}(v,Z)\omega_{11}^\star(Z)dv\Big\}\nonumber\\
&=&\Gamma^\star(\tilde{m}).
\end{eqnarray}

Suppose $m^\star$ solves the minimization problem in \eqref{opt_relax} over $\mathcal{M}^r_{\mathcal{S}}$, and $\tilde{m}^\star$ is obtained from plugging $m^\star$ into \eqref{prop_reduce_dim_fun}. From \eqref{red_dim_Gammastar} we know that
$$\Gamma^\star(m^\star)=\Gamma^\star(\tilde{m}^\star).$$
Suppose there exists another $\tilde{m}^+\in\tilde{\mathcal{M}}^r_{\mathcal{S}}$ such that  $\Gamma^\star(\tilde{m}^+)<\Gamma^\star(\tilde{m}^\star)$. Given $\tilde{m}^+$, let us construct $m^+=(m_0^+,m_1^+,m_D^+,m_{0D}^+,m_{1D}^+)'\in\mathcal{M}$, where we set $m_d^+(v,x)=\tilde{m}_d^+(x)$, $m_D^+(v,z)=\tilde{m}_D^+(z)$, and $m_{dD}^+(v,z)=m_{dD}^+(z)$ for $d=0,1$ to be degenerate functions in $v$. Then, $m^+$ lies in $\mathcal{M}$ and it also satisfies the constraints in \eqref{cond:Gamma}, \eqref{eq:mccormick_bounds_m}, \eqref{eq:mccormick_relax1_0}, and \eqref{eq:mccormick_relax4_1}, i.e., $m^+\in\mathcal{M}^r_{\mathcal{S}}$. If so, then $\Gamma^\star(m^+)=\Gamma^\star(\tilde{m}^+)<\Gamma^\star(\tilde{m}^\star)=\Gamma^\star(m^\star)$, which contradicts the fact that $m^\star$ is the solution of the  minimization problem in \eqref{opt_relax} over $\mathcal{M}^r_{\mathcal{S}}$. Thus, we can conclude that $\tilde{m}^\star$ solves the minimization problem in \eqref{opt_relax} over $\tilde{\mathcal{M}}^r_{\mathcal{S}}$.
\end{proof}

\begin{proof}[{Proof of Theorem \ref{theorem:sharp_binary_P_m}}]First, we show that any $m_{P}\in\mathcal{R}$ with $P\in\mathcal{P}$ satisfies $m\in \mathcal{M_S}\mbox{ with }\mathcal{S}=\mathbf{L}^2(\mathcal{Z},\mathbb{R})$. By the definition of $m_{P,dD}$, Eq. \eqref{cond:bilinear} holds. By the definition of $\mathcal{P}$, we know that $P_{Y_D,D,Z}=\mathbb{P}_{Y,D,Z}$ for any $P\in\mathcal{P}$, and thus the following holds:
\begin{align*}
\mathbb{E}[YD\mid Z]=E_{P}[YD\mid Z]
&=
E_{P}[E_{P}[Y\mid D=1,V,Z]E_{P}[D\mid V,Z]\mid Z]
\\
&=
E_{P}[E_{P}[Y_1\mid D=1,V,Z]E_{P}[D\mid V,Z]\mid Z]
\\
&=
E_{P}[E_{P}[Y_1\mid V,X]E_{P}[D\mid V,Z]\mid Z]
\\
&=
E_{P}[m_{P,1}(V,X)m_{P,D}(V,Z)\mid Z]\\
&=
\int_{{\mathcal{V}}}m_{P,1D}(v,Z)dv,
\end{align*}
where the third line is because $E_P[Y_d\mid D,Z,{V}]=E_P[Y_d\mid X,{V}]$, the fourth line follows from the definition of $m_P$, and the last line is because $P_{V|Z}$ is a uniform distribution over $\mathcal{V}$. Then, for any $s\in\mathcal{S}=\mathbf{L}^2(\mathcal{Z},\mathbb{R})$, the above equation implies
\begin{align*}
\mathbb{E}[YDs(Z)]=\mathbb{E}\{s(Z)\mathbb{E}[YD\mid Z]\}
=&
\mathbb{E}\{s(Z)\int_{{\mathcal{V}}}m_{P,1D}(v,Z)dv\}.
\end{align*}
Applying similar arguments to $\mathbb{E}[Y(1-D)s(Z)]$ and $\mathbb{E}[Ds(Z)]$, we can prove that $m_{P}\in\mathcal{M_S}\mbox{ with }\mathcal{S}=\mathbf{L}^2(\mathcal{Z},\mathbb{R})$.

Next, we show the other direction holds. Let $m$ be any element of $\mathcal{R}$ satisfying \eqref{cond:bilinear}-\eqref{cond:Gamma} with $\mathcal{S}=\mathbf{L}^2(\mathcal{Z},\mathbb{R})$.
Given such a $m$, we construct a distribution $P$ as below.
Define the marginal distribution of $Z$, $P_Z$, to be the same as the marginal distribution $\mathbb{P}_{Z}$. 
Define the distribution of $V$ given $Z$, $P_{V|Z}$, as the uniform distribution over $\mathcal{V}$. 
Define the distribution of $D$ given $(V,Z)$ as  
$$
P(D=1\mid V,Z)=m_D(V,Z),
$$
for which the definition of $\mathcal{R}\subseteq\mathcal{M}$ guarantees the right hand side is between $0$ and $1$. 
For the distribution of $(Y_0,Y_1)\mid (D,Z,V)$, for $y_0,y_1\in\{0,1\}$, define  
\begin{align}\label{proof_sharp_jointY}
P((Y_0,Y_1)=(y_0,y_1)\mid D,Z,V)=m_0(V,X)^{y_0}(1-m_0(V,X))^{1-y_0}m_1(V,X)^{y_1}(1-m_1(V,X))^{1-y_1},
\end{align}
for which $\mathcal{R}\subseteq\mathcal{M}$ guarantees the right hand side is between $0$ and $1$. 
Now we are going to show (i) $P_{Y_D,D,Z}=\mathbb{P}_{Y,D,Z}$, (ii) $E_P[Y_d\mid D,Z,{V}]=E_P[Y_d\mid X,{V}]$ for each $d=0,1$, (iii)  $(m_{0},m_{1},m_{D})=(m_{P,0},m_{P,1},m_{P,D})$.
We can show (i) as follows. 
By Eqs. \eqref{cond:bilinear}-\eqref{cond:Gamma}, we have 
\begin{align}\label{proof_sharp1}
P(Y_D=1\mid D=1,Z)
&=
\frac{P(Y_1=1,D=1\mid Z)}{P(D=1\mid Z)}\nonumber\\
&=
\frac{E_P[P(Y_1=1,D=1\mid V,Z)\mid Z]}{E_P[P(D=1\mid V,Z)\mid Z]}\nonumber\\
&=
\frac{\int_{\mathcal{V}} P(Y_1=1,D=1\mid V=v,Z)dv}{\int_{\mathcal{V}} P(D=1\mid V=v,Z)dv}\nonumber\\
&=
\frac{\int_{\mathcal{V}} m_1(v,X)m_D(v,Z)dv}{\int_{\mathcal{V}} m_D(v,Z)dv}\nonumber\\
&=
\frac{\mathbb{E}[YD\mid Z]}{\mathbb{E}[D\mid Z]}\nonumber\\
&=
\mathbb{P}(Y=1\mid D=1,Z),
\end{align}
where the fourth equality is by the construction of $P$ that $P(D=1\mid V,Z)=m_D(V,Z)$ and $P(Y_1=y_1\mid D,Z,V)=m_1(V,X)^{y_1}(1-m_1(V,X))^{1-y_1}$ obtained from \eqref{proof_sharp_jointY}, the fifth equality is because $m$ satisfies Eqs. \eqref{cond:bilinear}-\eqref{cond:Gamma} with $\mathcal{S}=\mathbf{L}^2(\mathcal{Z},\mathbb{R})$ which implies $m$ also satisfies Eqs. \eqref{sharp_1} to \eqref{sharp_3} by Corollary \ref{cor:bilinar_reprentation}. Similarly, we can show $P_{Y_D\mid D=0,Z}=\mathbb{P}_{Y\mid D=0,Z}$. In addition, the proof for the denominator in \eqref{proof_sharp1} shows that $P(D=1\mid Z)=\mathbb{P}(D=1\mid Z)$. Since $P_Z=\mathbb{P}_Z$ by the construction of $P$, we have $P_{Y_D,D,Z}=\mathbb{P}_{Y,D,Z}$. 

For (ii), for $d=0,1$, by the definition of the distribution $P_{Y_d\mid (D,V,Z)}$ in \eqref{proof_sharp_jointY}, we have 
\begin{align}\label{proof_middlestep1}
    E_P[Y_d\mid D,Z,{V}]=m_d(V,X).
\end{align}
Since $E_P[Y_d\mid D,Z,{V}]$ only depends on $(V,X)$, it yields from the law of iterated expectation that
\begin{align}\label{proof_middlestep2}
E_P[Y_d\mid X,{V}]=E_P\{E_P[Y_d\mid D,Z,{V}]\mid X,{V}\}=E_P[Y_d\mid D,Z,{V}].
\end{align}
For (iii), for $d=0,1$, we can see that  
$$
m_{P,d}(V,X)=E_P[Y_d\mid X,{V}]=E_P[Y_d\mid D,Z,{V}]=m_{d}(V,X),
$$
where the first equality is by definition, the second equality is by \eqref{proof_middlestep2}, and the last equality is based on \eqref{proof_middlestep1}. Besides, by the construction of $P$, we have 
$m_D(V,Z)=P(D=1\mid V,Z)=m_{P,D}(V,Z).$
\end{proof}

\begin{proof}[{Proof of Theorem \ref{theorem:sharp_binary_determonistic_monot}}]
Under the assumption of this theorem, i.e., $m_D^L(V,Z)=m_D^U(V,Z)=m_D(V,Z)$, inequality constraints in \eqref{eq:mccormick_relax1_0} and \eqref{eq:mccormick_relax4_1} become equality constraints in \eqref{cond:bilinear}. 
Therefore, the conclusion of this theorem holds. 
\end{proof}

\begin{proof}[{Proof of Corollary \ref{coro_equiv_CvR_HV}}]We show the results for $\overline\beta^\star$. Similar arguments can be applied to show the results for $\underline\beta^\star$. Denote $p(Z)=\mathbb{P}(D=1\mid Z)$. When there are no covariates, $\Gamma^\star(m)$ for the ATE can be simplified to 
$$
\Gamma^\star(m)=\int_\mathcal{V}m_1(v)dv-\int_\mathcal{V}m_0(v)dv.
$$
Consider the constraints that characterize $\mathcal{M}^r_\mathcal{S}$. Since we set $m_D^L(V,Z)=m_D^U(V,Z)=1\{p(Z)\geq V\}$, it implies that we restrict $m_D(V,Z)=1\{p(Z)\geq V\}$ as well for any $m\in\mathcal{M}^r_\mathcal{S}$. Because we set $m_d^L(V,Z)=0$ and $m_d^U(V,Z)=1$ for $d=0,1$, the constraints in \eqref{eq:mccormick_bounds_m}, \eqref{eq:mccormick_relax1_0}, and \eqref{eq:mccormick_relax4_1} become
\begin{align}\label{const_cvr1}
&0\leq m_d(v)\leq 1,\text{ for }d=0,1\\\label{const_cvr2}
&m_D(v,z)=1\{p(z)\geq v\},\\\label{const_cvr3}
&m_{0D}(v,z)=m_0(v)(1-m_D(v,z))\text{ and }m_{1D}(v,z)=m_1(v)m_D(v,z),
\end{align}
for $\forall (v,z)\in{\mathcal{V}}\times \mathcal{Z}$.

Given $\mathcal{S}=\mathbf{L}^2(\mathcal{Z},\mathbb{R})$ and based on \eqref{const_cvr3}, we can see that the constraints in \eqref{cond:Gamma} are equivalent to 
\begin{align}\label{const_pz1}
\mathbb{E}[YD\mid Z]=&\int_{\mathcal{V}}m_1(v)1\{p(Z)\geq v\}dv=\int_{0}^{p(Z)}m_1(v)dv,\\\label{const_pz2}
\mathbb{E}[Y(1-D)\mid Z]=&\int_{\mathcal{V}}m_0(v)1\{p(Z)<v\}dv=\int_{p(Z)}^1m_0(v)dv,\\\label{const_pz3}
\mathbb{E}[D\mid Z]=&\int_\mathcal{V}1\{p(Z)\geq v\}dv=p(Z).
\end{align}
It is easy to see that the constraints in  \eqref{const_cvr2},  \eqref{const_cvr3}, and \eqref{const_pz3} impose no restrictions on $\int_\mathcal{V}m_1(v)dv$ and $\int_\mathcal{V}m_0(v)dv$. Therefore, $\overline\beta^\star$ can be rewritten to be
\begin{equation*}
\overline\beta^\star=\sup_{m\in\mathcal{M}}\left\{\int_\mathcal{V}m_1(v)dv-\int_\mathcal{V}m_0(v)dv\right\},~~m\text{ subject to \eqref{const_cvr1}, \eqref{const_pz1}, and \eqref{const_pz2}.}
\end{equation*}
Denote the upper bound of ATE of \cite{heckman/vylacil:2001:book} as
\begin{align*}
\overline\beta=&p^U\mathbb{E}[Y\mid p(Z)=p^U,D=1]+(1-p^U)-(1-p^L)\mathbb{E}[Y\mid p(Z)=p^L,D=0].
\end{align*}
For any $p(z)\in[0,1]$ and $d=0,1$, the following equation holds:
\begin{align*}
    \int_\mathcal{V}m_d(v)dv=\int_0^{p(z)}m_d(v)dv+\int_{p(z)}^1m_d(v)dv.
\end{align*}
Therefore, for any $m_0,m_1$ that satisfy \eqref{const_cvr1}, \eqref{const_pz1}, and \eqref{const_pz2}, we can obtain 
\begin{align*}
\int_\mathcal{V}m_0(v)dv=&\int_0^{p^L}m_0(v)dv+\int_{p^L}^1m_0(v)dv\geq \mathbb{E}[Y(1-D)\mid Z=\underline{z}],\\
\int_\mathcal{V}m_1(v)dv=&\int_0^{p^U}m_1(v)dv+\int_{p^U}^1m_1(v)dv\leq (1-p^U)+\mathbb{E}[YD\mid Z=\overline{z}],
\end{align*}
which further imply
\begin{align*}
\int_\mathcal{V}m_1(v)dv-\int_\mathcal{V}m_0(v)dv\leq&(1-p^U)+\mathbb{E}[YD\mid Z=\overline{z}]-\mathbb{E}[Y(1-D)\mid Z=\underline{z}]\\
=&(1-p^U)+\mathbb{E}[Y\mid Z=\overline{z},D=1]p^U-\mathbb{E}[Y\mid Z=\underline{z},D=0](1-p^L)\\
=&\overline\beta.
\end{align*}
Because the above inequality holds for any $m_0,m_1$ that satisfy \eqref{const_cvr1}, \eqref{const_pz1}, and \eqref{const_pz2}, it also holds for the supremum of $\int_\mathcal{V}m_1(v)dv-\int_\mathcal{V}m_0(v)dv$. Thus, $\overline\beta^\star\leq \overline\beta$.

Next, we show that $\overline\beta^\star\geq \overline\beta$ by showing that there exist $m'_0,m'_1$ that satisfy \eqref{const_cvr1}, \eqref{const_pz1}, and \eqref{const_pz2} such that $\overline\beta=\int_\mathcal{V}m'_1(v)dv-\int_\mathcal{V}m'_0(v)dv$. 
For some $y_0^*,y_1^*\in[y^L,y^U]=[0,1]$, we define 
\begin{equation*}\begin{aligned}
m'_{0}(V)
&=
\begin{cases}
\mathbb{E}[Y_0\mid V]&\mbox{ if }V> p^L\\
y_0^*&\mbox{ otherwise},
\end{cases}~~\text{ and }~~
m'_{1}(V)
=
\begin{cases}
\mathbb{E}[Y_1\mid V]&\mbox{ if }V\leq p^U\\
y_1^*&\mbox{ otherwise}.
\end{cases}
\end{aligned}\end{equation*}
Since $Y$ is binary and $y_0^*,y_1^*\in[0,1]$, we can see that $m'_0,m'_1$ satisfy \eqref{const_cvr1}. Because $p(Z)\geq p^L$ holds for any $Z\in\mathcal{Z}$, by the law of iterated expectation, the uniform distribution of $V$ given $Z$, and the constraint $\mathbb{E}[D\mid Z,V]=1[V\leq p(Z)]$, we have
\begin{align}
\mathbb{E}[Y(1-D)\mid Z]=&\mathbb{E}\{\mathbb{E}[Y(1-D)\mid Z,V]\mid Z\}\nonumber\\
=&\mathbb{E}\{\mathbb{E}[Y_0\mid Z,V,D=0]\mathbb{E}[1-D\mid Z,V]\mid Z\}\nonumber\\
=&\mathbb{E}\{\mathbb{E}[Y_0\mid V]\mathbb{E}[1-D\mid Z,V]\mid Z\}\nonumber\\
=&\mathbb{E}\{\mathbb{E}[Y_0\mid V]1[V>p(Z)]\mid Z\}\nonumber\\
=&\int_0^1\mathbb{E}[Y_0\mid V=v]1[v>p(Z)]dv\nonumber\\
=&\int_0^1m'_0(v)1[v>p(Z)]dv.
\end{align}
Similarly, we can show that $\mathbb{E}[YD\mid Z]=\int_0^1m'_1(v)1[p(Z)\geq v]dv$. Thus, $m'_0,m'_1$ also satisfy \eqref{const_pz1} and \eqref{const_pz2}.
Moreover,
by the definition $m'_0$ and Assumption \ref{assumption_DGP}, we have
\begin{align*}
    \mathbb{E}[m'_{0}(V)]
=&
\mathbb{E}\big\{\mathbb{E}[Y_0\mid V]1[V> p^L]+
y_0^*1[V\leq p^L]\big\}\\
=&\mathbb{E}\big\{\mathbb{E}[Y_0\mid V,p(Z)=p^L,D=0]\mathbb{E}[1-D\mid V,p(Z)=p^L]\big\}+y_0^*p^L\\
=&\mathbb{E}\left\{\mathbb{E}[Y(1-D)\mid V,p(Z)=p^L]\right\}+y_0^*p^L\\
=&\mathbb{E}[Y(1-D)\mid p(Z)=p^L]+y_0^*p^L\\
=&\mathbb{E}[Y\mid p(Z)=p^L,D=0](1-p^L)+y_0^*p^L.
\end{align*}
Similar, we can show that
$
\mathbb{E}[  m'_{1}(V)]  =p^U\mathbb{E}[Y\mid D=1,p(Z)=p^U]+(1-p^U)y_1^*.$ If we set $y_0^*=0$ and $y_1^*=1$, then we get
\begin{align*}
  \mathbb{E}[  m'_{1}(V)]-  \mathbb{E}[m'_{0}(V)]
=&p^U\mathbb{E}[Y\mid D=1,p(Z)=p^U]+(1-p^U)-\mathbb{E}[Y\mid p(Z)=p^L,D=0](1-p^L),
\end{align*}
which is the same as $\overline{\beta}$. 
\end{proof}

Below, we present Lemmas \ref{lemma_Gamma_simplified}, \ref{lemma_outerset_m0m1_cvr}, and \ref{lemma_mtilde_sharp} that will be used in the proof of Theorem \ref{coro_equav_MST_CVR}.
\begin{lemma}\label{lemma_Gamma_simplified}Suppose $\omega_{dd'}^\star(V,Z)=\omega_{dd'}^\star(Z)$ does not depend on $V$ for $d,d'=0,1$, and we set $\mathcal{S}=\mathbf{L}^2(\mathcal{Z},\mathbb{R})$. For any generic functions $m_0,m_1:\mathcal{V}\times \mathcal{X}\mapsto\mathbb{R}$, define a map
$$\Sigma^\star(m_0,m_1)=\mathbb{E}\left[c^\star_{01}(X)\int_\mathcal{V}m_0(v,X)dv+c^\star_{10}(X)\int_\mathcal{V}m_1(v,X)dv\right]+C^\star,
$$
where $c^\star_{01}(X)=\mathbb{E}[\omega_{01}^\star(Z)\mid X]$, $c^\star_{10}(X)=\mathbb{E}[\omega_{10}^\star(Z)\mid X]$, and $C^\star$ is a constant given by
\begin{align*}
C^\star=&\mathbb{E}\left\{\mathbb{E}[Y(1-D)\mid Z](\omega^\star_{00}(Z)-\omega^\star_{01}(Z))\right\}+\mathbb{E}\left\{\mathbb{E}[YD\mid Z](\omega^\star_{11}(Z)-\omega^\star_{10}(Z))\right\}.
\end{align*}
Then, for any $m$ that satisfies the constraints in Eq. \eqref{cond:Gamma}, we have
$\Gamma^\star(m)=\Sigma^\star(m_0,m_1)$, which depends only on  $\int_{\mathcal{V}} m_0(v,x)dv$ and $\int_{\mathcal{V}}m_1(v,x)dv$ for all $x\in\mathcal{X}$.
\end{lemma}
\begin{proof}[{Proof of Lemma \ref{lemma_Gamma_simplified}}]Recall that  our target parameter is
\begin{align*}\Gamma^\star(m)&=\mathbb{E}\left[\int_{{\mathcal{V}}} \Big[m_{0D}(v,Z)\omega_{00}^\star(v,Z)+(m_0(v,X)-m_{0D}(v,Z))\omega_{01}^\star(v,Z)\Big]dv\right]
\nonumber\\&\quad+
\mathbb{E}\left[\int_{{\mathcal{V}}} \Big[(m_1(v,X)-m_{1D}(v,Z))\omega_{10}^\star(v,Z)+m_{1D}(v,Z)\omega_{11}^\star(v,Z)\Big]dv\right].
\end{align*}
    Given the assumption that $\mathcal{S}=\mathbf{L}^2(\mathcal{Z},\mathbb{R})$, the constraints in \eqref{cond:Gamma} imply the following constraints:
\begin{equation}\label{constraints_data_conditional}
\begin{aligned}
\mathbb{E}[YD\mid Z]&=\int_{{\mathcal{V}}}m_{1D}(v,Z)dv,\\
\mathbb{E}[Y(1-D)\mid Z]&=\int_{{\mathcal{V}}}m_{0D}(v,Z)dv,\\
\mathbb{E}[D\mid Z]&=\int_{{\mathcal{V}}}m_D(v,Z)dv.
\end{aligned}
\end{equation}
Since $\omega_{dd'}^\star(V,Z)$ does not depend on $V$, we can denote $\omega_{dd'}^\star(V,Z)=\omega_{dd'}^\star(Z)$. Then, together with the constraints in \eqref{constraints_data_conditional}, we can obtain
\begin{align}\label{expression_Gamma*}
\Gamma^\star(m)&=\mathbb{E}\left\{ \mathbb{E}[Y(1-D)\mid Z](\omega_{00}^\star(Z)-\omega_{01}^\star(Z))+\int_{{\mathcal{V}}}m_0(v,X)dv\mathbb{E}[\omega_{01}^\star(Z)\mid X]\right\}
\nonumber\\&\quad+
\mathbb{E}\left\{\int_{{\mathcal{V}}}m_1(v,X)dv\mathbb{E}[\omega_{10}^\star(Z)\mid X]+\mathbb{E}[YD\mid Z](\omega_{11}^\star(Z)-\omega_{10}^\star(Z))\right\}\nonumber\\
&=\mathbb{E}\left\{c^\star_{01}(X)\int_{\mathcal{V}}m_0(v,X)dv+c^\star_{10}(X)\int_{\mathcal{V}}m_1(v,X)dv\right\}+C^\star.
\end{align}
\end{proof}

\begin{lemma}\label{lemma_outerset_m0m1_cvr}Suppose we set $\mathcal{S}=\mathbf{L}^2(\mathcal{Z},\mathbb{R})$ and $m_d^L(v,x)=y^L$, $m_d^U(v,x)=y^U$, $m_D^L(v,z)=0$, and $m_D^U(v,z)=1$  for $d=0,1$ in Lemma \ref{lemma:mccormick}. Then, for any $x\in\mathcal{X}$ and any $m\in\mathcal{M}$ that satisfies the constraints in \eqref{cond:Gamma},  
\begin{align*}
&M_1^L(x)\leq \mathbb{E}[m_{1}(V,x)]\leq M_1^U(x),\text{ where }\\
M_1^L(x)=\sup_{z_0\in \mathcal{Z}_0}&\big\{\mathbb{E}[YD\mid X=x,Z_0=z_0]+y^L\mathbb{E}[1-D\mid X=x,Z_0=z_0]\big\},\nonumber\\
M_1^U(x)= \inf_{z_0\in \mathcal{Z}_0}&\big\{\mathbb{E}[YD\mid X=x,Z_0=z_0]+y^U\mathbb{E}[1-D\mid X=x,Z_0=z_0]\big\},\nonumber
\end{align*}
and
\begin{align*}
&M_0^L(x)\leq \mathbb{E}[m_{0}(V,x)]\leq M_0^U(x),\text{ where }\\
M_0^L(x)=\sup_{z_0\in \mathcal{Z}_0}&\big\{\mathbb{E}[Y(1-D)\mid X=x,Z_0=z_0]+y^L\mathbb{E}[D\mid X=x,Z_0=z_0]\big\},\nonumber\\
M_0^U(x)= \inf_{z_0\in \mathcal{Z}_0}&\big\{\mathbb{E}[Y(1-D)\mid X=x,Z_0=z_0]+y^U\mathbb{E}[D\mid X=x,Z_0=z_0]\big\}.\nonumber
\end{align*}
\end{lemma}
\begin{proof}[{Proof of Lemma \ref{lemma_outerset_m0m1_cvr}}]For $\mathbb{E}[m_{1}(V,x)]=\int_{{\mathcal{V}}}m_1(v,x)dv$, given the fact that we set $m_d^L(v,x)=y^L$, $m_d^U(v,x)=y^U$, $m_D^L(v,z)=0$, and $m_D^U(v,z)=1$, the constraint \eqref{eq:mccormick_relax4_1} of Lemma \ref{lemma:mccormick} implies that, for any $(v,z)\in \mathcal{V}\times\mathcal{Z}$, the following inequalities hold
\begin{align}\begin{cases}
m_{1D}(v,z)\geq y^Lm_{D}(v,z)\\
m_{1D}(v,z)\geq y^Um_{D}(v,z)+m_{1}(v,x)-y^U\\
m_{1D}(v,z)\leq y^Um_{D}(v,z)\\
m_{1D}(v,z)\leq y^Lm_{D}(v,z)+m_{1}(v,x)-y^L.
\end{cases}
\end{align}
Integrate both sides of the above inequalities over $V$ given $Z=z$, we can obtain
\begin{align}\label{manski_em1D_sharp}\begin{cases}
\mathbb{E}[m_{1D}(V,z)]\geq y^L\mathbb{E}[m_{D}(V,z)]\\
\mathbb{E}[m_{1D}(V,z)]\geq y^U\mathbb{E}[m_{D}(V,z)]+\mathbb{E}[m_{1}(V,x)]-y^U\\
\mathbb{E}[m_{1D}(V,z)]\leq y^U\mathbb{E}[m_{D}(V,z)]\\
\mathbb{E}[m_{1D}(V,z)]\leq y^L\mathbb{E}[m_{D}(V,z)]+\mathbb{E}[m_{1}(V,x)]-y^L.
\end{cases}
\end{align}
Recall that the constraints in \eqref{cond:Gamma} and $\mathcal{S}=\mathbf{L}^2(\mathcal{Z},\mathbb{R})$ imply $\mathbb{E}[m_{1D}(V,z)]=\mathbb{E}[YD\mid Z=z]$ and $\mathbb{E}[m_{D}(V,z)]=\mathbb{E}[D\mid Z=z]$. Then, the first and third inequalities in \eqref{manski_em1D_sharp} imply
$$y^L\mathbb{E}[D\mid Z=z]\leq\mathbb{E}[YD\mid Z=z]\leq y^U\mathbb{E}[D\mid Z=z],$$
which holds trivially and do not impose any restrictions on $\mathbb{E}[m_1(V,z)]$. While,  the second and fourth inequalities in \eqref{manski_em1D_sharp} imply
\begin{align*}
\mathbb{E}[YD\mid Z=z]+y^L\mathbb{E}[1-D\mid Z=z]\leq& \mathbb{E}[m_{1}(V,x)]\leq \mathbb{E}[YD\mid Z=z]+y^U\mathbb{E}[1-D\mid Z=z].
\end{align*}
Recall we have $z=(x,z_0)$. Since $\mathbb{E}[m_{1}(V,x)]$ does not depend on $z_0$, we can get the following intersection bounds for $\mathbb{E}[m_{1}(V,x)]$:
\begin{align*}
\sup_{z_0\in \mathcal{Z}_0}\big\{\mathbb{E}[YD\mid X=x,Z_0=z_0]+y^L\mathbb{E}[1-D\mid X=x,Z_0=z_0]\big\}&\leq \mathbb{E}[m_{1}(V,x)]\nonumber\\
\leq \inf_{z_0\in \mathcal{Z}_0}\big\{\mathbb{E}[YD\mid X=x,Z_0=z_0]+y^U&\mathbb{E}[1-D\mid X=x,Z_0=z_0]\big\}.
\end{align*}
Using the constraint \eqref{eq:mccormick_relax1_0} of Lemma \ref{lemma:mccormick} and applying similar arguments, we can obtain the intersection bounds for $\mathbb{E}[m_{0}(V,x)]$ as below
\begin{align*}
\sup_{z_0\in \mathcal{Z}_0}\big\{\mathbb{E}[Y(1-D)\mid X=x,Z_0=z_0]+y^L\mathbb{E}[D\mid X=x,Z_0=z_0]\big\}&\leq \mathbb{E}[m_{0}(V,x)]\nonumber\\
\leq \inf_{z_0\in \mathcal{Z}_0}\big\{\mathbb{E}[Y(1-D)\mid X=x,Z_0=z_0]+y^U&\mathbb{E}[D\mid X=x,Z_0=z_0]\big\}.
\end{align*}
\end{proof}

\begin{lemma}\label{lemma_mtilde_sharp}Suppose the true DGP satisfies Assumption \ref{assumption_DGP} and the classical
monotonicity in \eqref{model_MST} so that $V$ is a scalar unobserved heterogeneity that follows a uniform distribution over $[0,1]$ conditional on $Z$. Denote $p(Z)=\mathbb{E}[D\mid Z]$. Let
% \begin{align}\label{extreme_PS_z}
%     \underline{z}_{0}(x)=\arg\inf_{z_0\in\mathcal{Z}_0}\mathbb{E}[D\mid X=x,Z_0=z_0],~~\overline{z}_{0}(x)=\arg\sup_{z_0\in\mathcal{Z}_0}\mathbb{E}[D\mid X=x,Z_0=z_0(x)],
% \end{align}
% and
\begin{align}\label{extreme_PS}
p^L_x=\inf_{z_0\in\mathcal{Z}_0}p(x,z_0)\text{ and }p^U_x=\sup_{z_0\in\mathcal{Z}_0}p(x,z_0).
\end{align}
Let $M_d^L(x)$ and $M_d^U(x)$ be the objects defined in Lemma \ref{lemma_outerset_m0m1_cvr}. 
Then, for every $s_0(x)\in [M_0^L(x),M_0^U(x)]$ and every $s_1(x)\in [M_1^L(x),M_1^U(x)]$, 
there exist $y_0^*(x),y_1^*(x)\in[y^L,y^U]$  such that 
$$\mathbb{E}[m_d(V,x)]=s_d(x)$$ every $d=0,1$, where 
\begin{equation}\label{construct_m_tilde_single_dim}\begin{aligned}
m_{0}(V,x)
&=
\begin{cases}
\mathbb{E}[Y_0\mid V,X=x]&\mbox{ if }V> p^L_x\\
y_0^*(x)&\mbox{ otherwise},
\end{cases}
\\
m_{1}(V,x)
&=
\begin{cases}
\mathbb{E}[Y_1\mid V,X=x]&\mbox{ if }V\leq p^U_x\\
y_1^*(x)&\mbox{ otherwise}.
\end{cases}
\end{aligned}
\end{equation}
\end{lemma}

% CAN WE WRITE DOWN ALL THE CONDITIONS ON $\psi(v;x)$? 
% $$
% m_{0}(v,x)
% =
% \begin{cases}
% \mathbb{E}[Y_0\mid V=v,X=x]&\mbox{ if }v> p^L_x\\
% \psi(v;x)&\mbox{ otherwise},
% \end{cases}
% $$
% where $\psi(v;x)$ is a function of $[0,p^L_x]$ into $\mathbb{R}$ 
% $$
% \mathbb{E}\big\{\mathbb{E}[Y_0\mid V,X=x]1[V> p^L_x]+
% \psi(V;x)1[V\leq p^L_x]\big\}=s_0(x)
% $$

\begin{proof}[{Proof of Lemma \ref{lemma_mtilde_sharp}}]
By the proof of Theorem 2 in \cite{heckman/vylacil:2001:book}, we know that under the threshold-crossing structure in \eqref{model_MST}, $M_d^L(x)$ and $M_d^U(x)$ for $d=0,1$ are equal to
\begin{equation}\label{HV_m0bounds}\begin{aligned}
M_0^L(x)=&(1-p^L_x)\mathbb{E}[Y\mid D=0,p(Z)=p^L_x,X=x]+y^Lp^L_x\\
M_0^U(x)=&(1-p^L_x)\mathbb{E}[Y\mid D=0,p(Z)=p^L_x,X=x]+y^Up^L_x,
\end{aligned}\end{equation}
and
\begin{equation}\label{HV_m1bounds}\begin{aligned}
M_1^L(x)=&p^U_x\mathbb{E}[Y\mid D=1,p(Z)=p^U_x,X=x]+(1-p^U_x)y^L\\
M_1^U(x)=&p^U_x\mathbb{E}[Y\mid D=1,p(Z)=p^U_x,X=x]+(1-p^U_x)y^U.
\end{aligned}\end{equation}
For any $s_0(x)\in [M_0^L(x),M_0^U(x)]$ and $s_1(x)\in [M_1^L(x),M_1^U(x)]$, we can express them as below
\begin{align*}
    s_0(x)=&(1-p^L_x)\mathbb{E}[Y\mid D=0,p(Z)=p^L_x,X=x]+p^L_xy_0^*(x),\\
    s_1(x)=&p^U_x\mathbb{E}[Y\mid D=1,p(Z)=p^U_x,X=x]+(1-p^U_x)y_1^*(x),
\end{align*}
for some $y_0^*(x),y_1^*(x)\in[y^L,y^U]$. Under the classical
monotonicity, we have $\mathbb{E}[D\mid V,Z]=1[V\leq p(Z)]$. In addition,
by the definition of $m_{0}(V,x)$ given in \eqref{construct_m_tilde_single_dim}, for any given $x$, we have
\begin{align*}
    \mathbb{E}[m_{0}(V,x)]
=&
\mathbb{E}\big\{\mathbb{E}[Y_0\mid V,X=x]1[V> p^L_x]+
y_0^*(x)1[V\leq p^L_x]\big\}\\
=&\mathbb{E}\big\{\mathbb{E}[Y_0\mid V,X=x,p(Z)=p^L_x,D=0]\mathbb{E}[1-D\mid V,X=x,p(Z)=p^L_x]\big\}+y_0^*(x)p^L_x\\
=&\mathbb{E}\left\{\mathbb{E}[Y(1-D)\mid V,X=x,p(Z)=p^L_x]\right\}+y_0^*(x)p^L_x\\
=&\mathbb{E}[Y(1-D)\mid X=x,p(Z)=p^L_x]+y_0^*(x)p^L_x\\
=&\mathbb{E}[Y\mid X=x,p(Z)=p^L_x,D=0](1-p^L_x)+y_0^*(x)p^L_x\\
=&s_0(x),
\end{align*}
where the second equality is due to Assumption \ref{assumption_DGP} that $Y_0$ and $(D,Z)$ are mean-independent given $(V,X)$ and the fact that $\mathbb{E}[D\mid V,Z]=1[V\leq p(Z)]$ under the threshold-crossing structure, the fourth equality is by the law of iterated expectation given that $V$ is independent of $Z$, and the last equality is by the definition of $s_0(x)$. 
Similarly, we can show that
\begin{align*}
 \mathbb{E}[  m_{1}(V,x)]  =&p^U_x\mathbb{E}[Y\mid D=1,p(Z)=p^U_x,X=x]+(1-p^U_x)y_1^*(x)=s_1(x).
\end{align*}
\end{proof}

\begin{proof}[{Proof of Theorem \ref{coro_equav_MST_CVR}}] (i) First, we consider the case with no shape restrictions. Under Assumption \ref{assumption_DGP} and the threshold-crossing structure in \eqref{model_MST}, both $\mathcal{M}^r_\mathcal{S}$ and $\tilde{\mathcal{M}}_{01,\tilde{\mathcal{S}}}$ are non-empty, as $m_{\mathbb{P}}\in\mathcal{M}^r_\mathcal{S}$ and $(m_{\mathbb{P},0},m_{\mathbb{P},1})\in\mathcal{M}^r_{01,\mathcal{S}}$. Because $\tilde{\Gamma}^\star(\tilde{m}_0,\tilde{m}_1)$ and the constraints that characterize $\tilde{\mathcal{M}}_{01,\tilde{\mathcal{S}}}$ are all linear in $(\tilde{m}_0,\tilde{m}_1)$, the set $\{\tilde{\Gamma}^\star(\tilde{m}_0,\tilde{m}_1):~(\tilde{m}_0,\tilde{m}_1)\in\tilde{\mathcal{M}}_{01,\tilde{\mathcal{S}}}\}$ is the same as an interval with two ending points 
$$\inf_{(\tilde{m}_0,\tilde{m}_1)\in \tilde{\mathcal{M}}_{01,\tilde{\mathcal{S}}}}\tilde{\Gamma}^\star(\tilde{m}_0,\tilde{m}_1)\text{ and }\sup_{(\tilde{m}_0,\tilde{m}_1)\in \tilde{\mathcal{M}}_{01,\tilde{\mathcal{S}}}}\tilde{\Gamma}^\star(\tilde{m}_0,\tilde{m}_1).$$
Recall that $\left\{\Gamma^\star(m): m\in\mathcal{M}^r_\mathcal{S}\right\}$ is also equal to an interval with two ending points 
$$\inf_{m\in \mathcal{M}^r_\mathcal{S}}\Gamma^\star(m)\text{ and }\sup_{m\in \mathcal{M}^r_\mathcal{S}}\Gamma^\star(m).$$
Below, we show that $\sup_{m\in \mathcal{M}^r_\mathcal{S}}\Gamma^\star(m)=\sup_{(\tilde{m}_0,\tilde{m}_1)\in \tilde{\mathcal{M}}_{01,\tilde{\mathcal{S}}}}\tilde{\Gamma}^\star(\tilde{m}_0,\tilde{m}_1)$. Similar arguments can be applied to show that the same result holds for the infimum.

First, we show that for any $ (\tilde{m}_0,\tilde{m}_1)\in\tilde{\mathcal{M}}_{01,\tilde{\mathcal{S}}}$, $\tilde{\Gamma}^\star(\tilde{m}_0,\tilde{m}_1)$ depends only on $\int_0^1 \tilde{m}_0(u,x)du$ and $\int_0^1\tilde{m}_1(u,x)du$ for all $x$ in its support; and for any $m\in\mathcal{M}^r_\mathcal{S}$, $\Gamma^\star(m)$ depends only on $\int_{{\mathcal{V}}}m_0(v,x)dv$ and $\int_{{\mathcal{V}}}m_1(v,x)dv$ for all $x$ in its support. Recall that under condition (ii), $\omega_{dd'}^\star(V,Z)=\omega_{dd'}^\star(Z)$ does not depend on $V$. Then, under condition (iii), the target parameter of MST can be expressed as 
\begin{align}
\tilde{\Gamma}^\star(\tilde{m}_0,\tilde{m}_1)=&\mathbb{E}\left[\int_0^1\tilde{m}_0(u,X)\tau^\star_0(u,Z)du\right]+\mathbb{E}\left[\int_0^1\tilde{m}_1(u,X)\tau^\star_1(u,Z)du\right]\nonumber\\
=&\mathbb{E}\left[\int_0^1\tilde{m}_0(u,X)\Big[1[u>p(Z)]\omega^\star_{00}(Z)+1[u\leq p(Z)]\omega^\star_{01}(Z)\Big]du\right]\nonumber\\
&\quad+\mathbb{E}\left[\int_0^1\tilde{m}_1(u,X)\Big[1[u>p(Z)]\omega^\star_{10}(Z)+1[u\leq p(Z)]\omega^\star_{11}(Z)\Big]du\right]\nonumber\\
=&\mathbb{E}\left[\int_0^1\tilde{m}_0(u,X)\omega^\star_{01}(Z)du+\int_0^1\tilde{m}_0(u,X)1[u>p(Z)](\omega^\star_{00}(Z)-\omega^\star_{01}(Z))du\right]\nonumber\\
&\quad+\mathbb{E}\left[\int_0^1\tilde{m}_1(u,X)\omega^\star_{01}(Z)du+\int_0^1\tilde{m}_1(u,X)1[u\leq p(Z)](\omega^\star_{11}(Z)-\omega^\star_{10}(Z))du\right].
\end{align}
Note that $\tilde{\mathcal{S}}=\mathbf{L}^2(\{0,1\}\times\mathcal{Z},\mathbb{R})$ implies $\tilde{\mathcal{S}}=\{\tilde{s}(d,z):~\tilde{s}(d,z)=1[d=d']s(z)\text{ for }d'\in\{0,1\},~s\in\mathbf{L}^2(\mathcal{Z},\mathbb{R})\}$. For $ (\tilde{m}_0,\tilde{m}_1)\in\tilde{\mathcal{M}}_{01,\tilde{\mathcal{S}}}$, based on the constraints in \eqref{MST_restrictions} and condition (i) that $\tilde{\mathcal{S}}=\mathbf{L}^2(\{0,1\}\times\mathcal{Z},\mathbb{R})$, we know that $\tilde{m}_0$ and $\tilde{m}_1$ satisfy
\begin{align}\label{HV_m0restriction}
    \int_0^1\tilde{m}_0(u,X)1[u>p(Z)]du=&\mathbb{E}[Y(1-D)\mid Z],\\\label{HV_m1restriction}
    \int_0^1\tilde{m}_1(u,X)1[u\leq p(Z)]du=&\mathbb{E}[YD\mid Z].
\end{align}
Plugging \eqref{HV_m0restriction} and \eqref{HV_m1restriction} into $\tilde{\Gamma}^\star(\tilde{m}_0,\tilde{m}_1)$, we can obtain
\begin{align}\label{equv_Gamma*tilde}
\tilde{\Gamma}^\star(\tilde{m}_0,\tilde{m}_1)=&
\mathbb{E}\left[\int_0^1\tilde{m}_0(u,X)\mathbb{E}[\omega^\star_{01}(Z)\mid X]du+\mathbb{E}[Y(1-D) \mid Z](\omega^\star_{00}(Z)-\omega^\star_{01}(Z))\right]\nonumber\\
&\quad+\mathbb{E}\left[\int_0^1\tilde{m}_1(u,X)\mathbb{E}[\omega^\star_{01}(Z)\mid X]du+\mathbb{E}[YD\mid Z](\omega^\star_{11}(Z)-\omega^\star_{10}(Z))\right]\nonumber\\
=&\mathbb{E}\left[c^\star_{01}(X)\int_0^1\tilde{m}_0(u,X)du+c^\star_{10}(X)\int_0^1\tilde{m}_1(u,X)du\right]+C^\star\nonumber\\
=&\Sigma^\star(\tilde{m}_0,\tilde{m}_1),
\end{align}
where $\Sigma^\star(\tilde{m}_0,\tilde{m}_1)$ is defined in Lemma \ref{lemma_Gamma_simplified}, which depends only on $\int_0^1 \tilde{m}_0(u,x)du$ and $\int_0^1\tilde{m}_1(u,x)du$ for all $x$ in its support.
For our target parameter,
because $m\in\mathcal{M}^r_\mathcal{S}$ satisfies \eqref{cond:Gamma}, under condition (i) that $\mathcal{S}=\mathbf{L}^2(\mathcal{Z},\mathbb{R})$ and condition (ii) that $\omega_{dd'}^\star(V,Z)=\omega_{dd'}^\star(Z)$, Lemma \ref{lemma_Gamma_simplified} implies that
\begin{align}\label{equv_Gamma*}
\Gamma^\star(m)&=\Sigma^\star(m_0,m_1),
\end{align}
which depends only on $\int_{{\mathcal{V}}}m_0(v,x)dv$ and $\int_{{\mathcal{V}}}m_1(v,x)dv$ for all $x$ in its support.

Second, we show that 
\begin{align}\label{equv_ours_MST_geq}\sup_{m\in \mathcal{M}^r_\mathcal{S}}\Gamma^\star(m)\geq \sup_{(\tilde{m}_0,\tilde{m}_1)\in \tilde{\mathcal{M}}_{01,\tilde{\mathcal{S}}}}\tilde{\Gamma}^\star(\tilde{m}_0,\tilde{m}_1).\end{align}
For any $(\tilde{m}_0,\tilde{m}_1)\in \tilde{\mathcal{M}}_{01,\tilde{\mathcal{S}}}$, we can construct  $m:=(m_0,m_1,m_D,m_{0D},m_{1D})'\in\mathcal{M}$, where
\begin{equation}\label{equv_def_mtilde}\begin{aligned}
m_0(v,x)=&\tilde{m}_0(u,x),\\
m_1(v,x)=&\tilde{m}_1(u,x),\\
    m_D(v,z)=&1[p(z)-u\geq 0],\\
    m_{0D}(v,z)=&m_0(v,x)(1-m_D(v,z)),\\
    m_{1D}(v,z)=&m_1(v,x)m_D(v,z).
\end{aligned}\end{equation}
In the construction of $m$ above, we can understand $u$ as the first element of $v$, and all functions in $m$ are degenerate in the  remaining elements of $v$. By construction, $m$ satisfies \eqref{cond:bilinear}. Under the threshold-crossing structure, we know that $D=1[p(Z)\geq U]$ in the true DGP. Thus, we have $\mathbb{E}[D\mid U,Z]=1[U\leq p(Z)]=m_D(V,Z)$, where the second equality is by the definition of $m_D$ in \eqref{HV_m1restriction}. Thus, by the law of iterated expectation, we obtain
$$
\mathbb{E}[D\mid Z]
=\mathbb{E}\{\mathbb{E}[D\mid U,Z]\mid Z\}
=\mathbb{E}[m_D(V,Z)\mid Z].
$$
In addition, by the definition of $m_{1}$, $m_{D}$, and $m_{1D}$ given in \eqref{equv_def_mtilde}, we have
\begin{align*}
\mathbb{E}[m_{1D}(V,Z)\mid Z]=&\mathbb{E}[m_1(V,X)m_D(V,Z)\mid Z]\\
=&\mathbb{E}[\tilde{m}_1(U,X)1[U\leq p(Z)]\mid Z]\\
=&\int_0^1\tilde{m}_1(u,X)1[u\leq p(Z)]du\\
=& \mathbb{E}[YD\mid Z],
\end{align*}
where the last equality is because $(\tilde{m}_0,\tilde{m}_1)\in \tilde{\mathcal{M}}_{01,\tilde{\mathcal{S}}}$ satisfies \eqref{HV_m1restriction}. Similarly, we can show that $\mathbb{E}[Y(1-D)\mid Z]=\mathbb{E}[m_{0D}(V,Z)\mid Z]$. Therefore, $m$ also satisfies the constraints in \eqref{cond:Gamma}, so that $m\in \mathcal{M}_\mathcal{S}\subseteq\mathcal{M}^r_\mathcal{S}$. Moreover, for $m$ constructed in \eqref{equv_def_mtilde}, we know that $\tilde{\Gamma}^\star(\tilde{m}_0,\tilde{m}_1)=\Sigma^\star(\tilde{m}_0,\tilde{m}_1)=\Gamma^\star(m)$, where the first equality holds by \eqref{equv_Gamma*tilde}, and the second equality holds by \eqref{equv_Gamma*} and the fact that $\int_{\mathcal{V}}m_d(v,x)dv=\int_{0}^1\tilde{m}_d(u,x)du$ for $d=0,1$. Since this result holds for any $(\tilde{m}_0,\tilde{m}_1)\in \tilde{\mathcal{M}}_{01,\tilde{\mathcal{S}}}$, we know that
the inequality in \eqref{equv_ours_MST_geq} holds. 

Third, we show that 
\begin{align}\label{equv_ours_MST_leq}
\sup_{m\in \mathcal{M}^r_\mathcal{S}}\Gamma^\star(m)\leq \sup_{(\tilde{m}_0,\tilde{m}_1)\in \tilde{\mathcal{M}}_{01,\tilde{\mathcal{S}}}}\tilde{\Gamma}^\star(\tilde{m}_0,\tilde{m}_1).
\end{align}
Recall that for any $m\in\mathcal{M}^r_\mathcal{S}$, by Lemma \ref{lemma_Gamma_simplified} we know that $\Gamma^\star(m)=\Sigma^\star(m_0,m_1)$ depends only on $\int_{{\mathcal{V}}}m_1(v,x)dv$ and $\int_{{\mathcal{V}}}m_0(v,x)dv$.  Under assumptions in Lemma \ref{lemma:mccormick} and conditions (i) and (iv), Lemma \ref{lemma_outerset_m0m1_cvr} implies the following outer sets for  $\int_{{\mathcal{V}}}m_1(v,x)dv$ and $\int_{{\mathcal{V}}}m_0(v,x)dv$, where $m_0$ and $m_1$ are the first two elements in $m\in\mathcal{M}^r_\mathcal{S}$:
\begin{align}\label{manski_m1bound_coro}
&M_1^L(x)\leq \mathbb{E}[m_{1}(V,x)]\leq M_1^U(x),\text{ where }\\
M_1^L(x)=\sup_{z_0\in \mathcal{Z}_0}&\big\{\mathbb{E}[YD\mid X=x,Z_0=z_0]+y^L\mathbb{E}[1-D\mid X=x,Z_0=z_0]\big\},\nonumber\\
M_1^U(x)= \inf_{z_0\in \mathcal{Z}_0}&\big\{\mathbb{E}[YD\mid X=x,Z_0=z_0]+y^U\mathbb{E}[1-D\mid X=x,Z_0=z_0]\big\},\nonumber
\end{align}
and
\begin{align}\label{manski_m0bound_coro}
&M_0^L(x)\leq \mathbb{E}[m_{0}(V,x)]\leq M_0^U(x),\text{ where }\\
M_0^L(x)=\sup_{z_0\in \mathcal{Z}_0}&\big\{\mathbb{E}[Y(1-D)\mid X=x,Z_0=z_0]+y^L\mathbb{E}[D\mid X=x,Z_0=z_0]\big\},\nonumber\\
M_0^U(x)= \inf_{z_0\in \mathcal{Z}_0}&\big\{\mathbb{E}[Y(1-D)\mid X=x,Z_0=z_0]+y^U\mathbb{E}[D\mid X=x,Z_0=z_0]\big\}.\nonumber
\end{align}
Then, it is sufficient to show that for any $s_0(x)\in [M_0^L(x),M_0^U(x)]$ and $s_1(x)\in [M_1^L(x),M_1^U(x)]$, we can find a $(\tilde{m}_0,\tilde{m}_1)\in \tilde{\mathcal{M}}_{01,\tilde{\mathcal{S}}}$ such that $\int_0^1\tilde{m}_d(u,x)=s_d(x)$ holds for both $d=0,1$. If so, we know that there exists a $(\tilde{m}^\star_0,\tilde{m}^\star_1)\in \tilde{\mathcal{M}}_{01,\tilde{\mathcal{S}}}$ such that
$\sup_{m_0,m_1\text{ satisfy \eqref{manski_m1bound_coro},\eqref{manski_m0bound_coro}}}\Sigma^\star(m_0,m_1)=\Sigma^\star(\tilde{m}^\star_0,\tilde{m}^\star_1).$
Then, we have
\begin{align}\label{opposite_ineq_HV_Manski}
\sup_{m\in \mathcal{M}^r_\mathcal{S}}\Gamma^\star(m)\leq \sup_{m_0,m_1\text{ satisfy \eqref{manski_m1bound_coro},\eqref{manski_m0bound_coro}}}\Sigma^\star(m_0,m_1)=\Sigma^\star(\tilde{m}^\star_0,\tilde{m}^\star_1)\leq\sup_{(\tilde{m}_0,\tilde{m}_1)\in \tilde{\mathcal{M}}_{01,\tilde{\mathcal{S}}}}\tilde{\Gamma}^\star(\tilde{m}_0,\tilde{m}_1).\end{align}
Recall that we define $p^L_x=\inf_{z_0\in\mathcal{Z}_0}p(x,z_0)\text{ and }p^U_x=\sup_{z_0\in\mathcal{Z}_0}p(x,z_0)$. %, where
% $$
% \underline{z}_{0}(x)=\arg\inf_{z_0\in\mathcal{Z}_0}\mathbb{E}[D\mid X=x,Z_0=z_0],~~\overline{z}_{0}(x)=\arg\sup_{z_0\in\mathcal{Z}_0}\mathbb{E}[D\mid X=x,Z_0=z_0(x)].
% $$
Under the threshold-crossing structure, Lemma \ref{lemma_mtilde_sharp} implies that for any $s_0(x)\in [M_0^L(x),M_0^U(x)]$ and $s_1(x)\in [M_1^L(x),M_1^U(x)]$, there exist some $y^\star_0(x),y^\star_1(x)\in[y^L,y^U]$ such that $\mathbb{E}[\tilde{m}_{0}(U,x)]
=s_0(x)$ and $
\mathbb{E}[  \tilde{m}_{1}(U,x)]  =s_1(x)$ hold, where
\begin{equation}\label{construct_m_tilde_single_dim_coro}
\begin{aligned}
\tilde{m}_{0}(U,x)
&=
\begin{cases}
\mathbb{E}[Y_0\mid U,X=x]&\mbox{ if }U> p^L_x\\
y_0^*(x)&\mbox{ otherwise},
\end{cases}
\\
\tilde{m}_{1}(U,x)
&=
\begin{cases}
\mathbb{E}[Y_1\mid U,X=x]&\mbox{ if }U\leq p^U_x\\
y_1^*(x)&\mbox{ otherwise},
\end{cases}
\end{aligned}\end{equation}
and $U$ is a scalar unobserved heterogeneity in $D=1[p(Z)\geq U]$ that follows a uniform distribution over $[0,1]$ conditional on $Z$. For any $p(Z)\geq p^L_x$, we have
\begin{align}\label{HV_rest_Y1-D}
\mathbb{E}[Y(1-D)\mid Z]=&\mathbb{E}\{\mathbb{E}[Y(1-D)\mid Z,U]\mid Z\}\nonumber\\
=&\mathbb{E}\{\mathbb{E}[Y_0\mid Z,U,D=0]\mathbb{E}[1-D\mid Z,U]\mid Z\}\nonumber\\
=&\mathbb{E}\{\mathbb{E}[Y_0\mid X,U]\mathbb{E}[1-D\mid Z,U]\mid Z\}\nonumber\\
=&\mathbb{E}\{\mathbb{E}[Y_0\mid X,U]1[U>p(Z)]\mid Z\}\nonumber\\
=&\int_0^1\mathbb{E}[Y_0\mid U=u,X]1[u>p(Z)]du\nonumber\\
=&\int_0^1\tilde{m}_0(u,X)1[u>p(Z)]du,
\end{align}
where the fourth equality is due to the threshold-crossing structure, and the last equality is by the definition of $\tilde{m}_{0}(u,x)$ in \eqref{construct_m_tilde_single_dim_coro} and the fact that $p(Z)\geq p^L_x$.
Because $p(Z)\geq p^L_x$ holds for any $Z\in\mathcal{Z}$ by the definition of $p^L_x$, we know that the equality above is satisfied for all $Z\in\mathcal{Z}$.
Similarly, we can show that for all $Z\in\mathcal{Z}$, 
\begin{align}\label{HV_rest_YD}
    \int_0^1\tilde{m}_1(u,X)1[u\leq p(Z)]du=&\mathbb{E}[YD\mid Z].
\end{align}
Since under condition (i), the constraints in \eqref{MST_restrictions} are equivalent to \eqref{HV_rest_Y1-D} and \eqref{HV_rest_YD}, we know that $(\tilde{m}_0,\tilde{m}_1)$ defined in \eqref{construct_m_tilde_single_dim_coro} satisfies $(\tilde{m}_0,\tilde{m}_1)\in \tilde{\mathcal{M}}_{01,\tilde{\mathcal{S}}}$. 
Then, the inequality in \eqref{opposite_ineq_HV_Manski} also holds, which fulfills the proof.

(ii) {Next, we consider the case under the same shape restrictions on the integrated marginal treatment response functions, $$(\int_{0}^1\tilde{m}_0(u,x)du,\int_{0}^1\tilde{m}_1(u,x)du)\text{ and }(\int_{\mathcal{V}}m_0(v,x)dv,\int_{\mathcal{V}}m_1(v,x)dv).$$
We prove that the two inequalities in \eqref{equv_ours_MST_geq} and \eqref{equv_ours_MST_leq} still hold under these shape restrictions.
For any $(\tilde{m}_0,\tilde{m}_1)$ such that $\int_{0}^1\tilde{m}_0(u,x)du\text{ and }\int_{0}^1\tilde{m}_1(u,x)du$ satisfy certain shape restrictions, we can see that $\int_{\mathcal{V}}m_0(v,x)dv\text{ and }\int_{\mathcal{V}}m_1(v,x)dv$, with $(m_0,m_1)$ constructed in \eqref{equv_def_mtilde}, also satisfy the same shape restrictions. Therefore, the same arguments used to prove the inequality in \eqref{equv_ours_MST_geq} can also be applied to show that it holds under the shape restrictions. 
}

 {Next, we show that the inequality in \eqref{equv_ours_MST_leq} holds. Denote $\mathcal{M}_{01}$ as the projection of $\mathcal{M}$ onto the space of $(m_0,m_1)$. Under the shape restrictions on $(\int_{\mathcal{V}}m_0(v,x)dv,\int_{\mathcal{V}}m_1(v,x)dv)$, the outer set of $\int_{\mathcal{V}}m_d(v,x)dv$ becomes 
\begin{align}\label{new_outer}
[M_d^L(x),M_d^U(x)]\cap\Big\{&\int_{\mathcal{V}}m_d(v,x)dv:~ \text{there exists }(m_0,m_1)\in\mathcal{M}_{01}\nonumber\\&\text{ that satisfies the restrictions on $(\int_{\mathcal{V}}m_0(v,x)dv,\int_{\mathcal{V}}m_1(v,x)dv)$}\Big\}.
\end{align}
In Lemma \ref{lemma_mtilde_sharp}, we have shown that for any $s_d(x)\in [M_d^L(x),M_d^U(x)]$, we can construct $\tilde{m}_d$ such that $\mathbb{E}[\tilde{m}_{d}(U,x)]
=s_d(x)$. Since the set in \eqref{new_outer} is a subset of $[M_d^L(x),M_d^U(x)]$, we know that the same result still holds for the subset. That is, for any $s_d(x)$ in the set defined in \eqref{new_outer}, we can construct $\tilde{m}_d$ such that $\mathbb{E}[\tilde{m}_{d}(U,x)]
=s_d(x)$. In addition, using the same proof as in the third step of (i), we can conclude that such $(\tilde{m}_0,\tilde{m}_1)$ belongs to $\tilde{\mathcal{M}}_{01,\tilde{\mathcal{S}}}$. Moreover, since any $(s_0(x),s_1(x))$ in the set of \eqref{new_outer} satisfies the shape restrictions on $(\int_{\mathcal{V}}m_0(v,x)dv,\int_{\mathcal{V}}m_1(v,x)dv)$, we know that $(\mathbb{E}[\tilde{m}_{0}(U,x)],\mathbb{E}[\tilde{m}_{1}(U,x)])
=(s_0(x),s_1(x))$ also satisfies the same shape restrictions. Hence, the inequality in \eqref{equv_ours_MST_leq} holds.}
\end{proof}

\begin{proof}[{Proof of Proposition \ref{prop_finite_dim_approx}}]
Because $\mathbf{B}'\eta_2\in\mathcal{M}$, it is easy to see that $\underline{\beta}_a^\star\geq\underline{\beta}^\star$ and $\overline{\beta}_a^\star\leq\overline{\beta}^\star$. In what follows, we show $\underline{\beta}_a^\star\leq\underline{\beta}^\star$ and $\overline{\beta}_a^\star\geq\overline{\beta}^\star$.  Recall that $X\in\{x_1,...,x_{K_X}\}$. Suppose for any given $X=x_j$, we have $Z=(X,Z_0)=(x_j,Z_0)\in\{z_{j1},...,z_{jK_j}\}$. Denote
\begin{equation}
\begin{aligned}
\mu_{kj}=&\mathbb{E}[m_d(V,X)\mid V\in \mathcal{V}_k,X=x_j],\\
\varphi_{kjl}=&\mathbb{E}[m_D(V,Z)\mid V\in \mathcal{V}_k,X=x_j,Z=z_{jl}],\\
\xi_{kjl}=&\mathbb{E}[m_{dD}(V,Z)\mid V\in \mathcal{V}_k,X=x_j,Z=z_{jl}].
\end{aligned}
\end{equation}
With notation abuse, let $b_{D,kjl}(v,z)=1[v\in\mathcal{V}_k,x=x_j,z=z_{jl}]$.
Then, we can rewrite the terms in \eqref{constant_spline_appx} as
\begin{equation}
\begin{aligned}
\Lambda m_d(v,x)=&\sum_{k=1}^{K_V}\sum_{j=1}^{K_X}\mu_{kj}b_{d,kj}(v,x),~~\Lambda m_D(v,z)=\sum_{k=1}^{K_V}\sum_{j=1}^{K_X}\sum_{l=1}^{K_j}\varphi_{kjl}b_{D,kjl}(v,z),\\
\Lambda m_{dD}(v,x)=&\sum_{k=1}^{K_V}\sum_{j=1}^{K_X}\sum_{l=1}^{K_j}\xi_{kjl}b_{D,kjl}(v,z).
\end{aligned}
\end{equation}
First, we show that for some function $w(v,z)$ that is a constant over $\mathcal{V}_k$ for a given $z\in\mathcal{Z}$, we have $\mathbb{E}[m_d(v,X)w(v,Z)]=\mathbb{E}[\Lambda m_d(v,X)w(v,Z)]$. Because $b_{d,kj}(v,x)=1[v\in\mathcal{V}_k,x=x_j]$, we have
\begin{align}\label{finite_approx1}
\mathbb{E}\Big[\int_{{\mathcal{V}}}\Lambda m_d(v,X)w(v,Z) dv\Big]=&\mathbb{E}\left\{\int_{{\mathcal{V}}}\Big[\sum_{k=1}^{K_V}\sum_{j=1}^{K_X}\mu_{kj}b_{d,kj}(v,X)\Big]w(v,Z) dv\right\}\nonumber\\
=&\mathbb{E}\left\{\sum_{j=1}^{K_X}1[X=x_j]\int_{{\mathcal{V}}}\Big[\sum_{k=1}^{K_V}\sum_{j=1}^{K_X}\mu_{kj}b_{d,kj}(v,x_j)\Big]w(v,Z) dv\right\}\nonumber\\
=&\mathbb{E}\left\{\sum_{j=1}^{K_X}1[X=x_j]\sum_{k=1}^{K_V}\int_{\mathcal{V}_k}\mu_{kj}b_{d,kj}(v,x_j)w(v,Z) dv\right\}\nonumber\\
=&\mathbb{E}\left\{\sum_{j=1}^{K_X}\sum_{l=1}^{K_j}1[X=x_j,Z=z_{jl}]\sum_{k=1}^{K_V}\int_{\mathcal{V}_k}\mu_{kj}w(v,z_{jl})dv\right\}.
\end{align}
By the definition of $\mu_{kj}$ and the fact that $w(v,z)$ is a constant over $\mathcal{V}_k$, we know that
\begin{align}\label{finite_approx2}
\sum_{k=1}^{K_V}\int_{\mathcal{V}_k}\mu_{kj}w(v,z_{jl})dv=&\sum_{k=1}^{K_V}\int_{\mathcal{V}_k}\mathbb{E}[m_d(V,X)w(V,z_{jl})\mid V\in \mathcal{V}_k,X=x_j]dv\nonumber\\
=&\sum_{k=1}^{K_V}\mathbb{E}[m_d(V,X)w(V,z_{jl})\mid V\in \mathcal{V}_k,X=x_j]\mathbb{P}(V\in\mathcal{V}_k)\nonumber\\
=&\mathbb{E}[m_d(V,x_j)w(V,z_{jl})\mid X=x_j],
\end{align}
where the last line is by the law of iterated expectation and the assumption that $V$ given $X$ is uniformly distributed. Substituting \eqref{finite_approx2} into \eqref{finite_approx1} leads to
\begin{align}
\mathbb{E}\Big[\int_{{\mathcal{V}}}\Lambda m_d(v,X)w(v,Z) dv\Big]=&\mathbb{E}\left\{\sum_{j=1}^{K_X}\sum_{l=1}^{K_l}1[X=x_j,Z=z_{jl}]\mathbb{E}[m_d(V,x_j)w(V,z_{jl})\mid X=x_j]\right\}\nonumber\\
=&\mathbb{E}\Big[\int_{{\mathcal{V}}}m_d(v,X)w(v,Z)dv\Big].
\end{align}
Next, we show that for some function $w(v,z)$ that is a constant over $\mathcal{V}_k$ for a given $z\in\mathcal{Z}$, we have $\mathbb{E}[m_{dD}(v,Z)w(v,Z)]=\mathbb{E}[\Lambda m_{dD}(v,Z)w(v,Z)]$.
Note that 
\begin{align}
\mathbb{E}\Big[\int_{{\mathcal{V}}}\Lambda m_{dD}(v,Z)w(v,Z) dv\Big]=&\mathbb{E}\left\{\int_{{\mathcal{V}}}\Big[\sum_{k=1}^{K_V}\sum_{j=1}^{K_X}\sum_{l=1}^{K_j}\xi_{kjl}b_{D,kjl}(v,Z)\Big]w(v,Z) dv\right\}\nonumber\\
=&\mathbb{E}\left\{\sum_{j=1}^{K_X}\sum_{l=1}^{K_j}1[X=x_j,Z=z_{jl}]\int_{{\mathcal{V}}}\Big[\sum_{k=1}^{K_V}\xi_{kjl}b_{D,kjl}(v,z_{jl})\Big]w(v,z_{jl}) dv\right\}\nonumber\\
=&\mathbb{E}\left\{\sum_{j=1}^{K_X}\sum_{l=1}^{K_j}1[X=x_j,Z=z_{jl}]\sum_{k=1}^{K_V}\int_{\mathcal{V}_k}\xi_{kjl}w(v,z_{jl})dv\right\}\label{finite_approx_g1};\mbox{ and }
\\
\sum_{k=1}^{K_V}\int_{\mathcal{V}_k}\xi_{kjl}w(v,z_{jl})dv=&\sum_{k=1}^{K_V}\int_{\mathcal{V}_k}\mathbb{E}[m_{dD}(V,Z)\mid V\in \mathcal{V}_k,X=x_j,Z=z_{jl}]w(v,z_{jl})dv\nonumber\\
=&\sum_{k=1}^{K_V}\mathbb{E}[m_{dD}(V,Z)w(V,Z)\mid V\in \mathcal{V}_k,X=x_j,Z=z_{jl}]\mathbb{P}(V\in\mathcal{V}_k)\nonumber\\
=&\mathbb{E}[m_{dD}(V,z_{jl})w(V,z_{jl})\mid X=x_j].\label{finite_approx_g2}
\end{align}
Plugging \eqref{finite_approx_g2} into \eqref{finite_approx_g1} gives the desirable result.
Similar arguments can be applied to show that the same result holds for $m_D$. Let us set $w(v,z)=\omega^\star_{dd'}(v,z)$. Then, we have $\Gamma^\star(\Lambda m)=\Gamma^\star(m)$.

Since all constraints in \eqref{cond:Gamma} and \eqref{eq:mccormick_bounds_m}-\eqref{eq:mccormick_relax4_1} are linear in $m$, then for any $m\in\mathcal{M}$ that satisfies \eqref{cond:Gamma} and  \eqref{eq:mccormick_bounds_m}-\eqref{eq:mccormick_relax4_1}, it must be the case that $\Lambda m$ also satisfies \eqref{cond:Gamma} and  \eqref{eq:mccormick_bounds_m}-\eqref{eq:mccormick_relax4_1}. Then, we have
\begin{align*}
\overline{\beta}^\star=&\sup_{m\in\mathcal{M}}\{\Gamma^\star(m):~\text{\eqref{cond:Gamma} for any $s\in\mathcal{S}$ and \eqref{eq:mccormick_bounds_m}-\eqref{eq:mccormick_relax4_1} satisfied by }m\}\\
\leq&\sup_{m\in\mathcal{M}}\{\Gamma^\star(\Lambda m):~\text{\eqref{cond:Gamma}  for any $s\in\mathcal{S}$ and \eqref{eq:mccormick_bounds_m}-\eqref{eq:mccormick_relax4_1} satisfied by }\Lambda m\}\\
\leq&\sup_{\tilde m\in \mathcal{M}_a}\{\Gamma^\star(\tilde m):~\text{\eqref{cond:Gamma}  for any $s\in\mathcal{S}$ and  \eqref{eq:mccormick_bounds_m}-\eqref{eq:mccormick_relax4_1} satisfied by }\tilde m\}\\
=&\overline{\beta}_a^\star,
\end{align*}
where the second line is because of $\Gamma^\star(\Lambda m)=\Gamma^\star(m)$, the third line is due that $\Lambda m\in\mathcal{M}_a:=\{m\in\mathcal{M}:~m=\mathbf{B}'\eta_2\mbox{ for constant splines }\mathbf{B}'\eta_2\}$ for any $m\in\mathcal{M}$, and last line follows from the definition of $\overline{\beta}_a^\star$. Similarly, we can show the same result for the infimum.
\end{proof}

\end{document}